\def\tpdp{0}
\title{Differentially Private Histograms \\ in the Shuffle Model from Fake Users}
\author{Albert Cheu\thanks{This work was done while the author was a PhD. student at Northeastern University. Email ac2305@georgetown.edu or cheu.a@northeastern.edu.}~ and Maxim Zhilyaev\thanks{Mindstrong Inc. Email maxim.zhilyaev@gmail.com}}
\begin{document}
\maketitle

\ifnum\tpdp=0
    \begin{abstract}
        There has been much recent work in the shuffle model of differential privacy, particularly for approximate $d$-bin histograms. While these protocols achieve low error, the number of messages sent by each user---the message complexity---has so far scaled with $d$ or the privacy parameters. The message complexity is an informative predictor of a shuffle protocol's resource consumption. We present a protocol whose message complexity is two when there are sufficiently many users. The protocol essentially pairs each row in the dataset with a fake row and performs a simple randomization on all rows. We show that the error introduced by the protocol is small, using rigorous analysis as well as experiments on real-world data. We also prove that corrupt users have a relatively low impact on our protocol's estimates.
    \end{abstract}
\fi

\section{Introduction}

\ifnum\tpdp=1
    Because devices are resource constrained, a keyboard app offers word corrections from a smaller pool than the entire vocabulary. To obtain a list of the most common words, user devices could participate in a differentially private computation that estimates word \emph{histograms}. The app developer can then approximate the top-$t$ words for some relatively small $t$. There exists an $(\eps, \delta)$-\emph{centrally private} algorithm that computes $d$-bin histograms from $n$ users up to maximum ($\ell_\infty$) error $O( \tfrac{1}{\eps n} \log \tfrac{1}{\delta} )$ \cite{BNS16}. Conversely, Bassily \& Smith show $(\eps,o(1/n))$-\emph{local privacy} incurs a maximum error of $\Omega(\tfrac{1}{\eps} \sqrt{ \tfrac{\log d}{n} } )$ \cite{BS15}.
\else
    Given that statistical computations often involve data sourced from human users, an analyst could execute differentially private algorithms in the central model (also called \emph{centrally private algorithms}). Originally defined by Dwork, McSherry, Nissim, and Smith \cite{DMNS06}, these algorithms provide quantifiable protection to data contributors at a small price in terms of accuracy. As an example, there exists an $(\eps,\delta)$-centrally private algorithm that computes $d$-bin histograms from $n$ users up to maximum ($\ell_\infty$) error $O( \tfrac{1}{\eps n} \log \tfrac{1}{\delta} )$ \cite{BNS16}.
    
    We focus on computing accurate histograms since they allow approximate top-$t$ selection, the set of $t$ data values that occur most frequently in a population. One application is smart-phone autocomplete. Because devices are resource constrained, a keyboard offers word corrections from a smaller pool than the entire vocabulary. To obtain a list of the most common words, user devices could participate in a differentially private computation that estimates word frequencies.
    
    Users contributing to a centrally private algorithm need to trust that the analyst correctly executes the algorithm and does not leak their data. To collect data from less trusting users, analysts can instead implement \emph{locally private protocols}: each user applies a differentially private algorithm on their data and sends a message containing the algorithm's output to the analyst. This weaker trust assumption comes at a price: there are lower bounds that show locally private protocols have significantly more error than the centrally private counterparts. Returning to the histogram example, Bassily \& Smith show $(\eps,o(1/n))$-local privacy incurs a maximum error of $\Omega(\tfrac{1}{\eps} \sqrt{ \tfrac{\log d}{n} } )$ \cite{BS15}.
\fi

Originating with work by Bittau et al. and Cheu et al. \cite{BEMMR+17,CSUZZ19}, \emph{shuffle privacy} has emerged as an appealing middle-ground. Here, we assume that there is a service called the shuffler that uniformly permutes user messages. The output of the shuffler must satisfy $(\eps,\delta)$-differential privacy. Intuitively, if each user generates a locally private message, then the anonymity provided by the shuffler ``amplifies'' the privacy guarantees.

But a user can send \emph{multiple} messages to the shuffler. And a users does not need to produce these messages in a differentially private manner, since we only require that the output of the shuffler is differentially private. This flexibility is leveraged by the histogram protocol of Balcer \& Cheu \cite{BC20}, where each user sends $d+1$ messages and the maximum error is $O(\tfrac{1}{\eps^2} \log \tfrac{1}{\delta})$ for $\delta = O(1/n)$. Alternative histogram protocols in the shuffle model have been introduced by Ghazi, Golowich, Kumar, Pagh, and Velingker \cite{GGKPV19} and by Ghazi, Kumar, Manurangsi, and Pagh \cite{GKMP20}. As shown in Table \ref{tab:comparison}, these protocols demand much fewer messages from each user than the protocol from \cite{BC20}.

We will use \emph{message complexity} to refer to the number of messages sent by each user and \emph{communication complexity} to refer to total number of bits consumed by those messages. The message complexity is necessary to have a complete picture of a protocol's resource consumption. For starters, the amount of randomness needed to perform the shuffle is a function of the message complexity but not the length of each message. Furthermore, two protocols $\cP,\cP'$ with the same communication complexity can incur different costs, since the physical delivery of a message over a network in a secure fashion requires overhead. If $\cP$ sends more messages than $\cP'$, the computing cost of transmitting messages is larger for $\cP$, since it needs to perform cryptographic operations on each message. Also, the bandwidth overhead is larger for $\cP$, due to both encryption and physical network protocols such as TCP/IP.


In light of the above, one can ask the following question:
\begin{quote}
\textit{Are there shuffle private protocols for histograms that have low message complexity but still provide estimates that are competitive with prior work?}
\end{quote}

Given the distributed nature of local and shuffle protocols, they are impacted by users who deviate from the intended behavior. In the local privacy literature, there is research on \emph{manipulation attacks} where corrupted users aim to skew estimates and tests by sending carefully crafted messages. One baseline attack is to simply feed wrong inputs into the protocol, but the prior work has shown that there are attacks against locally private protocols that introduce significantly worse error (see e.g. Cao, Jia, and Gong \cite{CJG19} and Cheu, Smith, and Ullman \cite{CSU19} and citations within). Here, we investigate manipulation against shuffle private protocols. Specifically,
\begin{quote}
\textit{Are there shuffle private protocols for histograms that are robust to manipulation?}
\end{quote}


\subsection{Our Contributions}
Our primary contribution is a shuffle private protocol for  histograms that answers both questions in the affirmative. For a large range of $n$, the communication complexity is the same as \cite{BC20} up to a logarithmic factor but the message complexity can be as small as two. For a natural use case and set of parameters, experiments also show that the new protocol is more accurate than \cite{BC20}. Finally, we show that one consequence of the low message complexity is robustness to manipulation by corrupt users.

\ifnum\tpdp=0 Section \ref{sec:protocol} \else The full version of this work \fi contains the full specification and analysis, but we give an overview of the main features in the theorem below.

\begin{thm}[Informal]
\label{thm:informal}
For any privacy parameters $\eps = O(1)$, $\delta < 1/100$, and number of users $n = \Omega(\log d + \tfrac{1}{\eps^2} \log \tfrac{1}{\delta})$, there is an $(\eps,\delta)$-differentially private shuffle protocol that approximates $d$-bin histograms with the following properties
\begin{enumerate}[i.]
    \item The message complexity is $k+1$, where $k$ can be set to any positive integer. Each message is $d$ bits.
    \item The maxmimum error of any bin estimate is $f(k) \cdot O(\tfrac{\log d}{n} + \tfrac{1}{\eps n} \sqrt{\log d \log \tfrac{1}{\delta}  })$ with probability $9/10$,\footnote{We use $9/10$ as a target success probability throughout this work, but it can be changed to any other constant without affecting the asymptotic analysis.} where $f(k)$ monotonically approaches 1 from above.
    \item $m$ corrupted users can skew an estimate by at most $\tfrac{m}{n} \cdot (k+1)\cdot f(k)$.
\end{enumerate}
\end{thm}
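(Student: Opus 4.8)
\emph{Protocol to analyze.} Each user $i$, holding $x_i\in[d]$, builds $k+1$ rows in $\{0,1\}^d$: one ``real'' row equal to the basis vector $e_{x_i}$ and $k$ ``fake'' rows equal to $\vec 0$. Every one of the $n(k+1)$ rows is then independently randomized by flipping each coordinate with probability $p\in(0,1/2)$, where $p$ is taken as small as privacy allows, roughly $p=\Theta\!\big(\tfrac1{\eps^2 n(k+1)}\log\tfrac1\delta\big)$ (capped below $1/2$). All rows go to the shuffler; for each bin $j$ the analyst counts the received rows with a $1$ in coordinate $j$, calling this $c_j$, and outputs the debiased frequency estimate $\hat h_j=\tfrac{c_j-n(k+1)p}{n(1-2p)}$. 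This immediately gives item (i): $k+1$ messages, each $d$ bits.

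\emph{Privacy.} Changing one $x_i$ alters the distribution of exactly one of the $n(k+1)$ rows --- that user's real row, from ``$e_a$ plus coordinate noise'' to ``$e_b$ plus coordinate noise'' --- while every other row, including that user's $k$ fakes, is distributed identically in the two neighbors. So the shuffler's output is the shuffle of $n(k+1)-1$ rows with a neighbor-independent joint law, plus one row drawn from $R(e_a)$ or $R(e_b)$ for the coordinate-flip randomizer $R$. I would invoke a privacy-amplification-by-shuffling bound: $R$ has blanket probability $\gamma=\Theta(p)$ (the laws $R(e_a),R(e_b)$ overlap in a common component of mass $2p$), and the shuffle of $N=n(k+1)$ such rows is $(\eps,\delta)$-differentially private once $\gamma N=\Omega(\eps^{-2}\log\tfrac1\delta)$ --- exactly the condition under which $p$ was chosen. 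Requiring $p$ below a constant, hence $n(k+1)=\Omega(\eps^{-2}\log\tfrac1\delta)$, together with the per-bin concentration needed below, yields the hypothesis $n=\Omega(\log d+\eps^{-2}\log\tfrac1\delta)$ for $k\ge1$.

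\emph{Accuracy and robustness.} Fix a bin $j$ with true count $h_j$. Then $c_j$ is a sum of $n(k+1)$ independent Bernoullis: the $h_j$ real rows of users in bin $j$ are $1$ with probability $1-p$ and every other row is $1$ with probability $p$, so $\mathbb{E}[c_j]=h_j(1-2p)+n(k+1)p$ (hence $\hat h_j$ is unbiased) and $\mathrm{Var}(c_j)=n(k+1)p(1-p)=O(\eps^{-2}\log\tfrac1\delta)$. Bernstein's inequality and a union bound over the $d$ bins give $\max_j|c_j-\mathbb{E}c_j|=O\!\big(\eps^{-1}\sqrt{\log d\log\tfrac1\delta}+\log d\big)$ with probability $9/10$; dividing by $n(1-2p)$ gives maximum error $\tfrac1{1-2p}\cdot O\!\big(\tfrac{\log d}{n}+\tfrac1{\eps n}\sqrt{\log d\log\tfrac1\delta}\big)$, and since $p$ shrinks as $n(k+1)$ grows, $f(k):=\tfrac1{1-2p}$ decreases monotonically to $1$, giving item (ii). For item (iii), a corrupted user controls only its own $k+1$ messages, so $m$ of them can shift any $c_j$ by at most $m(k+1)$ relative to an honest run (at most $k+1$ per user, by forcing all their $j$-coordinates to $1$ or all to $0$); as $\hat h_j$ is affine in $c_j$ with slope $1/(n(1-2p))$, the estimate shifts by at most $\tfrac{m(k+1)}{n(1-2p)}=\tfrac mn(k+1)f(k)$.

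\emph{Main obstacle.} The hard step is privacy: making the amplification bound go through despite the asymmetry between real and fake rows, and --- above all --- obtaining $(\eps,\delta)$-differential privacy for the \emph{entire} $d$-bin histogram, since one neighboring change perturbs two coordinates at once across all $n(k+1)$ shuffled rows rather than a single count in isolation. Nailing the exact relation between $p$, $\gamma N$, $\eps$, $\delta$, and $d$ --- and hence the precise threshold on $n$ and the exact form of $f$ --- is where the real effort goes; accuracy and robustness then follow from the routine Bernstein and affine-sensitivity arguments above.
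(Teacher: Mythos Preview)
Your protocol description, accuracy analysis, robustness bound, and identification of $f(k)=1/(1-2p)$ all match the paper. The gap is in the privacy step, and it is quantitative rather than structural.

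You propose to obtain privacy by invoking a generic amplification-by-shuffling lemma with blanket probability $\gamma=\Theta(p)$ and the condition ``$\gamma N=\Omega(\eps^{-2}\log\tfrac1\delta)$ suffices.'' No standard amplification lemma gives that condition. The known bounds (Balle--Bell--Gasc\'on--Nissim, Feldman--McMillan--Talwar) amplify an $\eps_L$-LDP randomizer to $\eps_S\approx (e^{\eps_L}-1)\sqrt{e^{\eps_L}\log(1/\delta)/N}$; here $\eps_L=2\log\tfrac{1-p}{p}$, so solving for the target $\eps$ forces $p=\Theta\!\big(\tfrac1\eps\sqrt{\log(1/\delta)/N}\big)$, i.e.\ $p\sim N^{-1/2}$ rather than the $p\sim N^{-1}$ you assume. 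Plugging that $p$ into your Bernstein bound yields maximum error $\Theta\!\big(n^{-3/4}\eps^{-1/2}(\log\tfrac1\delta)^{1/4}\sqrt{\log d}\big)$, which is exactly the weaker rate the paper obtains in its appendix for the $k=0$ variant \emph{via} amplification, and strictly worse than the $O(1/n)$ rate claimed in item~(ii).

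The paper gets the stronger $p\sim 1/(nk)$ not by amplification but by a direct noise argument that crucially exploits the fake rows. First it reduces to $d=2$: since $e_a$ and $e_b$ differ in exactly two coordinates and the bit-flips are independent across coordinates, privacy of the full $d$-bit shuffle follows from privacy of the two-bit marginal (this is the resolution of the ``two coordinates at once'' obstacle you flagged but did not address). Then it observes that the $nk$ fake two-bit messages are i.i.d.\ from $\cR_{2,q}(00)$, so the histogram of the four outcomes $00,01,10,11$ is multinomial; changing the single real message from $R(e_a)$ to $R(e_b)$ shifts this four-cell histogram by sensitivity $1$ in the $(01,10)$ pair. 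The multinomial noise on those two cells has variance $\Theta(nk\cdot q(1-q))$, and a direct ratio calculation (analogous to binomial-noise DP) shows $(\eps,\delta)$-privacy once $nk\cdot q(1-q)\gtrsim\eps^{-2}\log\tfrac1\delta$. That is what produces $p\sim 1/(nk)$ and hence the $O(1/n)$ error and the precise threshold $n=\Omega(\log d+\eps^{-2}\log\tfrac1\delta)$ for $k\ge1$. Your accuracy and robustness arguments then go through verbatim with this $p$.
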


We unpack this theorem. Parts \textit{i} and \textit{ii} show that the protocol allows for a tradeoff between message complexity and the measurement accuracy, since increasing $k$ reduces the scaling factor $f(k)$. This may not be significant for large $n$, but it could be useful for smaller $n$ (e.g. the target population of a health survey can consist of much fewer subjects than the dictionary-building example). Re-scaling $k$ by a factor of $c$ will naturally increase the transmission cost by $c$ but the traffic remains feasible since $n$ is small. Thus, we can improve accuracy without altering the privacy guarantee.

Meanwhile, Part \textit{iii} bounds the impact of any manipulation attack. Each corrupt user in our protocol can introduce bias $O(\tfrac{1}{n})$ whenever $k\cdot f(k)=O(1)$. For comparison, we also prove that a protocol by Ghazi et al. \cite{GGKPV19} suffers bias $\Omega(\tfrac{1}{n} \cdot \tfrac{1}{\eps^2} \log\tfrac{1}{\delta})$ per corrupt user.

\medskip

Our other results build upon this protocol. \ifnum\tpdp=0 In Section \ref{sec:reduce-comm}, \else First, \fi we describe how to exponentially reduce the protocol's communication complexity. The price is an increased message complexity and a mildly increased error. \ifnum\tpdp=0 In Section \ref{sec:experiments}, \else Then, \fi we simulate our protocol on text sampled from Twitter. The error introduced by our protocol to the histogram is consistent with our theoretical bounds. We also show that the top-$t$ items in the output of the protocol are consistent with those in the raw dataset, for several choices of $t$. The experimental results of our protocol compare favorably to that of \cite{BC20}.

\ifnum\tpdp=0 Appendix \ref{apdx:amplification} presents \else Finally, the appendix of our full paper presents \fi an analysis of our main protocol in the special case where $k=0$. This is done by enhancing work by Ghazi et al. \cite{GGKPV19} with the state-of-the art amplification lemma by Feldman, McMillan, and Talwar \cite{FMT20}. The protocol's maximum error is now proportional to $1/n^{3/4}$ instead of $1/n$.

\ifnum\tpdp=0
    \paragraph{Techniques}
    Each user in our main protocol first encodes their data as a binary string with a single 1 bit. They then flip each bit independently with some fixed probability $q$. Next, they create $k$ other zero vectors and repeat this bit flipping, which corresponds to introducing $k$ fake users with null data. We show how to choose $q$ so that the $nk$ messages from these fake users provide differential privacy for the actual users. The privacy amplification lemma from \cite{FMT20} lets us analyze the case where $k=0$. The analyzer simply de-biases and adjusts the scale of the sums over messages.
    
    Our technique to reduce communication complexity proceeds in two stages. We first make the simple observation that a binary string with known length is equivalent to a list of the indices where the string has value 1. By construction, a message generated by our local randomizer is a binary string where the number of such indices has expectation $O(d q)$. Our choice of $q$ is proportional to $1/n$, so this alternative representation is very effective when $n$ approaches or exceeds $d$.
    
    The small $n$ regime motivates a second round of compression. We describe an adaptation of the count-min sketching technique. Given a uniformly random hash function, we can reduce the size of the domain $d$ to some $\hat{d}$ at the cost of some collisions. We repeatedly hash in order to reduce the likelihood of error due to collisions and run our histogram protocol on the hashed data. We remark that Ghazi et al. \cite{GGKPV19} build a specific histogram protocol out of count-min, while we use it as a tool that can improve the communication complexity of \emph{arbitrary} histogram protocols.
\fi

\subsection{Related Work}
Cheu, Smith, Ullman, Zeber, and Zhilyaev \cite{CSUZZ19} rigorously define the shuffle model and give a histogram protocol that requires $d$ messages per user. Balcer \& Cheu \cite{BC20} give a different protocol with the same message complexity (up to constants) but with maximum error independent of $d$. Because the tradeoff between error and message complexity in \cite{BC20} dominates that of \cite{CSUZZ19}, we omit the latter from Table \ref{tab:comparison}.

Ghazi et al. \cite{GGKPV19} propose multi-message shuffle protocols for histograms. These adapt the Hadamard response and Count-Min techniques from the local privacy and sketching literature. \cite{GGKPV19} also presents a single-message shuffle protocol, using the amplification lemma from Balle, Bell, Gasc{\'{o}}n, and Nissim \cite{BBGN19}. Unlike \ifnum\tpdp=0 Theorem \ref{thm:amplification-variant}\else our work\fi, their result does not give explicit constants and holds for a narrower range of $\eps,\delta$.

In follow-up work Ghazi et al. \cite{GKMP20} give a protocol where the message complexity shrinks as $n$ increases. Our protocol has the same property but at a faster rate. Specifically, our message complexity is two when $n$ is logarithmic in $d$ while the prior work requires $n$ to be linear in $d$.

\begin{table}
    \centering
    \begin{tabular}{cccc}
         Source & Bits per message & Messages per user & Max Error \\
         & & & {\footnotesize (90\% Confidence)} \\ \hline
         \rule{0ex}{3.5ex}\vspace{1ex} \cite{BC20} &  $O(\log d)$ & $d+1$ & $O(\tfrac{1}{\eps^2 n} \log \tfrac{1}{\delta}  )$ \\ \hline
         \rule{0ex}{3.5ex}\vspace{1ex}  & $O(\log d)$ & $O(d^{1/100})$ & $O\left(\frac{1}{\eps n}\sqrt{\log d \log \frac{1}{\delta}} \right)$ \\
         \rule{0ex}{3.5ex}\vspace{1ex} \cite{GGKPV19} & $O(\log n + \log \log d)$ &$O\left(\frac{\log^3 d}{\eps^2} \log \frac{\log d}{\delta} \right)$ & $O\left( \frac{\log^{3/2} d}{\eps n} \sqrt{ \log \frac{\log d}{\delta} } \right)$\\
         \rule{0ex}{3.5ex}\vspace{1ex} & $O(\log n \log d)$ & $O\left( \frac{1}{\eps^2} \log \frac{1}{\eps \delta}\right) $ & $O\left( \frac{\log d}{n} + \frac{1}{\eps n}\sqrt{\log d \log \frac{1}{\eps \delta}} \right)$ \\ \hline
         \rule{0ex}{3.5ex}\vspace{1ex} \cite{GKMP20} & $O(\log d)$ & $1+O(\tfrac{d}{n} \cdot u(\eps,\delta))$~$*$ & $O(\tfrac{1}{\eps n} \log d)$ \\ \hline
         \rule{0ex}{3.5ex}\vspace{1ex} {\footnotesize \ifnum\tpdp=0 Thm \ref{thm:maximum}\else This work\fi}  & $d$ & \multirow{2}{*}{2} & \multirow{2}{*}{$O\left( \frac{\log d}{n} + \frac{1}{\eps n}\sqrt{\log d \log \frac{1}{\delta}} \right)$} \\ \cline{1-2}
         \rule{0ex}{3.1ex}\vspace{1ex} {\footnotesize \ifnum\tpdp=0 Thm \ref{thm:replacement} \else This work\fi} & $O(\log d (1+ \tfrac{d}{n} \cdot v(\eps,\delta,d) ) )$ $*$ & & \\ \hline
         \rule{0ex}{3.5ex}\vspace{1ex} {\footnotesize \ifnum\tpdp=0 Thm \ref{thm:count-min} \else This work\fi}  & $O\left( \tfrac{1}{\eps^2 }  \log d \log^3\tfrac{\log d}{\delta} \right)$ $*$ & $O(\log d )$ & $O\left( \frac{1}{\eps n} \sqrt{\log d } \log^{3/2} \left( \frac{\log d}{\delta} \right) \right)$ \\ \hline
         \rule{0ex}{3.5ex}\vspace{1ex} {\footnotesize \ifnum\tpdp=0 Thm \ref{thm:amplification-variant}\else This work\fi, }  & \multirow{2}{*}{$d$} & \multirow{2}{*}{1} & \multirow{2}{*}{$O\left( \frac{\log d}{n} + \frac{\sqrt{\log d }}{\sqrt{\eps} n^{\tfrac{3}{4}}} \left(\log \frac{1}{\delta} \right)^{\tfrac{1}{4}} \right)$} \\
         {\footnotesize via \cite{GGKPV19, FMT20}} & & &\\ 
    \end{tabular}
    \caption{Summary of shuffle protocols for histograms. To simplify presentation, we assume $\eps = O(1)$, $\delta= O(1/n)$, and $n = \Omega (\tfrac{\log d}{\eps^2} \log \tfrac{1}{\delta} \log \tfrac{\log d}{\delta} )$. We also use   $u(\eps,\delta)$ as shorthand for $\tfrac{\log^2(1/\delta)}{\eps^2}$ and $v(\eps,\delta,d)$ for $\log d + \tfrac{\log(1/\delta)}{\eps^2}$. $*$ indicates bounds on expected values.}
    \label{tab:comparison}
\end{table}

Manipulation attacks have previously been studied in the context of local privacy. Ambainis, Jakobsson, and Lipmaa \cite{AJL04} as well as Moran and Naor \cite{MN06} study the vulnerability of randomized response to these attacks. Work by Cao, Jia, and Gong \cite{CJG19} also consider attacks against histogram and heavy hitter protocols. Cheu, Smith, and Ullman \cite{CSU19} show that powerful attacks are inevitable for any locally private protocol. In particular, these attacks are stronger when the privacy guarantee is stronger or the data dimension is larger.

\ifnum\tpdp=0
    
    \section{Preliminaries}
\subsection{Differential Privacy}
We define a dataset $\vec{x} \in \cX^n$ to be an ordered tuple of $n$ rows where each row is drawn from a data universe $\cX$ and corresponds to the data of one user. Two datasets $\vec{x},\vec{x}\,' \in \cX^n$ are considered \emph{neighbors} (denoted as $\vec{x} \sim \vec{x}\,'$) if they differ in at most one row.

\begin{definition}[Differential Privacy \cite{DMNS06}]
An algorithm $\cM: \cX^n \rightarrow \cZ$ satisfies \emph{$(\eps, \delta)$-differential privacy} if, for every pair of neighboring datasets $\vec{x}$ and $\vec{x}\,'$ and every subset $Z \subset \cZ$,
\begin{equation}
\label{eq:dp}
	\pr{}{\cM(\vec{x}\vphantom{'}) \in Z} \le e^\eps \cdot \pr{}{\cM(\vec{x}\,') \in Z} + \delta.    
\end{equation}
\end{definition}

We remark that an algorithm can be well-defined for a superset of the intended data universe $\overline{\cX} \supset \cX$ but \eqref{eq:dp} may not hold for every $\vec{x}\sim \vec{x}\,' \in \overline{\cX}$; in these cases, we will disambiguate by saying it satisfies \emph{differential privacy for inputs from $\cX$}.

Because this definition assumes that the algorithm $\cM$ has ``central'' access to compute on the entire raw dataset, we sometimes call this \emph{central} differential privacy. Two properties about differentially private algorithms will be useful. First, privacy is preserved under post-processing. 

\begin{fact}
\label{fact:post}
	For $(\eps,\delta)$-differentially private algorithm $\cM : \cX^n \to \cZ$ and randomized algorithm $f : \cZ \to \cZ'$, $f \circ \cM$ is $(\eps,\delta)$-differentially private.
\end{fact}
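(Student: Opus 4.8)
The plan is to reduce to the deterministic case by conditioning on the internal randomness of $f$. First I would fix an arbitrary pair of neighbors $\vec{x} \sim \vec{x}\,'$ and an arbitrary (measurable) event $Z' \subseteq \cZ'$; the goal is to bound $\pr{}{f(\cM(\vec{x})) \in Z'}$ in terms of $\pr{}{f(\cM(\vec{x}\,')) \in Z'}$. I would model the randomized map $f$ as a family $\{f_r\}_{r \in R}$ of \emph{deterministic} functions $f_r : \cZ \to \cZ'$, where $r$ is drawn from a distribution independent of $\cM$'s output. This is just the standard "randomness as auxiliary input" decomposition of a randomized algorithm.

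Next I would handle a single fixed $r$. For deterministic $f_r$, the preimage $Z_r := f_r^{-1}(Z') \subseteq \cZ$ is a fixed event, so
\[
  \pr{}{f_r(\cM(\vec{x})) \in Z'} = \pr{}{\cM(\vec{x}) \in Z_r} \le e^\eps \cdot \pr{}{\cM(\vec{x}\,') \in Z_r} + \delta = e^\eps \cdot \pr{}{f_r(\cM(\vec{x}\,')) \in Z'} + \delta,
\]
where the inequality is exactly the $(\eps,\delta)$-differential privacy of $\cM$ applied to the event $Z_r$ (using the remark that privacy holds for all measurable subsets of $\cZ$).

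Finally I would average over $r$. Since $r$ is independent of the coins of $\cM$, by the law of total probability $\pr{}{f(\cM(\vec{x})) \in Z'} = \mathbb{E}_r\!\left[ \pr{}{f_r(\cM(\vec{x})) \in Z'} \right]$, and likewise for $\vec{x}\,'$. Taking expectations of both sides of the displayed inequality and using linearity and monotonicity of expectation gives $\pr{}{f(\cM(\vec{x})) \in Z'} \le e^\eps \cdot \pr{}{f(\cM(\vec{x}\,')) \in Z'} + \delta$. Since $\vec{x} \sim \vec{x}\,'$ and $Z'$ were arbitrary, $f \circ \cM$ is $(\eps,\delta)$-differentially private.

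I do not anticipate a real obstacle here — the only point requiring a little care is the measurability bookkeeping when $\cZ'$ is not discrete (ensuring $Z_r$ is measurable and that $r \mapsto \pr{}{f_r(\cM(\vec{x})) \in Z'}$ is integrable), which is routine and is typically handled by assuming $f$ is a Markov kernel; I would state this assumption rather than belabor it.
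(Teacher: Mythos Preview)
Your argument is correct and is exactly the standard proof of closure under post-processing. The paper does not give its own proof of this fact but simply refers to Proposition~2.1 of Dwork and Roth~\cite{DR14}, whose argument is essentially the one you wrote (condition on $f$'s randomness, pull back the event through the deterministic map, apply the DP inequality, and average).
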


This means that any computation based solely on the output of a differentially private function does not affect the privacy guarantee. Refer to Prop. 2.1 in the text by Dwork and Roth \cite{DR14} for a proof. The second property is closure under composition.

\begin{fact}
\label{fact:comp}
	For $(\eps_1, \delta_1)$-differentially private $\cM_1$ and $(\eps_2, \delta_2)$-differentially private $\cM_2$, $\cM_3$ defined by $\cM_3(\vec{x}) = (\cM_1(\vec{x}), \cM_2(\vec{x}))$ is $(\eps_1 + \eps_2, \delta_1 + \delta_2)$-differentially private.
\end{fact}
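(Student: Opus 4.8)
This is the basic (sequential) composition theorem, so I would argue directly on the output distributions. Write $\cZ_1,\cZ_2$ for the ranges of $\cM_1,\cM_2$, fix neighbors $\vec{x}\sim\vec{x}\,'$ and a measurable event $Z\subseteq\cZ_1\times\cZ_2$; the goal is $\pr{}{\cM_3(\vec{x})\in Z}\le e^{\eps_1+\eps_2}\cdot\pr{}{\cM_3(\vec{x}\,')\in Z}+\delta_1+\delta_2$. Since $\cM_1$ and $\cM_2$ draw their internal randomness independently, the law of $\cM_3(\vec{x})$ is the product of the laws of $\cM_1(\vec{x})$ and $\cM_2(\vec{x})$, and likewise for $\vec{x}\,'$. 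Passing to a common dominating measure on each coordinate (e.g.\ the sum of the two relevant laws, or counting measure if the ranges are discrete, which may be assumed without loss of generality), let $p_i,q_i$ denote the densities of $\cM_i(\vec{x})$ and $\cM_i(\vec{x}\,')$.

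The device I would use is a two-piece split of each $p_i$: write $p_i = a_i + b_i$ with $a_i:=\min(p_i,e^{\eps_i}q_i)$ and $b_i:=(p_i-e^{\eps_i}q_i)_+$. Then $a_i\le e^{\eps_i}q_i$ pointwise, $\int a_i = 1-\int b_i\le 1$, and $\int b_i\le\delta_i$ --- the last because applying the $(\eps_i,\delta_i)$-DP inequality of $\cM_i$ to the particular event $E_i:=\{z:p_i(z)>e^{\eps_i}q_i(z)\}$ gives $\int b_i = \pr{}{\cM_i(\vec{x})\in E_i}-e^{\eps_i}\pr{}{\cM_i(\vec{x}\,')\in E_i}\le\delta_i$.

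Now expand $p_1(z_1)p_2(z_2)=a_1a_2+a_1b_2+b_1a_2+b_1b_2$ and integrate over $Z$. The $a_1a_2$ term is at most $\int_Z e^{\eps_1}q_1\cdot e^{\eps_2}q_2 = e^{\eps_1+\eps_2}\pr{}{\cM_3(\vec{x}\,')\in Z}$. For each of the three cross terms I drop the restriction to $Z$ (the integrands are nonnegative) and factor the resulting product integral; writing $\beta_i:=\int b_i\le\delta_i$, their sum is at most $(1-\beta_1)\beta_2+\beta_1(1-\beta_2)+\beta_1\beta_2 = 1-(1-\beta_1)(1-\beta_2)\le\delta_1+\delta_2$. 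Adding the two estimates proves the claim (in fact the marginally sharper $(\eps_1+\eps_2,\,\delta_1+\delta_2-\delta_1\delta_2)$ version). An alternative is to condition on the output $z_1$ of $\cM_1(\vec{x})$, apply $\cM_2$'s guarantee slice-by-slice to $Z_{z_1}:=\{z_2:(z_1,z_2)\in Z\}$, and then replace $\cM_1(\vec{x})$ by $\cM_1(\vec{x}\,')$ using that a $[0,1]$-valued randomized post-processing preserves differential privacy, which one proves by a layer-cake argument over the level sets of $z_1\mapsto\pr{}{\cM_2(\vec{x}\,')\in Z_{z_1}}$; this route loses a spurious $e^{\eps_2}$ factor in the $\delta$ term, which is why I would present the density split.

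The argument has no genuinely hard step: the only points needing care are the measure-theoretic justification that the ``product of densities'' picture is legitimate (handled by the dominating-measure reduction, or by taking the ranges discrete) and the reformulation of $(\eps,\delta)$-DP as $\int(p-e^{\eps}q)_+\le\delta$. Everything after that is elementary term-by-term bookkeeping; and since the statement is quoted as standard, one could alternatively just cite the proof in Dwork and Roth \cite{DR14}.
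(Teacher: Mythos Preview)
Your argument is correct; the density split $p_i=a_i+b_i$ with $a_i=\min(p_i,e^{\eps_i}q_i)$ is a clean way to get basic composition, and the bookkeeping $(1-\beta_1)\beta_2+\beta_1(1-\beta_2)+\beta_1\beta_2=\beta_1+\beta_2-\beta_1\beta_2\le\delta_1+\delta_2$ is right.

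There is not much to compare: the paper does not prove this fact at all but simply points the reader to Theorems 3.14 and 3.20 in Dwork and Roth \cite{DR14}. So your write-up is strictly more self-contained than what the paper does. If your goal is to match the paper's treatment, a one-line citation suffices; if you want to include a proof, what you wrote is fine (and you already note the citation option at the end).
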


\begin{fact}
\label{fact:adv_comp}
	For $(\eps, \delta)$-differentially private algorithms $\cM_1,\dots,M_d$, the algorithm $\hat{\cM}_3$ defined by $\hat{\cM}_3(\vec{x}) = (\cM_1(\vec{x}), \dots, \cM_d(\vec{x}))$ is $\left( \eps (e^{\eps}-1)\cdot d + \eps \cdot \sqrt{2d \log \tfrac{1}{d\delta}}, 2d\delta \right)$-differentially private.
\end{fact}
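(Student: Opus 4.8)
The plan is to prove this by the \emph{advanced composition} argument of Dwork, Rothblum, and Vadhan (see also Theorem 3.20 of \cite{DR14}): we track the \emph{privacy loss random variable} across the $d$ mechanisms, argue that its partial sums form a bounded-difference martingale, and apply a concentration inequality. Fix neighbors $\vec{x} \sim \vec{x}\,'$, and for an outcome tuple $z = (z_1,\dots,z_d)$ define the partial privacy losses
\[
  L_i(z_i) \;=\; \ln \frac{\pr{}{\cM_i(\vec{x}) = z_i}}{\pr{}{\cM_i(\vec{x}\,') = z_i}}, \qquad L \;=\; \sum_{i=1}^d L_i .
\]
By the standard equivalence between ``bounded privacy loss except with small probability'' and approximate differential privacy, it suffices to show that, for $z \sim \hat{\cM}_3(\vec{x})$, we have $L \le \eps'$ except with probability at most $\delta_{\mathrm{out}}$, where $\eps' = d\eps(e^\eps-1) + \eps\sqrt{2d\log\frac{1}{d\delta}}$ and $\delta_{\mathrm{out}} = 2d\delta$; this then yields $(\eps',\delta_{\mathrm{out}})$-differential privacy.

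First I would treat the pure case $\delta = 0$. Here $|L_i| \le \eps$ pointwise, and a direct calculation (equivalently, the standard bound on the KL divergence between the output distributions of an $\eps$-DP mechanism on neighboring inputs) gives $\mathbb{E}[L_i \mid z_1,\dots,z_{i-1}] \le \eps(e^\eps-1)$. Thus the partial sums of $L_i - \mathbb{E}[L_i \mid z_{<i}]$ form a martingale whose increments are bounded in absolute value by $O(\eps)$, so a one-sided Azuma/Hoeffding-type bound (applied with the usual refinement that separates the $\le d\eps(e^\eps-1)$ contribution of the conditional means) yields
\[
  \pr{}{L > d\eps(e^\eps-1) + t} \;\le\; \exp\!\left(-\frac{t^2}{2d\eps^2}\right) .
\]
Choosing $t = \eps\sqrt{2d\log\frac{1}{\delta'}}$ makes the right-hand side at most $\delta'$. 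Taking $\delta' = d\delta$ recovers the claimed $\eps'$, and the standard ``tail-to-DP'' conversion then gives $(\eps',\delta')$-differential privacy.

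To remove the assumption $\delta = 0$, I would invoke the decomposition/coupling lemma for approximate differential privacy: for each $i$ and for the fixed neighboring pair, an $(\eps,\delta)$-DP mechanism $\cM_i$ can be replaced by a mechanism $\cM_i'$ obeying the \emph{pointwise} loss bound $|L_i'| \le \eps$, in such a way that $\cM_i$ and $\cM_i'$ coincide except on an event of probability at most $\delta$ (this is the content of working with $\delta$-approximate max-divergence, cf.\ \cite{DR14}). Substituting all $d$ mechanisms costs an extra $d\delta$ in the failure probability on top of the $\delta' = d\delta$ from the tail bound, for a total of $2d\delta = \delta_{\mathrm{out}}$; applying the pure-case analysis to the $\cM_i'$ then closes the argument.

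The main obstacle is the $\delta > 0$ reduction: the coupling/decomposition lemma is the technically delicate ingredient, and one must check that the two sources of slack---$d\delta$ from the coupling on each side and $d\delta$ from Azuma---combine to exactly $2d\delta$, while $\eps'$ comes out precisely as stated, i.e.\ with the constant-free $d\eps(e^\eps-1)$ term and the $\eps\sqrt{2d\log(1/(d\delta))}$ term. Once $|L_i| \le \eps$ and $\mathbb{E}[L_i \mid z_{<i}] \le \eps(e^\eps-1)$ are established, the concentration step and the tail-to-DP conversion are routine.
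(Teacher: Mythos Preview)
Your proposal is correct and is precisely the standard advanced-composition argument of Dwork--Rothblum--Vadhan as presented in Theorem~3.20 of \cite{DR14}, which is exactly what the paper cites in lieu of its own proof; there is nothing to compare. One minor expositional wrinkle: in your ``pure case $\delta=0$'' paragraph you set $\delta'=d\delta$, which is vacuous there---the intended reading is that the pure-case tail bound holds for any $\delta'>0$, and the particular choice $\delta'=d\delta$ is made only after the $\delta>0$ coupling reduction.
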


Refer to Theorems 3.14 and 3.20 in \cite{DR14} for proofs.

\subsection{Local Model}
In an extreme case, no user trusts any other party with protecting their data; here, we model the dataset as a distributed object where each of $n$ users holds a single row. Each user $i$ provides their data point as input to a randomizing function $\cR$ and publishes the outputs for some analyzer to compute on.

\begin{definition}[Local Model \cite{Warner65,EGS03}]
A protocol $\cP$ in the \emph{local model} consists of two randomized algorithms:
\begin{itemize}
    \item A randomizer $\cR:\cX\to\cY$ mapping data to a message.
    \item An analyzer $\cA:\cY^n\to\cZ$ that computes on a vector of messages.
\end{itemize}
We define its execution on input $\vec{x}\in\cX^n$ as $$\cP(\vec{x}) := \cA(\cR(x_1),\dots,\cR(x_n)).$$ We assume that $\cR$ and $\cA$ have access to an arbitrary amount of public randomness.
\end{definition}

\begin{definition}[Local Differential Privacy \cite{DMNS06,KLNRS08}]
A local protocol $\cP=(\cR,\cA)$ is $(\eps,\delta)$-\emph{differentially private} if $\cR$ is $(\eps,\delta)$-differentially private. The privacy guarantee is over the internal randomness of the users' randomizers and not the public randomness of the protocol.
\end{definition}

For brevity, we typically call these protocols ``locally private.''

\subsection{Shuffle Model}
We focus on differentially private protocols in the shuffle model, which we define below.

\begin{definition}[Shuffle Model \cite{BEMMR+17, CSUZZ19}]
A protocol $\cP$ in the \emph{shuffle model} consists of three randomized algorithms:
\begin{itemize}
\item
    A \emph{randomizer} $\cR: \cX \rightarrow \cY^*$ mapping a datum to a vector of messages.
\item
    A \emph{shuffler} $\cS: \cY^* \rightarrow \cY^*$ that applies a uniformly random permutation to the messages in its input. 
\item
    An \emph{analyzer} $\cA: \cY^* \rightarrow \cZ$ that computes on a permutation of messages.
\end{itemize}
As $\cS$ is the same in every protocol, we identify each shuffle protocol by $\cP = (\cR, \cA)$. We define its execution by $n$ users on input $\vec{x}\in\cX^n$ as
$$
\cP(\vec{x}) := \cA(\cS(R(x_1), \ldots, R(x_n))).
$$
Importantly, we allow $\cR$ and $\cA$ to have parameters that depend on $n$. 
\end{definition}

The following is a definition of differential privacy in this model.
\begin{definition} [Shuffle Differential Privacy \cite{CSUZZ19}]
\label{def:shuffle_dp}
	A protocol $\cP = (\cR, \cA)$ is \emph{$(\eps, \delta)$-shuffle differentially private for $n$ users} if the algorithm $(\cS \circ \cR^n)(\vec{x}) := \cS(\cR(x_1), \ldots, \cR(x_n))$ is $(\eps, \delta)$-differentially private. The privacy guarantee is over the internal randomness of the users' randomizers and not the public randomness of the shuffle protocol.
\end{definition}

For brevity, we typically call these protocols ``shuffle private.''

\subsection{Notation for Histogram and Top-$t$ Selection Problems}
We assume each user $i$ has some private value belonging to the finite set $[d]$ but encodes them as ``one-hot'' binary strings. That is, for any $j\in[d]$, let $e_{j,d}$ be the binary string of length $d$ with zeroes in all entries except for coordinate $j$; user $i$ has data $x_i = e_{j,d}$ for some $j$. Let $\cX_d$ denote the set $\{e_{1,d}, \dots, e_{d,d}\}$ and let $0^d$ denote the binary string of all zeroes.

For any $j\in[d]$, let $\hist_j(\vec{x})$ be the function that takes the vector of one-hot values $\vec{x} \in \{e_{1,d},\dots,e_{d,d}\}^n$ and reports $\tfrac1n\sum_{i=1}^n x_{i,j}$, which is the frequency of $e_{j,d}$ in $\vec{x}$. Let $\hist(\vec{x})$ be shorthand for the vector $(\hist_1(\vec{x}), \dots, \hist_d(\vec{x}))$.

We will use $\ell_\infty$ error to quantify how well a vector $\vec{z}\in\R^d$ estimates the histogram $\hist(\vec{x})$. Specifically, $\norm{\vec{z}-\hist(\vec{x})}_\infty := \max_j |z_j - \hist_j(\vec{x})|$.

\medskip

Having defined histograms, we move on to defining the top-$t$ items. For any vector $\vec{h}\in \R^d$ and value $j\in[d]$, let $\rank_j(\vec{h})$ be the relative magnitude of $h_j$: the index of $h_j$ after sorting $\vec{h}$ in descending order. For any $t\in[d]$, let $\mathrm{top}_t(\vec{h})$ denote the set of $j$ such that $\rank_j(\vec{h}) \leq t$.

We now establish notation to quantify how well a set approximates the top-$t$ items. Let $\hist_{[t]}(\vec{x})$ denote the frequency of the $t$-th largest item: the quantity $\hist_j(\vec{x})$ where $\rank_j(\hist(\vec{x})) = t$.
.
\begin{defn}
\label{def:heavy-hitter}
For any $\vec{x}\in \cX^n_d$, a set of candidates $C \subset [d]$ $\alpha$-approximates the top-$t$ items in $\vec{x}$ if $|C|=t$ and $\hist_j(\vec{x}) > \hist_{[t]}(\vec{x}) - \alpha$ for all $j\in C$.
\end{defn}

Other metrics include precision $p$ (the fraction of items in candidate set $C$ that are actually in the top $t$) and recall $r$ (the fraction of items in the top $t$ that are in $C$). Note that when $|C|=t$, $p=r$ so that the F1 score---the quantity $2\cdot \tfrac{p\cdot r}{p+r}$---is exactly $p=r$.
    
    \section{Our Histogram Protocol}
\label{sec:protocol}

A user who executes our protocol's local randomizer $\cR_\flip$ (Algorithm \ref{alg:r-flip}) reports $k+1$ messages. They make their first message by running $\cR_{d,q}$ (Algorithm \ref{alg:rr-onehot}) on their one-hot string. An instance of randomized response, $\cR_{d,q}$ flips each bit of with probability $q$. The user makes the $k$ other messages by running $\cR_{d,q}$ $k$ times on the string $0^d$, with fresh randomness in each execution. This effectively inserts $k$ fake users into the protocol. We will show that the messages from these fake users are sufficiently noisy for differential privacy.

Stacking the $nk+n$ messages results in a $(nk+n) \times d$ binary matrix; to estimate the frequency of $j$, our analyzer $\cA_\flip$ (Algorithm \ref{alg:a-flip}) simply de-biases and re-scales the sum of the $j$-th column.

\begin{algorithm}
\caption{$\cR_\flip$, a randomizer for histograms}
\label{alg:r-flip}

\KwIn{$x \in \cX_d$; implicit parameters $d,k,q$}
\KwOut{$\vec{y} \in (\zo^{d})^{k+1}$}

Initialize $\vec{y}$ as an empty message vector.

Append message generated by $\cR_{d,q}(x)$ to $\vec{y}$

\For{$j \in [k]$}{
    Append message generated by $\cR_{d,q}(0^d)$ to $\vec{y}$
}

\Return{$\vec{y}$}
\end{algorithm}

\begin{algorithm}
\caption{$\cR_{d,q}$, applies randomized response to a binary string}
\label{alg:rr-onehot}

\KwIn{$x \in \zo^d$}
\KwOut{$y \in \zo^d$}
\For{$j \in [d]$}{
    $\mathit{flip}_j \sim \Ber(q)$
    
    \If{$\mathit{flip}_j=1$}{
        $y_j \gets 1-x_j$
    }
    \Else{
        $y_j \gets x_j$
    }
}

\Return{$y$}
\end{algorithm}

\begin{algorithm}
\caption{$\cA_\flip$, an analyzer for histograms}
\label{alg:a-flip}

\KwIn{$\vec{y} \in (\zo^d)^{nk+n}$; implicit parameters $d,k,q$}
\KwOut{$\vec{z} \in \R^d$}

\For{$j\in[d]$}{
    $z_j \gets \tfrac1n \sum_{i=1}^{nk+n} \frac{1}{1-2q} \cdot ( y_{i,j} - q)$
}

\Return{$\vec{z}\gets(z_1,\dots,z_d)$}
\end{algorithm}

Our analysis of the protocol will be built upon two technical claims. The first gives a bound on the size of any confidence interval in terms of parameters $q,k$.

\begin{clm}
\label{clm:acc-per-bin}
Fix any $n\in \N$ and $\beta \in(0,1)$. If $\frac{1}{nk+n} \ln \frac{2}{\beta} \leq q < 1/2$, then the protocol $\cP_\flip = (\cR_\flip, \cA_\flip)$ reports approximate histograms with error behaving as follows:
$$
\forall \vec{x} \in \cX^{n}_d,~ j\in[d]~ \pr{}{\left| z_j - \hist_j(\vec{x}) \right| > 2\sqrt{\frac{k+1}{n} q (1-q) \ln \frac{2}{\beta}} \cdot \left(\frac{1}{1-2q}\right)} \leq \beta
$$
\end{clm}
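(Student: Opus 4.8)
The plan is to identify $z_j - \hist_j(\vec{x})$ as a centered, rescaled sum of $N := nk+n = n(k+1)$ mutually independent Bernoulli random variables, and then to apply a variance-sensitive concentration inequality. Fix a coordinate $j$ and let $W := \sum_{i=1}^N y_{i,j}$ be the $j$-th column sum of the stacked $(nk+n)\times d$ message matrix, so that by definition of $\cA_\flip$ we have $z_j = \frac{1}{n(1-2q)}\bigl(W - Nq\bigr)$. Each bit $y_{i,j}$ is a Bernoulli variable, and these are independent across all $N$ messages because $\cR_\flip$ invokes $\cR_{d,q}$ with fresh randomness $k+1$ times per user and $\cR_{d,q}$ flips each coordinate with an independent $\Ber(q)$ draw. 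For the $c_j := n\cdot\hist_j(\vec{x})$ real users whose datum is $e_{j,d}$, the bit $y_{i,j}$ has parameter $1-q$; for all other $N - c_j$ messages (the $kn$ fake all-zero messages, and the real users with a different datum) it has parameter $q$.

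First I would compute the two moments of $W$. Since both $\Ber(1-q)$ and $\Ber(q)$ have variance $q(1-q)$, we get $\mathbb{E}[W] = c_j(1-q) + (N-c_j)q = c_j(1-2q)+Nq$ and $\mathrm{Var}[W] = N q(1-q) =: \sigma^2$. Substituting into the formula for $z_j$ gives $\mathbb{E}[z_j] = c_j/n = \hist_j(\vec{x})$ (the estimator is unbiased) and $z_j - \hist_j(\vec{x}) = \frac{1}{n(1-2q)}\bigl(W - \mathbb{E}[W]\bigr)$. A short rearrangement then shows that the claimed deviation event $\bigl\{\,|z_j - \hist_j(\vec{x})| > 2\sqrt{\tfrac{k+1}{n}q(1-q)\ln\tfrac2\beta}\cdot\tfrac1{1-2q}\,\bigr\}$ coincides exactly with $\bigl\{\,|W - \mathbb{E}[W]| > 2\sqrt{\sigma^2\ln\tfrac2\beta}\,\bigr\}$.

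Next I would bound the latter probability with Bernstein's inequality for the sum $W$ of the $\{0,1\}$-valued variables $y_{i,j}$ (each deviating from its mean by at most $1$): $\pr{}{|W - \mathbb{E}[W]|\ge t} \le 2\exp\!\bigl(-\tfrac{t^2/2}{\sigma^2 + t/3}\bigr)$. Plugging in $t = 2\sqrt{\sigma^2\ln\tfrac2\beta}$, the exponent has magnitude $\tfrac{2\sigma^2\ln(2/\beta)}{\sigma^2 + \tfrac23\sqrt{\sigma^2\ln(2/\beta)}}$, which is at least $\ln\tfrac2\beta$ — so the tail is at most $\beta$ — precisely when $\sigma^2 \ge \tfrac49\ln\tfrac2\beta$.

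It remains to deduce $\sigma^2 \ge \tfrac49\ln\tfrac2\beta$ from the hypotheses. Since $q<1/2$ we have $1-q>1/2$, hence $\sigma^2 = Nq(1-q) > \tfrac12 Nq$; and the assumption $q \ge \tfrac1N\ln\tfrac2\beta$ gives $Nq \ge \ln\tfrac2\beta$, so $\sigma^2 > \tfrac12\ln\tfrac2\beta \ge \tfrac49\ln\tfrac2\beta$, completing the proof. The only point requiring care is the choice of tail bound: Hoeffding's inequality would replace $q(1-q)$ by the worst-case $1/4$ and so would not yield the stated factor, so one genuinely needs a variance-aware inequality (Bernstein's, or the multiplicative Chernoff bound), with the hypothesis $q \ge \tfrac1N\ln\tfrac2\beta$ used exactly to dominate the lower-order $t/3$ term in Bernstein's denominator.
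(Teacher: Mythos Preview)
Your proof is correct and follows essentially the same approach as the paper: both identify $z_j-\hist_j(\vec{x})$ as a rescaled, centered sum of $N=n(k+1)$ independent Bernoulli variables, compute mean and variance, and apply a variance-sensitive concentration bound, invoking the hypothesis $q\ge \tfrac{1}{N}\ln\tfrac{2}{\beta}$ precisely to ensure the variance dominates the lower-order term. The only cosmetic difference is that the paper works with the shifted terms $\tfrac{1}{1-2q}(y_{i,j}-q)$ (bounded in magnitude by $m=\tfrac{1-q}{1-2q}$) and phrases the requirement as ``variance $\ge m^2\ln\tfrac{2}{\beta}$'' before citing an additive Chernoff bound, whereas you work with the raw sum $W$ and make the Bernstein computation explicit.
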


We will prove this claim in Section \ref{sec:accuracy}. The second claim is a sufficient condition on $q,k$ for $(\eps, \delta)$ shuffle privacy.

\begin{clm}
\label{clm:privacy}
Fix any $\eps > 0$, $\delta < 1/100$, and $n \in \N$. If parameters $q < 1/2$ and $k\in \N$ are chosen such that $q(1-q) \geq \tfrac{33}{5nk} (\tfrac{e^\eps+1}{e^\eps-1})^2 \ln \tfrac{4}{\delta} $, then $\cP_\flip = (\cR_\flip, \cA_\flip)$ is $(\eps, \delta)$-shuffle private.
\end{clm}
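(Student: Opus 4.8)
The plan is to show that, once shuffled, the $nk$ messages built from the all-zero fake inputs supply enough symmetric binomial noise to privatize the single-user difference between neighboring datasets, and then to match constants with the stated bound on $q(1-q)$.

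\textit{Reduction to single-coordinate changes.}
Fix neighbors $\vec x\sim\vec x'$ in $\cX_d^n$ differing only at one user, say user $1$, who holds $e_{a,d}$ in $\vec x$ and $e_{b,d}$ in $\vec x'$ (nothing to prove if $a=b$). One useful preliminary: $\cR_\flip$ is already $\bigl(2\ln\tfrac{1-q}{q},0\bigr)$-differentially private per user for one-hot neighbors, since $\cR_{d,q}(e_{a,d})$ and $\cR_{d,q}(e_{b,d})$ have likelihood ratio at most $\bigl(\tfrac{1-q}{q}\bigr)^2$; so the claim is immediate once $\eps\ge 2\ln\tfrac{1-q}{q}$, and we may assume $\eps<2\ln\tfrac{1-q}{q}$. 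For the interesting regime, I would route through the hybrid $\vec x''$ where user $1$ holds $0^d$ (legitimate since $\cR_{d,q}$, hence $\cR_\flip$, is well-defined on all of $\zo^d$): the pairs $\vec x\to\vec x''$ and $\vec x''\to\vec x'$ are \emph{single-coordinate changes}, toggling only the bit fed into coordinate $a$ (resp.\ $b$) of one call to $\cR_{d,q}$. If each such change is $(\eps/2,\delta_0)$-shuffle private for a suitable $\delta_0=\Theta(\delta)$, a transitivity (two-step group-privacy) argument for approximate indistinguishability gives $(\eps,\delta)$-privacy between $\vec x$ and $\vec x'$.

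\textit{The binomial-noise argument.}
For a single-coordinate change at coordinate $j$, couple the two executions by reusing every user's flip randomness, so the shuffled pool of $n(k+1)$ messages differs in exactly one bit---coordinate $j$ of user $1$'s first message. Now use the privacy-blanket decomposition of $\cR_{d,q}$ in coordinate $j$: each message independently reports, in coordinate $j$, a uniformly random bit with probability $2q$ and its true bit otherwise, independently of the other coordinates. All $nk$ fake messages carry true bit $0$ and are blanket with probability $2q$, so the number $B$ of blanket-in-$j$ messages stochastically dominates $\mathrm{Bin}(nk,2q)$; conditioned on $B$, those messages contribute $\mathrm{Bin}(B,\tfrac12)$ fresh symmetric noise to coordinate $j$, which masks the toggled bit. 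The step I expect to be the main obstacle is making this airtight at the level of the shuffled \emph{multiset of $d$-bit vectors}: one must show the other $d-1$ coordinates and the non-blanket messages amount to dataset-independent post-processing, so that shuffle privacy reduces to the one-dimensional statement ``a sensitivity-$1$ query plus $\mathrm{Bin}(B,\tfrac12)$ noise is private.'' This should hinge on coordinate $j$ being independent of the remaining coordinates within each individual message.

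\textit{Constants.}
A Chernoff bound gives $B\ge nkq$ except with probability $\le\delta_0/2$; conditioning on that event costs $\delta_0/2$. On it, the sensitivity-$1$ query plus $\mathrm{Bin}(B,\tfrac12)$ noise is $(\eps/2,\delta_0/2)$-differentially private once $B$ exceeds a constant multiple of $\bigl(\tfrac{e^{\eps/2}+1}{e^{\eps/2}-1}\bigr)^2\ln\tfrac1{\delta_0}$---the standard explicit binomial-mechanism bound, which is where the bulk of the calculation lies. Using $q<\tfrac12$ to get $nkq\ge nk\,q(1-q)$, the elementary inequality $\bigl(\tfrac{e^{\eps/2}+1}{e^{\eps/2}-1}\bigr)^2\le 4\bigl(\tfrac{e^{\eps}+1}{e^{\eps}-1}\bigr)^2$ (equivalent to $(t-1)^2\ge0$ for $t=e^{\eps/2}$) to pass from $\eps/2$ to $\eps$, and $\delta_0=\Theta(\delta)$ from the group-privacy split, the requirement collapses to $q(1-q)\ge\tfrac{33}{5nk}\bigl(\tfrac{e^{\eps}+1}{e^{\eps}-1}\bigr)^2\ln\tfrac{4}{\delta}$: the $4$ inside the logarithm absorbs the factor-two losses from splitting $\delta$ in the composition and from the binomial tail bound, and $\tfrac{33}{5}$ is the accumulated slack (with $\delta<\tfrac1{100}$ keeping these constants tame).
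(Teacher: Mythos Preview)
Your route is genuinely different from the paper's. The paper does \emph{not} pass through $0^d$; it reduces directly to the case $d=2$ by noting that one-hot neighbours $e_{a,d},e_{b,d}$ differ only in coordinates $a,b$ and that the remaining $d-2$ coordinates of every message in the one-user simulator $\cC_{nk,d,q}$ are i.i.d.\ $\Ber(q)$, hence post-processing. It then analyzes the resulting four-outcome multinomial mechanism $\cB_{m,q}$ (counts of $00,01,10,11$ among the $m=nk$ fake two-bit messages plus the user's randomized message) in a single shot, proving $(\eps,\delta)$-DP whenever $mq(1-q)\ge\tfrac{33}{5}\bigl(\tfrac{e^\eps+1}{e^\eps-1}\bigr)^2\ln\tfrac4\delta$ via an explicit likelihood-ratio calculation on the multinomial pmf.

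Your hybrid through $0^d$, reduction to single-coordinate changes, blanket decomposition, and binomial-mechanism step are all structurally sound. The obstacle you flag---getting from the shuffled multiset of $d$-bit vectors down to a one-dimensional count---is resolved by exactly the coordinate-independence observation the paper uses for its $d=2$ reduction: under the change $0^d\leftrightarrow e_{a,d}$, the other $d-1$ coordinates of all $nk+1$ messages in $\cC_{nk,d,q}$ are i.i.d.\ $\Ber(q)$ regardless of the input, so attaching them after shuffling the coordinate-$a$ bits is post-processing of a single count.

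The gap is the constant. Your two-step composition must target $(\eps/2,\delta_0)$ per step, and this costs a hard factor $4$ in the leading coefficient via $\bigl(\tfrac{e^{\eps/2}+1}{e^{\eps/2}-1}\bigr)^2\le 4\bigl(\tfrac{e^{\eps}+1}{e^{\eps}-1}\bigr)^2$. That factor sits in front of the $\ln$, not inside it, so it cannot be written off as ``the $4$ in $\ln(4/\delta)$''; to finish at $\tfrac{33}{5}$ you would need a binomial-mechanism constant of at most $33/20$, which you have not supplied and which standard analyses do not give. Separately, the transitivity step forces $\delta_0\le \delta/(1+e^{\eps/2})$, so after the additional split for the Chernoff tail the relevant logarithm is $\ln\bigl(2(1+e^{\eps/2})/\delta\bigr)$, which strictly exceeds $\ln(4/\delta)$ for every $\eps>0$. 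Your argument therefore proves a version of the claim with a larger leading constant; the paper's direct $d=2$ multinomial analysis, by avoiding the $\eps/2$ composition entirely, is precisely what earns the stated $\tfrac{33}{5}$.
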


We will prove this claim in Section \ref{sec:privacy}. Combining the two claims yields the following confidence interval for the error of any single frequency estimate.



\begin{thm}
\label{thm:per-bin}
Fix any $\eps > 0$, $\delta < 1/100$, and $n\in \N$. For any choice of parameter $k > \frac{132}{5n} (\tfrac{e^\eps+1}{e^\eps-1})^2 \ln \tfrac{4}{\delta}$, there is a choice of parameter $q <1/2$ such that the protocol $\cP_\flip=(\cR_\flip,\cA_\flip)$ has the following properties
\begin{enumerate}[a.]
    \item $\cP_\flip$ is $(\eps,\delta)$-shuffle private for inputs from $\cX_d$.
    \item For any $j\in [d]$ and $\vec{x}\in \cX^n_d$, $\cP_\flip(\vec{x})$ reports frequency estimate $z_j$ such that $$|z_j -\hist_j(\vec{x})| < \frac{1}{n}\cdot \frac{e^\eps+1}{e^\eps-1} \cdot \sqrt{\frac{264}{5} \ln \frac{4}{\delta} \ln 20} \cdot g(k) $$ with probability $9/10$, where $g(k)$ monotonically approaches 1 from above. Refer to Figure \ref{fig:per-bin}. 
\end{enumerate}
\end{thm}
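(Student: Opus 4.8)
The plan is to read off a value of $q$ directly from the privacy constraint in Claim~\ref{clm:privacy}, namely the smallest one it allows, and then feed that $q$ into the accuracy bound of Claim~\ref{clm:acc-per-bin} with $\beta = 1/10$. Write $C := \tfrac{33}{5}\bigl(\tfrac{e^\eps+1}{e^\eps-1}\bigr)^2 \ln\tfrac 4\delta$, so that the hypothesis on $k$ is precisely $\tfrac{C}{nk} < \tfrac14$. Since $q \mapsto q(1-q)$ is a continuous increasing bijection from $[0,\tfrac12)$ onto $[0,\tfrac14)$, there is a unique $q \in [0,\tfrac12)$ with $q(1-q) = \tfrac{C}{nk}$, namely $q = \tfrac12\bigl(1 - \sqrt{1 - 4C/(nk)}\,\bigr)$; this is the choice I would make. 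Note $0 < q < \tfrac12$ and $1 - 2q = \sqrt{1 - 4C/(nk)}$.

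Part (a) is then immediate: by construction $q < 1/2$ and $q(1-q) = \tfrac{C}{nk} = \tfrac{33}{5nk}\bigl(\tfrac{e^\eps+1}{e^\eps-1}\bigr)^2\ln\tfrac 4\delta$, so Claim~\ref{clm:privacy} gives that $\cP_\flip = (\cR_\flip,\cA_\flip)$ is $(\eps,\delta)$-shuffle private for inputs from $\cX_d$. For part (b) I would first check the side condition of Claim~\ref{clm:acc-per-bin} with $\beta=1/10$, i.e.\ $\tfrac{\ln 20}{n(k+1)} \le q$: since $1-q<1$ we have $q > q(1-q) = \tfrac{C}{nk} \ge \tfrac{C}{n(k+1)}$, so it suffices that $C \ge \ln 20$, which holds because $\delta < 1/100$ forces $C \ge \tfrac{33}{5}\ln 400 > \ln 20$ (using $\tfrac{e^\eps+1}{e^\eps-1} > 1$). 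Claim~\ref{clm:acc-per-bin} then yields that for every $j$ and every $\vec x \in \cX_d^n$, with probability at least $9/10$,
\[
|z_j - \hist_j(\vec x)| \;\le\; 2\sqrt{\tfrac{k+1}{n}\cdot\tfrac{C}{nk}\cdot\ln 20}\cdot\tfrac{1}{1-2q}
\;=\; \tfrac 1n\cdot\tfrac{e^\eps+1}{e^\eps-1}\,\sqrt{\tfrac{132}{5}\,\ln\tfrac 4\delta\,\ln 20}\cdot g(k),
\]
where I set $g(k) := \sqrt{\tfrac{k+1}{k}}\cdot\tfrac{1}{1-2q} = \sqrt{\tfrac{k+1}{\,k - 4C/n\,}}$, the last equality using $1-2q = \sqrt{1-4C/(nk)}$. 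Since $\sqrt{132/5} < \sqrt{264/5}$, this is strictly below the bound stated in the theorem, so the claimed strict inequality holds with room to spare.

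It remains to check that $g$ behaves as advertised. Writing $b := 4C/n > 0$, we have $g(k)^2 = \tfrac{k+1}{k-b}$, which is defined and positive exactly when $k > b$ — which is precisely the hypothesis $k > \tfrac{132}{5n}\bigl(\tfrac{e^\eps+1}{e^\eps-1}\bigr)^2\ln\tfrac4\delta$. One checks $\tfrac{d}{dk}\tfrac{k+1}{k-b} = \tfrac{-(b+1)}{(k-b)^2} < 0$, that $\tfrac{k+1}{k-b} > 1$ for all $k > b$ (equivalent to $b > 0$), and that $\tfrac{k+1}{k-b}\to 1$ as $k\to\infty$; hence $g$ monotonically approaches $1$ from above, as depicted in Figure~\ref{fig:per-bin}.

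The only step needing a moment's care is the first one: confirming that the hypothesis on $k$ is exactly what makes $\tfrac{C}{nk} < \tfrac14$, so that $q = \tfrac12(1-\sqrt{1-4C/(nk)})$ is real and lies in $(0,\tfrac12)$ — equivalently, that $q$ cannot be taken any smaller without violating the privacy condition of Claim~\ref{clm:privacy}. A secondary but genuinely routine point is verifying the precondition $q \ge \tfrac{\ln(2/\beta)}{n(k+1)}$ of Claim~\ref{clm:acc-per-bin}, which is where $\delta < 1/100$ is used a second time. Everything else is substitution and the elementary analysis of the one-variable function $g$.
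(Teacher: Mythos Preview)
Your proof is correct and follows essentially the same route as the paper: pick $q$ from the equality case of Claim~\ref{clm:privacy}, invoke Claim~\ref{clm:acc-per-bin} with $\beta=1/10$, and observe that the resulting scaling factor decreases to~$1$. The only packaging differences are that the paper sets $q=\max(\hat q,\tilde q)$ and then notes the second term is dominated (you instead verify $\hat q\ge\tilde q$ directly via $C\ge\ln 20$), and the paper absorbs $\sqrt{(k+1)/k}\le\sqrt 2$ into the constant $264/5$ while defining $g(k)=1/(1-2q)$, whereas you keep the tighter constant $132/5$ and fold $\sqrt{(k+1)/k}$ into your $g$.
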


\begin{figure}
    \centering
    \includegraphics[width=0.45\textwidth]{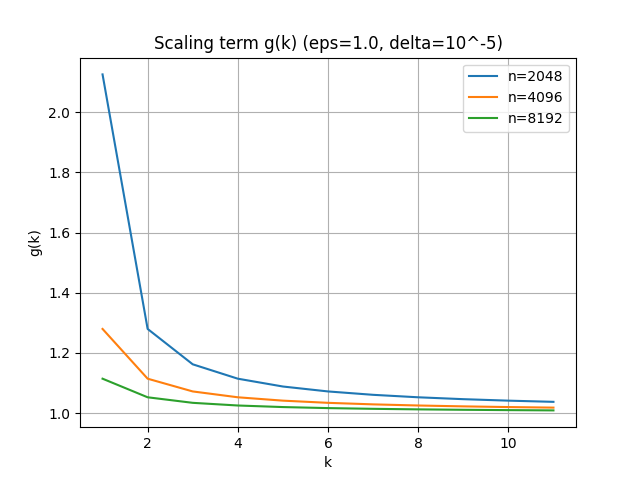}
    \includegraphics[width=0.45\textwidth]{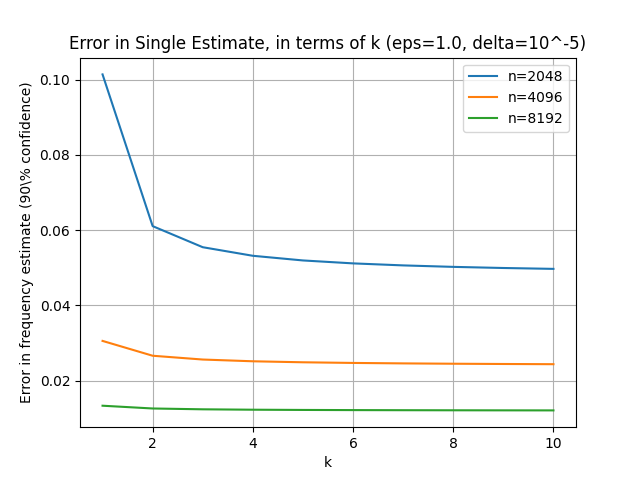}
    \caption{Effect of bandwidth parameter $k$ on scaling term $g(k)$ and error of a single estimate.}
    \label{fig:per-bin}
\end{figure}

\begin{proof}
Because $k$ is sufficiently large, there is a solution $\hat{q}$ to the quadratic equation $q(1-q) = \tfrac{33}{5nk} (\tfrac{e^\eps+1}{e^\eps-1})^2 \ln \tfrac{4}{\delta} $ that lies in the interval $(0,1/2)$. Also, let $\tilde{q} \gets \frac{1}{nk+n} \ln \frac{2}{\beta} < 1/2$.

When we set $q\gets \max(\hat{q},\tilde{q})$, Part \textit{a} follows immediately from Claim \ref{clm:privacy} and the error $|z_j - \hist_j(\vec{x})|$ is at most $$\max\left(\frac{1}{n}\cdot \frac{e^\eps+1}{e^\eps-1} \cdot \sqrt{\frac{264}{5} \ln \frac{4}{\delta} \ln 20 } ,~ \frac{2}{n}\cdot \ln 20 \right) \cdot \frac{1}{1-2q}$$ with  probability $9/10$ via Claim \ref{clm:acc-per-bin} and the bound $(k+1)/k\leq 2$. Note that both forms of $q$ approach zero as $k$ increases, so $1/(1-2q)$ is a monotonically decreasing function of $k$ as desired. Finally, the term $\tfrac{2}{n}\cdot \ln 20$ must be the smaller of the two due to our bound on $\delta$.
\end{proof}

\medskip

We now iterate on our analysis to derive a bound on the maximum error. Parts \textit{i} and \textit{ii} in Theorem \ref{thm:informal} are immediate corollaries.
\begin{thm}
\label{thm:maximum}
Fix any $\eps > 0$, $\delta < 1/100$, and $n\in \N$. For any choice of parameter $k > \max \left( \tfrac{132}{5n} (\tfrac{e^\eps+1}{e^\eps-1})^2 \ln \tfrac{4}{\delta}, \tfrac{2}{n}\ln20d -1 \right)$, there is a choice of parameter $q <1/2$ such that the protocol $\cP_\flip=(\cR_\flip,\cA_\flip)$ has the following properties
\begin{enumerate}[a.]
    \item $\cP_\flip$ is $(\eps,\delta)$-shuffle private for inputs from $\cX_d$.
    \item For any $\vec{x}\in \cX^n_d$, $\cP_\flip(\vec{x})$ reports approximate histogram $\vec{z}$ such that the maximum error is $$\norm{\vec{z} - \hist(\vec{x})}_\infty < \max\left(\frac{1}{n}\cdot \frac{e^\eps+1}{e^\eps-1} \cdot \sqrt{\frac{264}{5} \ln \frac{4}{\delta} \ln 20d} ,~ \frac{2}{n}\cdot \ln 20d \right) \cdot f(k)$$ with probability $9/10$, where $f(k)$ monotonically approaches 1 from above. Refer to Figure \ref{fig:max}.  
\end{enumerate}
\end{thm}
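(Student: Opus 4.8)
The plan is to follow the proof of Theorem~\ref{thm:per-bin} almost verbatim, driving the per-coordinate failure probability down to $\beta = 1/(10d)$ so that a union bound over the $d$ bins still leaves total failure probability $1/10$. First I would reuse the same parameter setup: since $k > \tfrac{132}{5n}(\tfrac{e^\eps+1}{e^\eps-1})^2 \ln\tfrac4\delta$, the quadratic $q(1-q) = \tfrac{33}{5nk}(\tfrac{e^\eps+1}{e^\eps-1})^2\ln\tfrac4\delta$ has a root $\hat q \in (0,1/2)$, and I set $\tilde q \gets \tfrac{1}{n(k+1)}\ln(20d)$, which lies in $(0,1/2)$ precisely because of the new constraint $k > \tfrac{2}{n}\ln 20d - 1$. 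Taking $q \gets \max(\hat q,\tilde q) < 1/2$, Part~\textit{a} is immediate from Claim~\ref{clm:privacy}, using that $q\mapsto q(1-q)$ is increasing on $(0,1/2)$ so that $q\ge\hat q$ preserves the hypothesis of that claim.

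For Part~\textit{b} I would apply Claim~\ref{clm:acc-per-bin} with this $q$ and $\beta = 1/(10d)$; its hypothesis $\tfrac{1}{n(k+1)}\ln\tfrac2\beta \le q < 1/2$ holds by the choice of $q$. This gives, for each fixed $j$, that $|z_j - \hist_j(\vec{x})| \le 2\sqrt{\tfrac{k+1}{n} q(1-q)\ln\tfrac2\beta}\cdot\tfrac{1}{1-2q}$ except with probability $\beta$, and a union bound over $j\in[d]$ yields the simultaneous $\ell_\infty$ bound with probability $\ge 1 - d\beta = 9/10$. It then remains to put the right-hand side in closed form. If the maximum defining $q$ is attained at $q=\hat q$, I substitute $q(1-q) = \tfrac{33}{5nk}(\tfrac{e^\eps+1}{e^\eps-1})^2\ln\tfrac4\delta$ and use $(k+1)/k \le 2$ together with $4\cdot\tfrac{66}{5} = \tfrac{264}{5}$ to recover the first term $\tfrac1n\cdot\tfrac{e^\eps+1}{e^\eps-1}\sqrt{\tfrac{264}{5}\ln\tfrac4\delta\,\ln 20d}$; if it is attained at $q = \tilde q$, I bound $q(1-q) \le q = \tfrac{1}{n(k+1)}\ln 20d$ to recover the second term $\tfrac2n\ln 20d$. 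In both cases the leftover factor $1/(1-2q)$ is what I would name $f(k)$.

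The monotonicity of $f(k) = 1/(1-2q)$ follows because both $\hat q$ (again via monotonicity of $q(1-q)$ and its $1/k$ dependence) and $\tilde q$ decrease to $0$ as $k\to\infty$, so $q = \max(\hat q,\tilde q)$ does too, and hence $f(k)$ decreases monotonically to $1$. I do not expect a genuine obstacle here: the one point requiring care is that the strengthened lower bound on $k$ — specifically the new term $\tfrac2n\ln 20d - 1$ — is exactly what keeps $\tilde q$ below $1/2$ once $\ln 20$ is replaced by $\ln 20d$; and, unlike in Theorem~\ref{thm:per-bin} where the bound $\delta < 1/100$ forces the $\ln 20$ term of the maximum to be the smaller one, here the $\ln 20d$ factor means that for sufficiently large $d$ the second term can dominate, so the $\max$ must genuinely be retained in the statement.
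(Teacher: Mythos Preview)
Your proposal is correct and follows essentially the same approach as the paper: the paper packages the $\beta = 1/(10d)$ union-bound step as Corollary~\ref{coro:acc-max}, then chooses $q = \max(\hat q,\tilde q)$ exactly as you do and invokes Claims~\ref{clm:privacy} and~\ref{clm:acc-per-bin}, identifying $f(k) = 1/(1-2q)$. Your write-up in fact spells out more detail than the paper's terse proof, including the case split for the closed form and the observation that the $\max$ cannot be collapsed here as it was in Theorem~\ref{thm:per-bin}.
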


\begin{figure}
    \centering
    \includegraphics[width=0.45\textwidth]{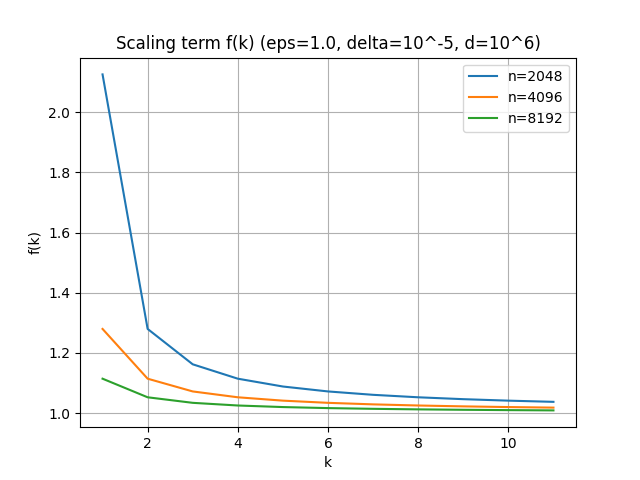}
    \includegraphics[width=0.45\textwidth]{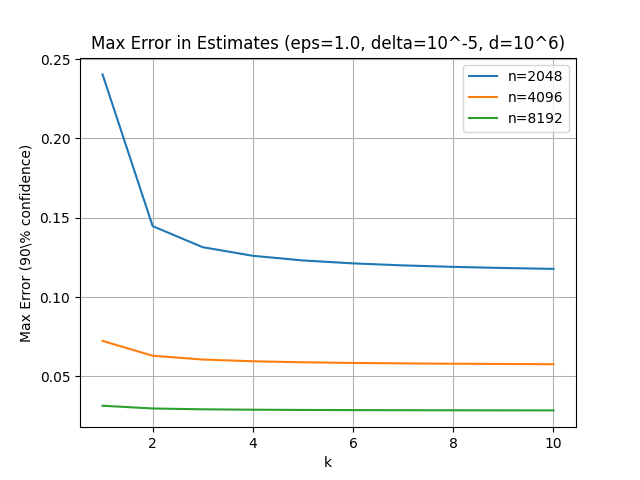}
    \caption{Effect of bandwidth parameter $k$ on scaling term $f(k)$ and maximum error.}
    \label{fig:max}
\end{figure}

\begin{proof}
The following is immediate from setting $\beta = 1/10d$  in Claim \ref{clm:acc-per-bin} and a union bound:
\begin{coro}
\label{coro:acc-max}
Fix any $n\in \N$. If $\frac{1}{nk+n} \ln 20d \leq q < 1/2$, then $\cP_\flip$ reports a histogram with maximum error behaving as follows:
$$
\forall \vec{x} \in \cX^{n}_d~ \pr{}{\norm{\cP_\flip(\vec{x}) - \hist(\vec{x})}_\infty > 2 \sqrt{\frac{k+1}{n} q(1-q) \ln 20d } \cdot \left(\frac{1}{1-2q}\right)} \leq 1/10
$$
\end{coro}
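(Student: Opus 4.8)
The plan is to obtain the corollary by instantiating Claim \ref{clm:acc-per-bin} at the per-coordinate failure probability $\beta = 1/(10d)$ and then taking a union bound over the $d$ coordinates. The first thing to check is that the hypothesis of Claim \ref{clm:acc-per-bin} is implied by the hypothesis assumed here. That claim requires $\frac{1}{nk+n}\ln\frac{2}{\beta} \le q < 1/2$; with $\beta = 1/(10d)$ we have $\frac{2}{\beta} = 20d$, so the condition becomes exactly $\frac{1}{nk+n}\ln 20d \le q < 1/2$, which is precisely the assumption in the statement of the corollary. Hence for each fixed $j\in[d]$ the claim yields
$$
\pr{}{\left|z_j - \hist_j(\vec{x})\right| > 2\sqrt{\frac{k+1}{n}\,q(1-q)\ln 20d}\cdot\left(\frac{1}{1-2q}\right)} \le \frac{1}{10d}.
$$

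For the second and final step, recall that $\norm{\vec{z} - \hist(\vec{x})}_\infty = \max_{j\in[d]}\left|z_j - \hist_j(\vec{x})\right|$, so the event that the $\ell_\infty$ error exceeds the stated threshold is exactly the union over $j\in[d]$ of the per-coordinate events displayed above. A union bound over these $d$ events therefore bounds the total failure probability by $d\cdot\frac{1}{10d} = \frac{1}{10}$, which is the claimed conclusion; this holds uniformly over all $\vec{x}\in\cX^n_d$ since the per-coordinate bound does. The argument is entirely routine — the only point worth stating explicitly is the bookkeeping $\ln(2/\beta) = \ln 20d$ that aligns the corollary's hypothesis with that of Claim \ref{clm:acc-per-bin} — so there is no genuine obstacle here. (This corollary is the $\ell_\infty$ analogue of the per-bin bound behind Theorem \ref{thm:per-bin}, and it is what will be combined with Claim \ref{clm:privacy} to prove Theorem \ref{thm:maximum}.)
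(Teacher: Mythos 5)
Your proof is correct and matches the paper's argument exactly: the paper likewise obtains the corollary by setting $\beta = 1/(10d)$ in Claim \ref{clm:acc-per-bin} and applying a union bound over the $d$ coordinates. Nothing to add.
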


Because $k$ is sufficiently large, there is a solution $\hat{q}$ to $q(1-q) = \tfrac{33}{5nk} (\tfrac{e^\eps+1}{e^\eps-1})^2 \ln \tfrac{4}{\delta} $ such that $\hat{q} \in (0,1/2)$. Similar to before, we will choose $q$ to be the maximum of $\hat{q}$ and $\tilde{q} \gets \frac{1}{nk+n} \ln 20d$. The theorem follows from Claim \ref{clm:privacy} and Corollary \ref{coro:acc-max} (where $1/1-2q$ is again the desired monotonically decreasing function of $k$)
\end{proof}

We will prove Claims \ref{clm:acc-per-bin} and \ref{clm:privacy} in the following two subsections. In Subsection \ref{sec:robustness}, we will bound the impact of corrupt users (Part \textit{iii} of Theorem \ref{thm:informal}).

\subsection{Accuracy of $\cP_\flip$}
\label{sec:accuracy}
In this section, we show how to obtain confidence intervals of the per-bin error of $\cP_\flip$ (Claim \ref{clm:acc-per-bin}). To prove the claim, we will first analyze the bias and variance of each estimate.

\begin{clm}
\label{clm:bias-and-variance}
Fix any $q < 1/2$ and $n,k\in \N$. For any index $j \in [d]$ and data $\vec{x}\in\cX^n_d$, the protocol $\cP_\flip = (\cR_\flip, \cA_\flip)$ reports an unbiased estimate of $\hist_j(\vec{x})$ with variance $\frac{k+1}{n} \cdot q(1-q)\cdot (\tfrac{1}{1-2q})^2$.
\end{clm}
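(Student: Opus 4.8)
The plan is to directly compute the first two moments of the estimator $z_j$ returned by $\cA_\flip$, exploiting the fact that $z_j$ is an affine function of $nk+n$ mutually independent bits, one coordinate from each message.

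First I would analyze a single message. By inspection of $\cR_{d,q}$ (Algorithm \ref{alg:rr-onehot}), if a message is produced by running $\cR_{d,q}$ on an input string whose $j$-th bit is $b \in \zo$, then its $j$-th coordinate $y_{i,j}$ is $\Ber$-distributed: it equals $1-b$ with probability $q$ and $b$ with probability $1-q$, so $\mathbb{E}[y_{i,j}] = q + (1-2q)\,b$. Among the $nk+n$ messages $\cA_\flip$ receives, exactly $n$ are ``real'' — the $i$-th of these came from $\cR_{d,q}(x_i)$, whose $j$-th bit is $x_{i,j}$ — and the remaining $nk$ are ``fake'', coming from $\cR_{d,q}(0^d)$ and hence having $j$-th bit $0$. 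Consequently $\mathbb{E}\!\left[\tfrac{1}{1-2q}(y_{i,j}-q)\right]$ equals $x_{i,j}$ for a real message and $0$ for a fake message. Summing over all messages and dividing by $n$ gives $\mathbb{E}[z_j] = \tfrac1n \sum_{i=1}^n x_{i,j} = \hist_j(\vec{x})$, which is the unbiasedness claim.

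Next I would compute the variance. All $nk+n$ messages are generated with fresh, independent randomness (across the $k+1$ calls a user makes and across users), so the bits appearing in the defining sum of $z_j$ are mutually independent; hence $\mathrm{Var}[z_j] = \tfrac{1}{n^2}\sum_i \tfrac{1}{(1-2q)^2}\,\mathrm{Var}[y_{i,j}]$. The key observation is that $y_{i,j}$ is Bernoulli with parameter either $q$ (for a fake message, or a real message with $x_{i,j}=0$) or $1-q$ (for a real message with $x_{i,j}=1$), and in both cases the variance is $q(1-q)$. Thus each of the $nk+n$ terms contributes $\tfrac{q(1-q)}{(1-2q)^2}$, giving $\mathrm{Var}[z_j] = \tfrac{nk+n}{n^2}\cdot\tfrac{q(1-q)}{(1-2q)^2} = \tfrac{k+1}{n}\cdot q(1-q)\cdot\bigl(\tfrac{1}{1-2q}\bigr)^2$, exactly as stated.

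There is no substantive obstacle here: the statement is a routine moment computation. The only points requiring care are the bookkeeping — correctly counting $n$ real and $nk$ fake messages for a total of $nk+n$, matching the input type of $\cA_\flip$ — and noting that the per-message variance $q(1-q)$ does not depend on the underlying bit, since $\mathrm{Var}[\Ber(q)] = \mathrm{Var}[\Ber(1-q)]$.
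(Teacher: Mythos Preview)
Your proposal is correct and follows essentially the same approach as the paper: compute the mean and variance of each term $\tfrac{1}{1-2q}(y_{i,j}-q)$, observe that $\Ber(q)$ and $\Ber(1-q)$ have the same variance $q(1-q)$, and sum the $nk+n$ independent contributions with the $1/n$ normalization.
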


\begin{proof}
For $n < i\leq nk+n$, we take $x_i=0^d$. This will correspond to the empty data vector of a fabricated user. Recall that $z_j = \tfrac1n \sum_{i=1}^{nk+n} \frac{1}{1-2q} \cdot ( y_{i,j} - q)$ is the protocol's estimate of $\hist_j(\vec{x}) = \tfrac1n\sum_{i=1}^n x_{i,j}$. We will first show each term $\frac{1}{1-2q} \cdot ( y_{i,j} - q)$ is an unbiased estimate of the bit $x_{i,j}$.\footnote{Technically, shuffling means that the $i$-th message does not correspond to the $i$-th user. But summation is symmetric so we simply avoid inserting permutation notation for neatness.}
\begin{align*}
\ex{}{\frac{1}{1-2q} \cdot ( y_{i,j} - q)} &= \frac{1}{1-2q} \cdot (\ex{}{y_{i,j}} - q)\\
    &= \frac{1}{1-2q} \cdot ( ( (1-x_{i,j})\cdot q + x_{i,j}\cdot (1-q) ) -q ) \tag{see $\cR_{d,q}$}\\
    &= \frac{1}{1-2q} \cdot ( q-x_{i,j}q + x_{i,j}- x_{i,j}q - q)\\
    &= x_{i,j}
\end{align*}

Next, we derive the variance of the term:
\begin{align*}
\var{}{\frac{1}{1-2q} \cdot ( y_{i,j} - q)} &= \left(\frac{1}{1-2q}\right)^2 \cdot \var{}{y_{i,j}} \\
    &= \left(\frac{1}{1-2q}\right)^2 \cdot q(1-q)
\end{align*}
The second equality comes from the fact that $y_{i,j}$ is drawn from either $\Ber(q)$ or $\Ber(1-q)$, which have the same variance.

Because $z_j$ is the summation over $nk+n$ terms (normalized by $\tfrac1n$), the variance of $z_j$ is $\frac{k+1}{n}\cdot q(1-q) \cdot (\tfrac{1}{1-2q})^2$ by independence. Finally, $\ex{}{z_j}=\hist_j(\vec{x})$ due to linearity of expectation and the fact that we normalize by $n$.
\end{proof}

To arrive at Claim \ref{clm:acc-per-bin}, we will show that the protocol's estimates are sums of bounded random variables. This will allow us to deploy a concentration inequality.
\begin{proof}[Proof of Claim \ref{clm:acc-per-bin}]
We first expand the random variable in question as
\begin{equation}
\label{eq:confidence-1}
\frac{1}{n} \sum_{i=1}^{nk+n} \frac{1}{1-2q} \cdot ( y_{i,j} - q) - x_{i,j}
\end{equation}
where we again use $x_{i,j}$ (resp. $y_{i,j}$) to denote the $j$-th bit in the data (resp. message) sent by user $i$. When $i > n$, $i$ corresponds to the index of a fabricated user; in this case, $x_{i,j}=0$.

In the proof of Claim \ref{clm:bias-and-variance}, we saw that each term in $\sum_{i=1}^{nk+n} \frac{1}{1-2q} \cdot ( y_{i,j} - q)$ is an independent random variable with mean $x_{i,j}$ and variance $\frac{1}{(1-2q)^2} \cdot q \cdot (1-q)$. Naturally, this means each term in \eqref{eq:confidence-1} is an independent random variable with mean zero and variance $\frac{1}{(1-2q)^2} \cdot q \cdot (1-q)$.

We now add the observation that each term in \eqref{eq:confidence-1} has maximum magnitude $m= \frac{1-q}{1-2q}$. This follows from the fact that $x_{i,j},y_{i,j}\in \zo$ and $q < 1/2$. We show that the variance of the summation is at least $m^2\ln \frac{2}{\beta}$:
\begin{align*}
\var{}{\sum_{i=1}^{nk+n} \frac{1}{1-2q} \cdot ( y_{i,j} - q)} &= \frac{nk+n}{(1-2q)^2} \cdot q\cdot (1-q) \\
    &= \left(\frac{1-q}{1-2q}\right)^2 \cdot \frac{q}{1-q} \cdot (nk+n) \\
    &\geq \left(\frac{1-q}{1-2q}\right)^2 \ln\frac{2}{\beta} \\
    &= m^2 \ln\frac{2}{\beta}
\end{align*}

Because we have lower bounded the variance of the sum of bounded independent variables, the claim follows from an additive Chernoff bound.
\end{proof}

\subsection{Privacy of $\cP_\flip$}
\label{sec:privacy}
In this section, we derive the range of $q,k$ for which $(\eps,\delta)$-privacy will hold (Claim \ref{clm:privacy}). The proof will proceed as follows: design a series of algorithms $\cM_1, \cM_2, \dots $ such that $\cP_\flip$ is private whenever $\cM_1$ is private, $\cM_1$ is private whenever $\cM_2$ is private, and so on. Then we study the privacy of the final algorithm.

\subsubsection{Step One}
We first consider $\cC_{m,d,q}$ (Algorithm \ref{alg:sim-protocol}). It takes one user's data as input, constructs $m$ copies of $0^d$ and executes the randomization algorithm $\cR_{d,q}$ on all $m+1$ strings. When $m=nk$, this algorithm simulates the set of messages produced by any single user and the fabricated users in our protocol $\cP_\flip$.

\begin{algorithm}
\caption{$\cC_{m,d,q}$}
\label{alg:sim-protocol}

\KwIn{$x \in \{0^d \} \cup \cX_d$}
\KwOut{$\vec{y} \in (\zo^d)^{m+1}$}

Construct $\vec{x} \gets (x, \underbrace{0^d, \dots, 0^d}_{m ~\textrm{copies}})$

\Return{$\vec{y} \gets (\cS\circ \cR^{m+1}_{d,q})(\vec{x})$}

\end{algorithm}

We claim that privacy of our protocol follows from privacy of this new algorithm.

\begin{clm}
If $\cC_{nk,d,q}$ is $(\eps,\delta)$-differentially private for inputs from $\cX_d$, then $\cP_\flip = (\cR_\flip,\cA_\flip)$ is $(\eps,\delta)$-shuffle private for inputs from $\cX_d$.
\end{clm}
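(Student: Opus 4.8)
The plan is to exhibit $(\cS \circ \cR_\flip^n)(\vec{x})$ as one fixed randomized post-processing applied to $\cC_{nk,d,q}(x_i)$, where $i$ is the coordinate at which two neighboring datasets differ, and then invoke closure of differential privacy under post-processing (Fact \ref{fact:post}).

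Fix neighbors $\vec{x} \sim \vec{x}\,' \in \cX_d^n$ differing only in coordinate $i$, so $x_i, x_i' \in \cX_d$ while $x_j = x_j'$ for $j \neq i$. First I would observe that the full collection of $nk+n$ messages sent in $\cP_\flip(\vec{x})$ consists of the $n$ ``real'' messages $\cR_{d,q}(x_1),\dots,\cR_{d,q}(x_n)$ together with $nk$ i.i.d.\ ``fake'' messages, each distributed as $\cR_{d,q}(0^d)$; crucially, the record of \emph{which} user produced \emph{which} fake message is irrelevant to the multiset that is handed to $\cS$. Hence I can regroup: assign to user $i$ all $nk$ fake messages, so that user $i$'s block is exactly the (pre-shuffle) output of $\cC_{nk,d,q}(x_i)$, and let $W$ denote the multiset of the remaining $n-1$ real messages $\{\cR_{d,q}(x_j) : j \neq i\}$. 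Since the distribution of $\cS(\vec m)$ depends only on the multiset of entries of $\vec m$ and is a uniformly random ordering of it --- so an extra internal shuffle inside $\cC$ does no harm --- this gives the distributional identity
$$
(\cS \circ \cR_\flip^n)(\vec{x}) \;\stackrel{d}{=}\; \cS\bigl( W \,\|\, \cC_{nk,d,q}(x_i) \bigr),
$$
where $\|$ denotes concatenation and $W$ is sampled independently of the $\cC$-block.

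Next I would note that the distribution of $W$ is determined by $x_1,\dots,x_{i-1},x_{i+1},\dots,x_n$, which coincide for $\vec{x}$ and $\vec{x}\,'$; therefore the map $f(\vec v) := \cS(W \,\|\, \vec v)$, with $W$ drawn from this fixed distribution, is \emph{one and the same} randomized algorithm for both datasets. Consequently $(\cS\circ\cR_\flip^n)(\vec x) \stackrel{d}{=} f(\cC_{nk,d,q}(x_i))$ and $(\cS\circ\cR_\flip^n)(\vec x\,') \stackrel{d}{=} f(\cC_{nk,d,q}(x_i'))$. Because $x_i, x_i' \in \cX_d$ and $\cC_{nk,d,q}$ is $(\eps,\delta)$-differentially private for inputs from $\cX_d$, Fact \ref{fact:post} applied to $f$ yields that $f \circ \cC_{nk,d,q}$ --- and hence $\cS\circ\cR_\flip^n$ restricted to this neighboring pair --- satisfies the $(\eps,\delta)$ indistinguishability bound. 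As the neighboring pair was arbitrary, $\cP_\flip$ is $(\eps,\delta)$-shuffle private for inputs from $\cX_d$.

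The only genuine subtlety --- the ``main obstacle'', such as it is --- lies in the regrouping step: one must phrase the equivalence at the level of multisets of messages rather than user-indexed lists, and observe that composing the internal shuffle inside $\cC$ with the outer shuffle $\cS$ still produces a uniformly random permutation of the combined multiset. Everything after that is a direct appeal to post-processing, so I would not expect it to require more than a sentence or two.
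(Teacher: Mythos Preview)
Your proposal is correct and follows essentially the same approach as the paper: decompose $(\cS\circ\cR_\flip^n)(\vec{x})$ into a first stage that runs $\cC_{nk,d,q}$ on $x_i$ (the differing coordinate) and a second stage that adjoins the remaining $n-1$ real messages and shuffles everything, then invoke closure under post-processing. The paper's write-up is terser and does not dwell on the multiset/internal-shuffle point you flag, but the argument is the same.
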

\begin{proof}
In an execution of $(\cS \circ \cR^n_\flip)(\vec{x})$, $\cR_{d,q}$ gets run on the values $x_1, \dots, x_n, x_{n+1} , \dots, x_{nk+n}$ --- where $x_{n+1} = 0^d, \dots, x_{nk+n} = 0^d$ --- and all $nk+n$ messages are shuffled together. For any user $i$, we can decompose it into two stages: (1) run $\cR_{d,q}$ on the values $x_i, x_{n+1} = 0^d, \dots, x_{nk+n} = 0^d$ and shuffle the output then (2) run $\cR_{d,q}$ on the values $x_1,\dots,x_{i-1}, x_{i+1},\dots, x_n$ and shuffle all $nk+n$ messages. The first stage is precisely $\cC_{nk,d,q}$ and the second is a post-processing of its output. Thus, privacy follows from post-processing (Fact \ref{fact:post}).
\end{proof}

\subsubsection{Step Two}
In this step, we argue that we only need to concern ourselves with the $d=2$ case. Consider any $j,j'\in[d]$ where $j<j'$. Changing user data from $e_{j,d}$ to $e_{j',d}$ only affects the one-hot strings in positions $j$ and $j'$. Because $\cC_{m,d,q}$ operates by performing independent bit-flipping on the one-hot strings (via $\cR_{d,q}$), it can essentially be decomposed into two phases: bit-flip positions $j,j'$ (which depend on the user data) and then bit-flip on the rest of the bits (a post-processing that is independent of the user's data). We make this decomposition explicit in Algorithm \ref{alg:helper}.


\begin{algorithm}
\caption{$\cC_{m,d,q,j,j'}$}
\label{alg:helper}

\KwIn{$x \in \{e_{j,d}, e_{j',d}\}$}
\KwOut{$\vec{y} \in (\zo^d)^*$}

\If{$x = e_{j,d}$}{
    $u \gets 10$
}
\Else{
    $u \gets 01$
}

$\vec{v} \gets \cC_{m,2,q}(u)$

$\vec{w} \gets \cC_{m,d-2,q}(0^{d-2})$

$\vec{y} \gets$ empty list

\For{$i \in [m+1]$}{
    $y_i \gets (\dots, w_{i,j-1},v_{i,1},w_{i,j+1}, \dots, w_{i,j'-1},v_{i,2},w_{i,j'+1}, \dots)$
    
    Append $y_i$ to $\vec{y}$
}
\Return{$\vec{y}$}

\end{algorithm}

\begin{clm}
If, for every $j,j'\in [d]$ where $j<j'$, $\cC_{m,d,q,j,j'}$ is $(\eps,\delta)$-differentially private for inputs from $\{e_{j,d}, e_{j',d}\}$, then $\cC_{m,d,q}$ is $(\eps,\delta)$-differentially private for inputs from $\cX_d$.
\end{clm}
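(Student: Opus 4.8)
The plan is to show that for the two inputs $e_{j,d}$ and $e_{j',d}$ (with $j<j'$), the algorithm $\cC_{m,d,q}$ has \emph{exactly the same output distribution} as $\cC_{m,d,q,j,j'}$. The claim then follows immediately: any pair of distinct elements of $\cX_d$ is of the form $\{e_{j,d},e_{j',d}\}$ (and the case $x=x'$ is vacuous, since $\cC_{m,d,q}$ takes a single user's datum so ``neighboring'' just means ``any two inputs''), so the $(\eps,\delta)$-differential privacy of each $\cC_{m,d,q,j,j'}$ on $\{e_{j,d},e_{j',d}\}$ transfers verbatim to $\cC_{m,d,q}$ on that pair, in both directions.

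First I would make the column decomposition explicit. In an execution of $\cC_{m,d,q}(x)$ with $x\in\{e_{j,d},e_{j',d}\}$, the pre-shuffle messages are $R_i=\cR_{d,q}(x_i)$ for $i\in[m+1]$ with $x_1=x$ and $x_i=0^d$ otherwise; since $\cR_{d,q}$ flips every coordinate independently, the $d$ columns of the $(m+1)\times d$ matrix $(R_i)_i$ are mutually independent. Restricting to columns $\{j,j'\}$ gives a random vector $A=(A_1,\dots,A_{m+1})$ distributed exactly as the pre-shuffle output of $\cR_{2,q}$ run on $(u,0^2,\dots,0^2)$, where $u=10$ if $x=e_{j,d}$ and $u=01$ if $x=e_{j',d}$; restricting to the remaining $d-2$ columns gives $B=(B_1,\dots,B_{m+1})$ that are i.i.d.\ copies of $\cR_{d-2,q}(0^{d-2})$, independent of $A$. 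The single shuffle $\cS$ applies one uniform permutation $\pi$ to all the rows, so, writing $\pi\cdot A$ for $A$ with its entries permuted by $\pi$, the restrictions of $\cS(R_1,\dots,R_{m+1})$ to columns $\{j,j'\}$ and to the rest are $\pi\cdot A$ and $\pi\cdot B$ respectively, and $\cC_{m,d,q}(x)$ is then the image of the pair $(\pi\cdot A,\pi\cdot B)$ under the deterministic coordinate-interleaving map $\Phi$ that places the $A$-part into positions $j,j'$ and the $B$-part into the remaining positions in order. On the other side, $\cC_{m,d,q,j,j'}$ produces $(\vec v,\vec w)=(\pi_V\cdot A,\ \pi_W\cdot B)$ with $\pi_V,\pi_W$ \emph{independent} uniform permutations (the internal shuffles of $\cC_{m,2,q}$ and $\cC_{m,d-2,q}$), and then applies the same $\Phi$.

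Everything therefore reduces to the distributional identity $(\pi\cdot A,\ \pi\cdot B)\stackrel{d}{=}(\pi_V\cdot A,\ \pi_W\cdot B)$. The key observation---and the one subtle point---is that the ``background'' rows $B$ are i.i.d., so permuting them is distributionally inert \emph{even jointly with} $A$: conditioned on $(\pi,A)$, the law of $\pi\cdot B$ is always that of $m+1$ i.i.d.\ draws of $\cR_{d-2,q}(0^{d-2})$, which does not depend on $(\pi,A)$; hence $\pi\cdot B$ is independent of $\pi\cdot A$ (a function of $(\pi,A)$) and has that fixed law, and the same is true of $\pi_W\cdot B$ relative to $\pi_V\cdot A$. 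Since $\pi\cdot A$ and $\pi_V\cdot A$ also share the same marginal law---namely that of $\cC_{m,2,q}(u)$---integrating over the conditioning shows the two pairs are equal in distribution. Applying the common map $\Phi$ gives $\cC_{m,d,q}(x)\stackrel{d}{=}\cC_{m,d,q,j,j'}(x)$ for $x=e_{j,d}$ and for $x=e_{j',d}$, which is exactly what was needed.

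The main obstacle to watch is precisely this last distributional identity: a naive reading worries that the single shuffle inside $\cC_{m,d,q}$ correlates the $\{j,j'\}$-block with the background block, whereas $\cC_{m,d,q,j,j'}$ shuffles the two blocks independently, so that the two algorithms should differ. The resolution is that the background rows all come from the \emph{same} input $0^{d-2}$ and are therefore exchangeable (indeed i.i.d.), which makes any such correlation invisible in the joint law. Everything else---the column independence of $\cR_{d,q}$, the identification of the $\{j,j'\}$-columns with $\cR_{2,q}$ on $(u,0^2,\dots,0^2)$, and the fact that $\Phi$ is literally the interleaving performed in Algorithm~\ref{alg:helper}---is routine bookkeeping.
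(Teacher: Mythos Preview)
Your proposal is correct and follows essentially the same approach as the paper: both argue that $\cC_{m,d,q}(e_{j,d})$ and $\cC_{m,d,q,j,j'}(e_{j,d})$ (and likewise for $e_{j',d}$) have identical output distributions, from which the claim follows by substitution into the DP inequalities. Your treatment is in fact more careful than the paper's on the one nontrivial point---that the single shuffle in $\cC_{m,d,q}$ versus the two independent shuffles in $\cC_{m,d,q,j,j'}$ could in principle create different correlations between the $\{j,j'\}$-block and the background block---which you resolve cleanly via the exchangeability/i.i.d.\ structure of the background rows; the paper simply asserts that each background bit $y_{i,k}$ is an ``independent'' $\Ber(q)$ draw without spelling out this step.
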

\begin{proof}
To prove $\cC_{m,d,q}$ is $(\eps,\delta)$-differentially private for $\cX_d$, it suffices to show the inequalities below are true for every $j,j',Y$:
\begin{align}
    \pr{}{\cC_{m,d,q}(e_{j,d}) \in Y} &\leq e^\eps\cdot \pr{}{\cC_{m,d,q}(e_{j',d}) \in Y} +\delta \label{eq:priv1}\\
    \pr{}{\cC_{m,d,q}(e_{j',d}) \in Y} &\leq e^\eps\cdot \pr{}{\cC_{m,d,q}(e_{j,d}) \in Y} +\delta \label{eq:priv2}
\end{align}

If $\cC_{m,d,q,j,j'}$ is $(\eps,\delta)$-differentially private, we have that
\begin{align}
    \pr{}{\cC_{m,d,q,j,j'}(e_{j,d}) \in Y} &\leq e^\eps\cdot \pr{}{\cC_{m,d,q,j,j'}(e_{j',d}) \in Y} +\delta\\
    \pr{}{\cC_{m,d,q,j,j'}(e_{j',d}) \in Y} &\leq e^\eps\cdot \pr{}{\cC_{m,d,q,j,j'}(e_{j,d}) \in Y} +\delta 
\end{align}
In the remainder of the proof, we will argue that $\cC_{m,d,q,j,j'}(e_{j,d})$ has the same distribution as $\cC_{m,d,q}(e_{j,d})$; a completely symmetric argument holds for the equivalence between $\cC_{m,d,q,j,j'}(e_{j',d})$ and $\cC_{m,d,q}(e_{j',d})$. Inequalities \eqref{eq:priv1} and \eqref{eq:priv2} will therefore hold by substitution.

Pick any $k\notin \{j,j'\}$ and any message index $i \in [m+1]$. Notice that when we obtain $\vec{y}$ from $\cC_{m,d,q}(e_{j,d})$, $y_{i,k}$ is an independent bit that has value 1 with probability $q$ (since bit $k$ has to flip from 0 to 1). But this is exactly the same distribution as in $\cC_{m,d,q,j,j'}(e_{j,d})$.

Now consider $(y_{1,j},\dots,y_{m+1,j}, y_{1,j'},\dots,y_{m+1,j'})$ when obtained from $\cC_{m,d,q}(e_{j,d})$. By construction, we know that there exists one uniformly random index $i$ such that $y_{i,j}\sim \Ber(1-q), y_{i,j'}\sim \Ber(q)$ and, for every other index $\hat{i}$, $y_{\hat{i},j}$ and $y_{\hat{i},j'}$ are independent samples from $\Ber(q)$. But again this is the same as $\cC_{m,d,q,j,j'}(e_{j,d})$.
\end{proof}

\begin{clm}
If $\cC_{m,2,q}$ is $(\eps,\delta)$-differentially private for inputs from $\cX_2$, then $\cC_{m,d,q,j,j'}$ is $(\eps,\delta)$-differentially private for inputs from $\{e_{j,d}, e_{j',d}\}$.
\end{clm}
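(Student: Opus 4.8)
The plan is to exhibit $\cC_{m,d,q,j,j'}$ as a post-processing of $\cC_{m,2,q}$ composed with a deterministic relabeling of its single-user input, and then invoke closure under post-processing (Fact \ref{fact:post}).

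First I would treat the input relabeling. The assignment $x \mapsto u$ in Algorithm \ref{alg:helper} is a deterministic injection with $e_{j,d} \mapsto 10 = e_{1,2}$ and $e_{j',d} \mapsto 01 = e_{2,2}$; restricted to $\{e_{j,d}, e_{j',d}\}$ it is a bijection onto $\cX_2$. Since these randomizers act on a single user's datum, $(\eps,\delta)$-differential privacy for inputs from $\cX_2$ is precisely the pair of inequalities relating $\cC_{m,2,q}(10)$ and $\cC_{m,2,q}(01)$ against each other (every pair of single-user inputs is a neighboring pair). Composing with $x \mapsto u$ then gives, by substitution, that $x \mapsto \cC_{m,2,q}(u)$ is $(\eps,\delta)$-differentially private for inputs from $\{e_{j,d}, e_{j',d}\}$ whenever $\cC_{m,2,q}$ is $(\eps,\delta)$-DP for inputs from $\cX_2$.

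Next I would define the randomized map $f$ that takes a list $\vec v \in (\zo^2)^{m+1}$, draws $\vec w \gets \cC_{m,d-2,q}(0^{d-2})$ with fresh randomness independent of its argument, and returns the interleaved list $\vec y$ whose $i$-th message places $v_{i,1}, v_{i,2}$ in coordinates $j, j'$ and the $d-2$ coordinates of $w_i$ into the remaining positions in order. The key point is that $f$ is a fixed randomized function of $\vec v$ alone: the argument passed to $\cC_{m,d-2,q}$ is the constant $0^{d-2}$, so $\vec w$ never depends on the protocol input. By construction $\cC_{m,d,q,j,j'}(x)$ and $f(\cC_{m,2,q}(u(x)))$ are equal as distributions. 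Applying Fact \ref{fact:post} to the $(\eps,\delta)$-DP algorithm $x \mapsto \cC_{m,2,q}(u(x))$ and the post-processing $f$ yields the claim.

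I do not expect a substantive obstacle here; the entire content lies in setting up the post-processing cleanly. The only points needing care are (a) confirming that $\vec w$, and hence $f$, is genuinely input-independent, and (b) checking that the coordinate bookkeeping of the interleaving — including the boundary cases $j=1$, $j'=j+1$, and $j'=d$ hidden in the ``$\dots$'' of Algorithm \ref{alg:helper} — is a well-defined deterministic function of $(\vec v,\vec w)$. Neither of these presents any real difficulty.
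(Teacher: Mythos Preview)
Your proposal is correct and matches the paper's own argument: the paper dispatches the claim in one sentence by noting that $\cC_{m,d,q,j,j'}$ runs $\cC_{m,2,q}$ on a deterministic encoding of the input and then post-processes, which is exactly the decomposition you spell out in more detail.
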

\begin{proof}
The claim is immediate from the fact that $\cC_{m,d,q,j,j'}$ is executing $\cC_{m,2,q}$ on a value that is obtained from the user input and then post-processing the algorithm's output.
\end{proof}

\subsubsection{Step Three}
In this section, we reduce the privacy of $\cC_{m,2,q}$ to that of $\cB_{m,q}$ (Algorithm \ref{alg:b}). This algorithm generates a vector of four randomized integers via $\cM(m,q)$ (Algorithm \ref{alg:n}). Then it computes a binary string $j \gets \cR_{2,q}(x)$ where $x$ is the input to $\cB_{m,q}$. Finally it increments the integer at the position encoded by $j$.

We design $\cM(m,q)$ to generate the histogram of the messages produced by $m$ fabricated users. This means $\cB_{m,q}(x)$ is sufficient to simulate $\cC_{m,2,q}(x)$. In turn, it suffices to prove that $\cB_{m,q}$ is private. Then we argue that, whenever $m,q$ lie in a particular range, the noise produced by $\cM(m,q)$ is enough to ensure $\cB_{m,q}$ satisfies $(\eps,\delta)$-differential privacy.

\begin{algorithm}
\caption{$\cM$, a multinomial noise generator}
\label{alg:n}
\KwIn{$m \in \N,q \in(0,1)$}
\KwOut{$\vec{f} \in \Z_{\geq 0}^4$}

Initialize $\vec{f} = (f_1,f_2,f_3,f_4) \gets (0,0,0,0)$

\For{$i \in [m]$}{
    Sample $j \sim \cR_{2,q}(00)$
    
    $f_{j+1} \gets f_{j+1} + 1$ \tcc{binary string $j$ maps to a number between 0 and 3}
}

\Return{$\vec{f}$}
\end{algorithm}

\begin{algorithm}
\caption{$\cB_{m,q}$}
\label{alg:b}

\KwIn{$x \in \zo^2$}
\KwOut{$\vec{y} \in \Z_{\geq 0}^4$}

Initialize $\vec{y} = (y_1,y_2,y_3,y_4)$ with noise from $\cM(m,q) $

Sample $j \sim \cR_{2,q}(x)$

$y_{j+1} \gets y_{j+1} + 1$ \tcc{binary string $j$ maps to a number between 0 and 3}

\Return{$\vec{y}$}
\end{algorithm}

\begin{clm}
If $\cB_{m,q}$ is $(\eps,\delta)$-differentially private for inputs from $\cX_2$, then $\cC_{m,2,q}$ is $(\eps,\delta)$-differentially private for inputs from $\cX_2$.
\end{clm}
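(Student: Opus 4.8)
The plan is to show that $\cC_{m,2,q}$ is a randomized post-processing of $\cB_{m,q}$ that does not depend on the user's input, and then invoke closure under post-processing (Fact \ref{fact:post}). Concretely, I would exhibit a randomized map $f : \Z_{\geq 0}^4 \to (\zo^2)^{m+1}$ such that, for every $x \in \cX_2$, the random variable $f(\cB_{m,q}(x))$ has the same distribution as $\cC_{m,2,q}(x)$. The map $f$ is the natural ``inverse'' of taking a histogram of two-bit strings: on input $\vec{y} = (y_1,y_2,y_3,y_4)$ with $\sum_{b} y_b = m+1$, it forms the length-$(m+1)$ list containing $y_b$ copies of the two-bit string that encodes $b-1$ (using the same bin-to-string correspondence that $\cM$ and $\cB_{m,q}$ use), writes these strings down in a fixed canonical order, and then applies the shuffler $\cS$.

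The key steps are then: (1) observe that both $\cB_{m,q}(x)$ and $\cC_{m,2,q}(x)$ are built from the same random ingredients, namely one draw $j^\ast \sim \cR_{2,q}(x)$ together with $m$ i.i.d. draws $j_1,\dots,j_m \sim \cR_{2,q}(00)$; (2) note that $\cC_{m,2,q}(x) = \cS$ applied to the list of these $m+1$ draws, which is a uniformly random ordering of the multiset $S := \{\,j^\ast, j_1,\dots,j_m\,\}$; (3) note that $\cB_{m,q}(x)$ outputs exactly the histogram $\mathrm{hist}(S)$, since $\cM(m,q)$ tallies $j_1,\dots,j_m$ and the final increment adds $j^\ast$; (4) observe that $f(\mathrm{hist}(S))$ deterministically reconstructs the canonical-order list of $S$ and then, via $\cS$, returns a uniformly random ordering of $S$, so $f(\cB_{m,q}(x))$ and $\cC_{m,2,q}(x)$ both equal ``a uniformly random ordering of $S$'' for the same random multiset $S$ and hence are equal in distribution; (5) conclude, since $f$ ignores $x$ and $\cB_{m,q}$ is assumed $(\eps,\delta)$-differentially private for inputs from $\cX_2$, that $f \circ \cB_{m,q}$ — and therefore $\cC_{m,2,q}$ — is $(\eps,\delta)$-differentially private for inputs from $\cX_2$.

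I do not expect a real obstacle here; the only point needing care is the elementary fact underlying step (4), that a uniform random permutation of a fixed-length list is determined in distribution by the list's multiset (condition on the multiset and note $\cS$ produces the uniform distribution over its orderings regardless of the starting arrangement). This is the same ``shuffling erases order, so the histogram of short strings is a sufficient statistic'' principle already used implicitly in Steps One and Two, so the argument fits the established pattern and should be short.
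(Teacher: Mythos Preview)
Your proposal is correct and matches the paper's own proof essentially verbatim: the paper also defines the post-processing that takes the histogram $\vec{y}$ from $\cB_{m,q}(x)$ and outputs a uniformly random vector in $(\zo^2)^{m+1}$ with those bin counts, observes this equals the distribution of $\cC_{m,2,q}(x)$, and invokes Fact~\ref{fact:post}. Your write-up is more explicit about why the reconstruction-then-shuffle matches $\cC_{m,2,q}$ in distribution, but the argument is the same.
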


\begin{proof}
Consider the post-processing algorithm which takes $\vec{y}$ produced by $\cB_{m,q}(x)$ and generates a uniformly random vector $\vec{w} \in \cX^{m+1}_2$ such that $y_j$ describes the frequency of the binary string corresponding to $j$ in $\vec{w}$. This is exactly the distribution of $\cC_{m,2,q}$ so privacy follows from post-processing.
\end{proof}

\begin{clm}
\label{clm:bmq}
Fix any $\eps > 0$ and $\delta < 1/100$. If $q<1/2$ and $mq(1-q) \geq \tfrac{33}{5} \left(\tfrac{e^\eps+1}{e^\eps-1}\right)^2 \ln(4/\delta)$, then $\cB_{m,q}$ is $(\eps,\delta)$-differentially private for inputs from $\cX_2$.
\end{clm}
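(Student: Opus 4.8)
The plan is to peel away everything that does not depend on the input and reduce to a one‑dimensional shifted‑binomial comparison. Write $\cX_2=\{e_{1,2},e_{2,2}\}=\{10,01\}$. The only step of $\cB_{m,q}$ that sees the input is the single extra sample $j\sim\cR_{2,q}(x)$ that increments one coordinate of $\vec y$. Tabulating the four outcomes of $\cR_{2,q}(10)$ and of $\cR_{2,q}(01)$ by which coordinate of $\vec y$ they hit, one finds that in both cases the extra sample lands in coordinate $1$ with probability $q(1-q)$, in coordinate $4$ with probability $q(1-q)$, and in $\{2,3\}$ with the remaining probability $q^2+(1-q)^2$; the two inputs differ only in the split inside $\{2,3\}$, where $10$ picks coordinate $2$ with probability $r:=q^2/(q^2+(1-q)^2)$ and coordinate $3$ with probability $1-r$, while $01$ does the reverse (note $r\in(0,1/2)$ since $q<1/2$). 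Coupling the two executions on the shared randomness of $\cM(m,q)$ and on which of the three regions the extra sample lands in, the two output laws become identical except on the region $\{2,3\}$, where they equal $\vec F+\mathbf{1}_2$ vs.\ $\vec F+\mathbf{1}_3$ with the stated probabilities (here $\vec F\sim\cM(m,q)$). By convexity of the $(\eps,\delta)$ constraint it thus suffices to prove that this reduced mechanism $\cB'$ is $(\eps,\delta)$-DP.

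Next I would condition on $a:=F_2+F_3\sim\mathrm{Bin}(m,2q(1-q))$, which has the same law under both inputs. Since coordinates $2$ and $3$ of $\cM$ are equiprobable, $F_2\mid a\sim\mathrm{Bin}(a,1/2)$, independent of $(F_1,F_4)$ (multinomial conditional independence) and of the input. Because $(F_1,F_4)$ and $a$ carry no information about the input and $F_3+\mathbf{1}_3=a+1-(F_2+\mathbf{1}_2)$, the output of $\cB'$ given $a$ is a post‑processing of $W_x:=F_2+\Ber(r_x)$, where $r_{e_{1,2}}=r$ and $r_{e_{2,2}}=1-r$. So it is enough to show that for every $a$ the laws of $W_{e_{1,2}}=\mathrm{Bin}(a,1/2)+\Ber(r)$ and $W_{e_{2,2}}=\mathrm{Bin}(a,1/2)+\Ber(1-r)$ are $(\eps,\delta_a)$-indistinguishable with $\ex{a}{\delta_a}\le\delta$ over $a\sim\mathrm{Bin}(m,2q(1-q))$; the conclusion then follows by averaging over $a$. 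Observe $W_{e_{2,2}}\stackrel{d}{=}(a+1)-W_{e_{1,2}}$, a clean mirror pair, so the two directions of the DP inequality give the same bound and one direction suffices.

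For the one‑dimensional step, write $p,p'$ for the two pmfs. With $t(w):=w/(a+1-w)$, a direct computation gives $p(w)/p'(w)=\tfrac{(1-r)+t(w)\,r}{\,r+t(w)(1-r)\,}$, which is strictly decreasing in $w$; hence $p(w)/p'(w)>e^\eps$ precisely on a left tail $\{w<w^\star_a\}$ with $w^\star_a=\tfrac{t^\star(a+1)}{1+t^\star}$, $t^\star=\tfrac{(1-r)-e^\eps r}{e^\eps(1-r)-r}$. If $t^\star\le 0$ (equivalently $r\ge 1/(1+e^\eps)$) the bad set is empty and $\delta_a=0$; otherwise $r<1/(1+e^\eps)$ forces $1-2r>\tfrac{e^\eps-1}{e^\eps+1}$, so $c:=\tfrac{1-t^\star}{2(1+t^\star)}=\tfrac{e^\eps-1}{2(e^\eps+1)(1-2r)}\in\big(\tfrac{e^\eps-1}{2(e^\eps+1)},\tfrac12\big)$ and $\tfrac a2-w^\star_a\ge ca-\tfrac12$. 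Since $W_{e_{1,2}}\ge\mathrm{Bin}(a,1/2)$ (mean $a/2$), Hoeffding gives $\delta_a\le\Pr[\mathrm{Bin}(a,1/2)<w^\star_a]\le\exp(-2c^2a+2c)$ for every $a\ge0$. Taking the binomial expectation over $a$ yields $\ex{a}{\delta_a}\le e^{2c}\bigl(1-2q(1-q)(1-e^{-2c^2})\bigr)^m\le e\cdot\exp\!\big(-2mq(1-q)(1-e^{-2c^2})\big)$, and then $1-e^{-2c^2}\ge\tfrac32 c^2\ge\tfrac32\big(\tfrac{e^\eps-1}{2(e^\eps+1)}\big)^2$ together with the hypothesis $mq(1-q)\ge\tfrac{33}{5}(\tfrac{e^\eps+1}{e^\eps-1})^2\ln\tfrac4\delta$ and $\delta<1/100$ closes the bound $\ex{a}{\delta_a}\le\delta$. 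The main obstacle is this last paragraph: pinning down the threshold $w^\star_a$, controlling the $O(1)$ shift terms (and the small‑$a$ contributions to the average) uniformly — in particular in the regime $q\to1/2$, which is exactly why the case $t^\star\le0$ must be split off so that $c$ stays bounded — and tuning the estimates so as to land at the stated constant; the two reductions above are routine once the coupling is written down.
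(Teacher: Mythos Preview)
Your argument is correct and takes a genuinely different route from the paper's. Both proofs open with the same observation that the laws of $\cR_{2,q}(10)$ and $\cR_{2,q}(01)$ agree on the outcomes $00$ and $11$ and merely swap probabilities on $01$ and $10$, so only coordinates $2$ and $3$ of $\vec y$ carry information. After that the paths diverge. The paper stays two-dimensional: it defines a rectangular good set $F=\{f_2,f_3\in[L,U]\}$ via a two-stage concentration argument on the multinomial (first bound $f_3$, then $f_2$ conditional on $f_3$; Claim~\ref{clm:noise-concentration}), and then bounds the pmf ratio $\Pr[\vec f=(y_1,y_2-1,y_3,y_4)]/\Pr[\vec f=(y_1,y_2,y_3-1,y_4)]=y_2/y_3$ pointwise on $F$ (Claim~\ref{clm:good-noise}). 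You instead condition on $a=F_2+F_3\sim\mathrm{Bin}(m,2q(1-q))$ to collapse the problem to the one-dimensional comparison $\mathrm{Bin}(a,\tfrac12)+\mathrm{Ber}(r)$ versus $\mathrm{Bin}(a,\tfrac12)+\mathrm{Ber}(1-r)$, compute the exact crossing threshold $w^\star_a$ of the likelihood ratio, apply Hoeffding, and then average the resulting $\delta_a$ over $a$ via the binomial mgf. Your reduction exploits the exchangeability of coordinates $2$ and $3$ under $\cM$ more directly, and the mirror identity $W_{e_{2,2}}\stackrel{d}{=}a+1-W_{e_{1,2}}$ dispatches the second DP direction in one line; the paper's decomposition into two standalone claims is more modular but requires the somewhat awkward nested conditioning in Claim~\ref{clm:noise-concentration}. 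Both routes land at the same constant $\tfrac{33}{5}$.
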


Our proof makes formal the following steps. Recall that the algorithm encodes the user's value $x$ via the randomized algorithm $\cR_{2,q}$. Changing $x$ from 01 to 10 will affect the probability mass function (PMF) of this encoding, but we note that the PMF only changes at two elements of the support, 01 and 10 (see Table \ref{tab:pmf}). This means we need only focus on how the noise produced by $\cM(m,q)$ behaves on those elements.

We essentially argue that a noise vector $\vec{f}=(f_1,\dots,f_4)$ produced by $\cM(m,q)$ has properties that are in line with the binomial and Gaussian distributions: we show that a sample from $\cM(m,q)$ is very likely to be in a set $F$ and any outcome in $F$ has the property that its probability is within $e^\eps$ of a ``neighboring'' outcome's probability. We formalize this in the two claims below, proven in Appendix \ref{apdx:technical}.

\begin{clm}
\label{clm:noise-concentration}
   Fix $m\in \N$ and $q, \delta \in (0,1)$. Define
    \begin{align*}
    \Delta &:= \sqrt{3mq(1-q)\ln \frac{4}{\delta}} \cdot \frac{q(1-q)}{1-q(1-q)} \\
    U &:= mq(1-q) + \Delta + \sqrt{3(mq(1-q) +\Delta) \ln\frac{4}{\delta}} \\
    L &:= mq(1-q) - \Delta - \sqrt{3(mq(1-q) +\Delta) \ln\frac{4}{\delta}}
    \end{align*}
    Let $F\subset \Z^4$ denote the set of vectors where $\vec{t} \in F$ if and only if $t_2,t_3 \in [L,U]$. If $mq(1-q) > \frac{9}{2}\ln (4/\delta)$, then
    \begin{equation}
    \label{eq:bmq-0}
    \pr{\vec{f} \sim \cM(m,q)}{ \vec{f} \notin F} \leq \delta
    \end{equation}
\end{clm}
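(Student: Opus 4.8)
The plan is to exploit that, for membership in $F$, only the two ``middle'' coordinates $f_2,f_3$ of $\vec f\sim\cM(m,q)$ matter, and that each of them is marginally a $\mathrm{Bin}(m,q(1-q))$ random variable whose concentration is governed by a standard multiplicative Chernoff bound. The interval $[L,U]$ has been defined generously — the $\Delta$ terms only widen it, their real purpose being the companion ratio estimate (the next claim) — so a plain symmetric tail bound around the mean already sits inside $[L,U]$, and the statement will follow from a union bound over the two coordinates.

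First I would pin down the marginals. Each of the $m$ iterations inside $\cM(m,q)$ draws $j\sim\cR_{2,q}(00)$, i.e.\ flips each bit of $00$ independently with probability $q$; so in a single iteration the output is the string $01$ (which maps to coordinate $2$) with probability $q(1-q)$, and the string $10$ (coordinate $3$) with probability $q(1-q)$. Writing $p:=q(1-q)$ and $\mu:=mp$, the coordinate $f_2$ is then distributed as $\mathrm{Bin}(m,p)$, and so is $f_3$. These coordinates are not independent, but I would only use a union bound over the events $\{f_2\notin[L,U]\}$ and $\{f_3\notin[L,U]\}$, so their joint law never enters.

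Next I would show that $[L,U]$ contains a symmetric Chernoff interval and then apply the Chernoff bound. Put $t:=\sqrt{3\mu\ln(4/\delta)}$. Since $\Delta\ge 0$ (for $q\in(0,1)$ the factor $\tfrac{q(1-q)}{1-q(1-q)}$ is positive) and $\sqrt{3(\mu+\Delta)\ln(4/\delta)}\ge t$, the definitions of $U$ and $L$ give $U\ge\mu+t$ and $L\le\mu-t$, hence $[\mu-t,\mu+t]\subseteq[L,U]$ and $\pr{}{f_2\notin[L,U]}\le\pr{}{|f_2-\mu|\ge t}$. Now apply the multiplicative Chernoff bound with relative deviation $\gamma:=t/\mu=\sqrt{3\ln(4/\delta)/\mu}$; the hypothesis $mq(1-q)>\tfrac92\ln(4/\delta)$ is exactly $\mu>\tfrac92\ln(4/\delta)$, so $\gamma<\sqrt{2/3}<1$ and both tails are in force: $\pr{}{f_2\ge(1+\gamma)\mu}\le e^{-\mu\gamma^2/3}=\delta/4$ and $\pr{}{f_2\le(1-\gamma)\mu}\le e^{-\mu\gamma^2/2}=(\delta/4)^{3/2}\le\delta/4$. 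Hence $\pr{}{f_2\notin[L,U]}\le\delta/2$, the same holds for $f_3$, and a union bound over $\{\vec f\notin F\}=\{f_2\notin[L,U]\}\cup\{f_3\notin[L,U]\}$ gives $\pr{}{\vec f\notin F}\le\delta$.

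I do not expect a serious obstacle — the computation is routine. The one point needing care is that a purely additive (Hoeffding-type) tail bound would be too weak here, since $p=q(1-q)$ can be far smaller than a constant; the relative-deviation form of the Chernoff bound is essential, and the hypothesis $mq(1-q)>\tfrac92\ln(4/\delta)$ is precisely what keeps its relative-error parameter $\gamma$ below $1$. The only other mildly fussy bookkeeping is carrying the argument $4/\delta$ through the logarithm so that the split into two tails and two coordinates lands the total failure probability at exactly $\delta$ rather than a constant multiple of it.
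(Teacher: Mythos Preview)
Your proof is correct and in fact simpler than the paper's. You exploit that each of $f_2$ and $f_3$ is \emph{marginally} $\mathrm{Bin}(m,q(1-q))$, apply a multiplicative Chernoff bound to each separately, and take a union bound; the $\Delta$ terms in $L,U$ are then pure slack, since the narrower interval $[\mu-t,\mu+t]$ already suffices. The paper instead argues \emph{conditionally}: it first bounds $f_3$ in $[L',U']=[\mu-t,\mu+t]$, and then, conditioning on $f_3=r$, observes that $f_2\sim\mathrm{Bin}\!\left(m-r,\tfrac{q(1-q)}{1-q(1-q)}\right)$ with mean $\mu_2\in[\mu-\Delta,\mu+\Delta]$; a second Chernoff bound on this conditional law then lands $f_2$ in $[L,U]$. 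This two-step route is what motivates the $\Delta$ in the definitions of $L$ and $U$, but, as you correctly note, it is not needed for the concentration claim itself --- the marginal argument is enough, and the dependence between $f_2$ and $f_3$ is irrelevant once you use a union bound. Your approach buys a cleaner proof of this claim; the paper's approach has the minor expository benefit of explaining where the specific form of $L,U$ comes from, but that form is only really exercised in the companion ratio estimate.
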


\begin{clm}
\label{clm:good-noise}
Fix any $\eps>0$ and $\delta < 1/100$. Define $F$ as in Claim \ref{clm:noise-concentration}. If $mq(1-q) \geq \tfrac{33}{5} \left(\tfrac{e^\eps+1}{e^\eps-1}\right)^2 \ln(4/\delta)$, then for any $\vec{y} = (y_1,\dots,y_4)$,
\begin{align*}
    \pr{\vec{f} \gets \cM(m,q) }{\vec{f} = (y_1,y_2-1,y_3,y_4) ,\vec{f} \in F } &\leq e^\eps \cdot \pr{\vec{f} \gets \cM(m,q) }{\vec{f} = (y_1,y_2,y_3 - 1,y_4) } \\
    \pr{\vec{f} \gets \cM(m,q) }{\vec{f} = (y_1,y_2,y_3 - 1, y_4) ,\vec{f} \in F } &\leq e^\eps \cdot \pr{\vec{f} \gets \cM(m,q) }{\vec{f} = (y_1,y_2 - 1,y_3,y_4) }
\end{align*}
\end{clm}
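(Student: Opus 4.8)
The plan is to reduce both displayed inequalities to a single bound on a ratio of multinomial point masses, and then verify that bound with a short constant chase. Throughout I would write $\mu := mq(1-q)$, $\gamma := \ln(4/\delta)$ and $c := \tfrac{e^\eps+1}{e^\eps-1} > 1$, so the hypothesis reads $\mu \ge \tfrac{33}{5}c^2\gamma$. The starting observation is that $\vec f \sim \cM(m,q)$ is multinomially distributed on $m$ trials: each trial runs $\cR_{2,q}(00)$, landing in coordinates $1,2,3,4$ with probabilities $p_1 = (1-q)^2$, $p_2 = p_3 = q(1-q)$, $p_4 = q^2$. I would prove only the first inequality; the second follows by symmetry, since $p_2 = p_3$ makes the law of $\cM(m,q)$ invariant under swapping coordinates $2$ and $3$, the set $F$ is symmetric in $t_2,t_3$, and this swap interchanges the two relevant vectors.

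Set $\vec a := (y_1,y_2-1,y_3,y_4)$ and $\vec b := (y_1,y_2,y_3-1,y_4)$. If $\vec a \notin F$, or $\vec a$ has a negative entry, or the entries of $\vec a$ do not sum to $m$, then the left-hand side is $0$ and there is nothing to prove; and whenever the right-hand side is $0$ — i.e.\ $\vec b$ is not a valid outcome — one checks the left side is $0$ too, the only nontrivial case being $y_3 = 0$, which would force $0 = y_3 \in [L,U]$, contradicting $L>0$ (established below). So I assume $\vec a \in F$ and both $\vec a,\vec b$ are valid outcomes (note their entries sum to the same value, namely $\sum_i y_i - 1$). The multinomial PMF then gives
\[
\frac{\pr{}{\vec f = \vec a}}{\pr{}{\vec f = \vec b}} = \frac{y_2!\,(y_3-1)!}{(y_2-1)!\,y_3!}\cdot \frac{p_2^{\,y_2-1}p_3^{\,y_3}}{p_2^{\,y_2}p_3^{\,y_3-1}} = \frac{y_2}{y_3}\cdot\frac{p_3}{p_2} = \frac{y_2}{y_3},
\]
using $p_2 = p_3$. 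Since $\vec a \in F$ means $y_2 - 1 \in [L,U]$ and $y_3 \in [L,U]$, we get $\tfrac{y_2}{y_3} \le \tfrac{U+1}{L}$, so it suffices to show $\tfrac{U+1}{L} \le e^\eps$.

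Let $A := U - \mu = \mu - L = \Delta + \sqrt{3(\mu+\Delta)\gamma}$, so $U = \mu+A$ and $L = \mu-A$. A direct rearrangement shows $\tfrac{U+1}{L} \le e^\eps$ is equivalent to $A \le \tfrac{\mu}{c} - \tfrac{1}{e^\eps+1}$, so it remains to bound $A$. Since $q<1/2$ gives $q(1-q)<1/4$ and $x \mapsto x/(1-x)$ is increasing on $[0,1)$, we have $\tfrac{q(1-q)}{1-q(1-q)} < \tfrac13$, hence $\Delta < \tfrac13\sqrt{3\mu\gamma} = \sqrt{\mu\gamma/3}$. The hypothesis gives $\gamma \le \tfrac{5\mu}{33c^2}$, so $\Delta < \tfrac{\mu}{c}\sqrt{5/99}$, and (using $c>1$) $\mu+\Delta < \mu(1+\sqrt{5/99})$, whence $\sqrt{3(\mu+\Delta)\gamma} < \tfrac{\mu}{c}\sqrt{\tfrac{5}{11}(1+\sqrt{5/99})}$. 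Adding, $A < \tfrac{\mu}{c}\big(\sqrt{5/99} + \sqrt{\tfrac{5}{11}(1+\sqrt{5/99})}\big) < 0.98\cdot\tfrac{\mu}{c}$. It therefore suffices that $\tfrac{1}{e^\eps+1} \le 0.02\cdot\tfrac{\mu}{c}$, i.e.\ (since $\tfrac{c}{e^\eps+1} = \tfrac{1}{e^\eps-1}$) that $\mu(e^\eps-1) \ge 50$. This holds because $c^2(e^\eps-1) = \tfrac{(e^\eps+1)^2}{e^\eps-1} \ge 8$ (minimized at $e^\eps = 3$), so $\mu(e^\eps-1) \ge \tfrac{33}{5}\gamma\cdot c^2(e^\eps-1) \ge \tfrac{264}{5}\gamma$, and $\gamma = \ln(4/\delta) > \ln 400 > 5.9$ yields $\tfrac{264}{5}\gamma > 300 \ge 50$. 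The same estimates give $L = \mu - A > 0.02\mu > 0$, discharging the remark in the previous paragraph. I expect the main obstacle to be exactly this last step: squeezing the coefficient of $\mu/c$ in the bound on $A$ below $1$ and checking that the lower-order term $\tfrac1{e^\eps+1}$ is absorbed uniformly over all $\eps>0$, which is what the constant $33/5$ in the hypothesis is calibrated to guarantee (with only a little room to spare).
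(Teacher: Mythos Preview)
Your proof is correct and follows essentially the same route as the paper's: compute the multinomial point-mass ratio to obtain $y_2/y_3$, bound this ratio using the membership $\vec a\in F$, and then carry out a constant chase to verify the resulting inequality under the hypothesis $mq(1-q)\ge \tfrac{33}{5}c^2\ln(4/\delta)$. Your version is in fact slightly cleaner than the paper's, which introduces an auxiliary set $F'$ (with the interval enlarged to $[L-1,U+1]$) and ends up proving the looser inequality $(U+1)/(L-1)\le e^\eps$; you observe directly that $y_2\le U+1$ and $y_3\ge L$ already follow from $\vec a\in F$, so the target $(U+1)/L\le e^\eps$ suffices and the detour through $F'$ is unnecessary.
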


The rest of this section is dedicated to proving Claim \ref{clm:bmq}.

\begin{proof}[Proof of Claim \ref{clm:bmq}]
For any $Y \subset \Z^4$, we will prove $$\pr{}{\cB_{m,q}(01) \in Y } \leq e^\eps \cdot \pr{}{\cB_{m,q}(10) \in Y} + \delta.$$ The inequality $\pr{}{\cB_{m,q}(10) \in Y } \leq e^\eps \cdot \pr{}{\cB_{m,q}(01) \in Y} + \delta$ will hold by completely symmetric arguments.


We begin by using Claim \ref{clm:noise-concentration} to rewrite $\pr{}{\cB_{m,q}(01) \in Y }$:
\begin{align*}
    &\pr{}{\cB_{m,q}(01) \in Y }\\
    ={}& \sum_{\vec{y}\in Y} \pr{}{\cB_{m,q}(01) = \vec{y}} \\
    ={}& \sum_{\vec{y}\in Y} \sum_{j \in \{0,1,2,3\}} \pr{}{\cR_{2,q}(01) = j} \cdot \pr{\vec{f} \gets \cM(m,q) }{\vec{f}=\vec{y}-e_{j+1,4} } \\
    ={}& \sum_{\vec{y}\in Y} \sum_{j \in \{0,1,2,3\}} \pr{}{\cR_{2,q}(01) = j} \cdot \\
    & \left( \pr{\vec{f} \gets \cM(m,q) }{\vec{f}=\vec{y}-e_{j+1,4} , \vec{f}\in F} + \pr{\vec{f} \gets \cM(m,q) }{\vec{f}=\vec{y}-e_{j+1,4} , \vec{f}\notin F} \right)\\
    \leq{}& \left( \sum_{\vec{y}\in Y} \sum_{j \in \{0,1,2,3\}} \pr{}{\cR_{2,q}(01) = j} \cdot \pr{\vec{f} \gets \cM(m,q) }{\vec{f}=\vec{y}-e_{j+1,4}, \vec{f}\in F } \right) + \delta \stepcounter{equation} \tag{\theequation} \label{eq:bmq-1} 
\end{align*}

\begin{table}
    \centering
    \begin{tabular}{|c||c|c||c|}
    \hline
         & $\cR_{2,q}(10)$ & $\cR_{2,q}(01)$ & $\cR_{2,q}(00)$ \\ \hline
         0=00 & $q(1-q)$ & $q(1-q)$ & $(1-q)^2$ \\ \hline
         1=01 & $q^2$ & $(1-q)^2$ & $q(1-q)$ \\ \hline
         2=10 & $(1-q)^2$ & $q^2$ & $q(1-q)$ \\ \hline
         3=11 & $q(1-q)$ & $q(1-q)$ & $q^2$ \\ \hline
    \end{tabular}
    \caption{Probability mass functions of three distributions $\cR_{2,q}(10), \cR_{2,q}(01), \cR_{2,q}(00)$}
    \label{tab:pmf}
\end{table} 

We will upper bound the inner summation. Expanding out the terms, we have
\begin{align*}
& \sum_{j \in \{0,1,2,3\}} \pr{}{\cR_{2,q}(01) = j} \cdot \pr{\vec{f} \gets \cM(m,q) }{\vec{f}=\vec{y}-e_{j+1,4}, \vec{f}\in F } \\
={}& \pr{}{\cR_{2,q}(01) = 0} \cdot \pr{\vec{f} \gets \cM(m,q) }{\vec{f} = (y_1-1,y_2,y_3,y_4) ,\vec{f} \in F }  \\
&+ \pr{}{\cR_{2,q}(01) = 1} \cdot\pr{\vec{f} \gets \cM(m,q) }{\vec{f} = (y_1,y_2-1,y_3,y_4) ,\vec{f} \in F }  \\
&+ \pr{}{\cR_{2,q}(01) = 2} \cdot \pr{\vec{f} \gets \cM(m,q) }{\vec{f} = (y_1,y_2,y_3-1,y_4) ,\vec{f} \in F }  \\
&+ \pr{}{\cR_{2,q}(01) = 3} \cdot \pr{\vec{f} \gets \cM(m,q) }{\vec{f} = (y_1,y_2,y_3,y_4-1) ,\vec{f} \in F }  \stepcounter{equation} \tag{\theequation} \label{eq:bmq-2}
\end{align*}

As displayed in Table \ref{tab:pmf}, note that $$\pr{}{\cR_{2,q}(01) = 0}=\pr{}{\cR_{2,q}(10) = 0} \textrm{ and } \pr{}{\cR_{2,q}(01) = 3}=\pr{}{\cR_{2,q}(10) = 3}.$$ Also, $$\pr{}{\cR_{2,q}(01) = 1}=\pr{}{\cR_{2,q}(10) = 2} \textrm{ and } \pr{}{\cR_{2,q}(01) = 2}=\pr{}{\cR_{2,q}(10) = 1}.$$ By substitution,
\begin{align*}
\eqref{eq:bmq-2} ={}& \pr{}{\cR_{2,q}(10) = 0} \cdot \pr{\vec{f} \gets \cM(m,q) }{\vec{f} = (y_1-1,y_2,y_3,y_4) ,\vec{f} \in F } \\
&+ \pr{}{\cR_{2,q}(10) = 2} \cdot\pr{\vec{f} \gets \cM(m,q) }{\vec{f} = (y_1,y_2-1,y_3,y_4) ,\vec{f} \in F }  \\
&+ \pr{}{\cR_{2,q}(10) = 1} \cdot \pr{\vec{f} \gets \cM(m,q) }{\vec{f} = (y_1,y_2,y_3-1,y_4) ,\vec{f} \in F }  \\
&+ \pr{}{\cR_{2,q}(10) = 3} \cdot \pr{\vec{f} \gets \cM(m,q) }{\vec{f} = (y_1,y_2,y_3,y_4-1) ,\vec{f} \in F }   \stepcounter{equation} \tag{\theequation} \label{eq:bmq-3}
\end{align*}

By combining \eqref{eq:bmq-1} through \eqref{eq:bmq-3} with Claim \ref{clm:good-noise}, we have
\begin{align*}
    \eqref{eq:bmq-1} \leq{}& e^\eps \cdot \left( \sum_{\vec{y}\in Y} \sum_{j \in \{0,1,2,3\}} \pr{}{\cR_{2,q}(10) = j} \cdot \pr{\vec{f} \gets \cM(m,q) }{\vec{f}=\vec{y}-e_{j+1,4} } \right) + \delta \\
    \leq{}& e^\eps \cdot \left( \sum_{\vec{y}\in Y} \pr{}{\cB_{m,q}(10) = \vec{y}} \right) + \delta \\
    ={}& e^\eps \cdot \pr{}{\cB_{m,q}(10) \in Y} + \delta
\end{align*}
which completes the proof.
\end{proof}

\subsection{Robustness of $\cP_\flip$ to Manipulation}
\label{sec:robustness}

As mentioned in the Introduction, \emph{manipulation attacks} have been studied in the local model of privacy \cite{AJL04,MN06,CJG19,CSU19}. We assume there is a coalition of $m$ users are corrupted who send specially crafted messages to skew the output of the protocol. In this section, we adapt this definition to the shuffle model and we upper bound the impact of corrupt users on the estimates produced by $\cP_\flip$. We also show that our protocol's robustness compares favorably with prior work.

A baseline attack against any histogram protocol is to simply run the randomizer on incorrect input. This introduces $m/n$ bias to a single frequency estimate. But we can in fact bound the error of \emph{any} attack against $\cP_\flip$.
\begin{thm}
For any $\eps>0$, $\delta<1/100$, and $n \in \N$, choose $q,k$ as in Theorem \ref{thm:per-bin}. For any input $\vec{x}\in\cX^n_d$, any value $j\in[d]$, and any coalition of $m$ corrupt users $M\subset [n]$, the error of $\cP_\flip$ on $\hist_j$ is $$\frac{1}{n}\cdot \frac{e^\eps+1}{e^\eps-1} \cdot \sqrt{\frac{264}{5} \ln \frac{4}{\delta} \ln 20} \cdot g(k) + \frac{m}{n}\cdot (k+1)\cdot g(k)$$ with $90\%$ probability.
\end{thm}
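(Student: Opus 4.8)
The plan is to split the error $|z_j-\hist_j(\vec x)|$ into a \emph{statistical} part, which is already controlled by Theorem~\ref{thm:per-bin}, and an \emph{adversarial} part, which I will bound deterministically by how much a single message can move one column sum of the analyzer's matrix. Fix the coalition $M\subseteq[n]$ with $|M|=m$ and let $H:=[n]\setminus M$. In an honest run, $\cA_\flip$ receives $nk+n$ rows, of which $(n-m)(k+1)$ come from honest users and their fake users; I will adopt the standard shuffle-model convention that each corrupt user also submits exactly $k+1$ messages (so $\cA_\flip$ still gets $nk+n$ rows), the point being that these $m(k+1)$ rows are arbitrary — possibly chosen as a function of the public randomness, but not of the honest users' private coins. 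Writing the analyzer's output as
\[
z_j \;=\; \underbrace{\tfrac{1}{n(1-2q)}\!\!\sum_{\text{rows from }H}\!\!(y_{i,j}-q)}_{=:A_j}\;+\;\underbrace{\tfrac{1}{n(1-2q)}\!\!\sum_{\text{rows from }M}\!\!(y_{i,j}-q)}_{=:B_j},
\]
the first term $A_j$ depends only on honest randomness and the second term $B_j$ is adversarial.

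Next I would couple the real run with a fully honest run: draw fresh randomness for $m(k+1)$ extra honest-style messages (as if the users in $M$ had run $\cR_\flip$ on their nominal inputs), giving a quantity $\tilde B_j$ of the same shape as $B_j$, and set $\tilde z_j:=A_j+\tilde B_j$. Then $\tilde z_j$ is distributed exactly as the estimate of an honest execution of $\cP_\flip$ on $\vec x\in\cX^n_d$, so Theorem~\ref{thm:per-bin}(b) gives $|\tilde z_j-\hist_j(\vec x)|<\tfrac1n\cdot\tfrac{e^\eps+1}{e^\eps-1}\cdot\sqrt{\tfrac{264}{5}\ln\tfrac4\delta\ln20}\cdot g(k)$ with probability at least $9/10$. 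For the adversarial part, each summand $\tfrac{1}{n(1-2q)}(y_{i,j}-q)$ has $y_{i,j}\in\zo$ and $q<1/2$, hence lies in the width-$\tfrac{1}{n(1-2q)}$ interval $\tfrac{1}{n(1-2q)}[-q,\,1-q]$; recalling that $g(k)=\tfrac{1}{1-2q}$ for the parameter $q$ chosen in Theorem~\ref{thm:per-bin}, summing $m(k+1)$ such terms places both $B_j$ and $\tilde B_j$ in a common interval of width $\tfrac{m(k+1)}{n}g(k)$, so $|B_j-\tilde B_j|\le\tfrac{m(k+1)}{n}g(k)$ with certainty. Since $z_j-\hist_j(\vec x)=(\tilde z_j-\hist_j(\vec x))+(B_j-\tilde B_j)$, the triangle inequality yields the claimed bound; and because the event $\{|\tilde z_j-\hist_j(\vec x)|<\cdots\}$ is contained in the event $\{|z_j-\hist_j(\vec x)|<\cdots+\tfrac{m(k+1)}{n}g(k)\}$ pointwise, the $9/10$ probability carries over, noting that the latter event depends only on the honest randomness so the coupling's auxiliary randomness is harmless.

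I expect the only genuine obstacle to be the \emph{modeling} step rather than any calculation: one has to pin down precisely what a corrupt user is allowed to do (here, inject $k+1$ arbitrary messages, adaptive to public coins but not to honest private coins), set up the coupling so that Theorem~\ref{thm:per-bin} applies verbatim to $\tilde z_j$, and check the event-inclusion bookkeeping so that the success probability is not diluted. Once that scaffolding is in place, the adversarial contribution is the two-line interval estimate above, and the rest is the triangle inequality.
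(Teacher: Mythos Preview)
Your proposal is correct and follows essentially the same route as the paper: couple the corrupted execution with a fully honest one on the same honest-user randomness, invoke Theorem~\ref{thm:per-bin} on the honest estimate, and bound the discrepancy over the $m(k+1)$ corrupt messages deterministically by $\tfrac{m(k+1)}{n(1-2q)}=\tfrac{m(k+1)}{n}g(k)$ before applying the triangle inequality. The paper phrases the per-message bound as $|y^{\mathrm{hon}}_{i,j}-y^{\mathrm{cor}}_{i,j}|\le 1$ rather than via your interval-width argument, but the content is identical.
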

Note that $k=O(1)$ whenever $n=\Omega(\tfrac{1}{\eps^2} \log\tfrac{1}{\delta})$ and recall $g(k)$ approaches 1. In this regime, the above theorem implies that any attack launched by corrupt users is only a constant factor worse than the baseline attack.

\begin{proof}
Define the function $u(i,n) := ((i-1)\textrm{ mod } n)+1$. For $i\in[n]$, let $y_{i,j}$ be the $j$-th bit of the first message produced by user $i$ (the output of $\cR_{d,q}(x_i)$). For $i > n$, let $y_{i,j}$ be the $j$-th bit of the $\lceil i / n\rceil$-th message produced by user $u(i,n)$ (the output of $\cR_{d,q}(0^d)$).

Recall that the analyzer computes $z_j \gets \tfrac1n \sum_{i=1}^{nk+n} \frac{1}{1-2q} \cdot ( y_{i,j} - q)$. Let $z^{\mathrm{hon}}_j, y^{\mathrm{hon}}_{i,j}$ (resp. $z^{\mathrm{cor}}_j, y^{\mathrm{cor}}_{i,j}$) denote the random variables from an honest (resp. corrupted) execution of the protocol. Theorem \ref{thm:maximum} ensures that $$|z^{\mathrm{hon}}_j-\hist_j(\vec{x})| < \frac{1}{n}\cdot \frac{e^\eps+1}{e^\eps-1} \cdot \sqrt{\frac{264}{5} \ln \frac{4}{\delta} \ln 20} \cdot g(k)$$ with probability $9/10$. Via the triangle inequality, will suffice to show that $$|z^{\mathrm{hon}}_j- z^{\mathrm{cor}}_j| < \frac{m}{n}\cdot (k+1)\cdot g(k)$$ with probability $1$.

By construction, $|y^{\mathrm{hon}}_{i,j} - y^{\mathrm{cor}}_{i,j}| \in \zo$ for any $j\in[d]$ and $i \in M$. Also, for all $i\notin M$, $y^{\mathrm{hon}}_{i,j}$ is identically distributed with $y^{\mathrm{cor}}_{i,j}$. This means
\begin{align*}
|z^{\mathrm{hon}}_j-z^{\mathrm{cor}}_j| &= \left| \frac{1}{n} \sum_{ \{i| u(i,n) \in M\}} \frac{1}{1-2q} \cdot (y^{\mathrm{hon}}_{i,j} - y^{\mathrm{cor}}_{i,j}) \right| \\
    &\leq \frac{1}{n} \cdot \sum_{\{i| u(i,n) \in M\}} \frac{1}{1-2q} \\
    &= \frac{m}{n} \cdot (k+1) \cdot \frac{1}{1-2q}
\end{align*}
This concludes the proof, since $g(k)$ is precisely $1/(1-2q)$ where $q$ depends on $k$.
\end{proof}

Now we bound the maximum error. Because the proof essentially generalizes the prior one, we omit it for brevity.
\begin{thm}
For any $\eps>0$, $\delta<1/100$, and $n \in \N$, choose $q,k$ as in Theorem \ref{thm:maximum}. For any input $\vec{x}\in\cX^n_d$ and any coalition of $m$ corrupt users, the maximum error of $\cP_\flip$ is $$\max\left(\frac{1}{n}\cdot \frac{e^\eps+1}{e^\eps-1} \cdot \sqrt{\frac{264}{5} \ln \frac{4}{\delta} \ln 20d} ,~ \frac{2}{n}\cdot \ln 20d \right) \cdot f(k) + \frac{m}{n}\cdot (k+1)\cdot f(k)$$ with $90\%$ probability.
\end{thm}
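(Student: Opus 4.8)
The plan is to mirror the proof of the preceding per-bin robustness theorem, replacing the scalar error bound by an $\ell_\infty$ bound and invoking Theorem~\ref{thm:maximum} in place of Theorem~\ref{thm:per-bin}. As before, I would index the $nk+n$ rows seen by the analyzer so that rows $1,\dots,n$ are the ``real'' messages (one per user) and rows $n+1,\dots,nk+n$ are the fake messages, with row $i>n$ attributed to user $u(i,n):=((i-1)\bmod n)+1$, so that each user $u$ is responsible for exactly the $k+1$ rows $u, n+u, 2n+u,\dots,kn+u$. A corrupt coalition $M\subseteq[n]$ of size $m$ therefore controls exactly the set of rows $\{i : u(i,n)\in M\}$, which has size $m(k+1)$; here the ``$k+1$'' is precisely each corrupt user's message budget, and since the message space is $\zo^d$ we have $y_{i,j}\in\zo$ for every row, honest or corrupt.

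Next I would couple the honest execution (output $\vec z^{\mathrm{hon}}$) and the corrupted execution (output $\vec z^{\mathrm{cor}}$) so that every honest user uses identical randomness in both runs; then the rows indexed by non-corrupt users are literally identical in the two executions. Theorem~\ref{thm:maximum} gives $\norm{\vec z^{\mathrm{hon}} - \hist(\vec x)}_\infty < \max\!\big(\tfrac{1}{n}\cdot\tfrac{e^\eps+1}{e^\eps-1}\sqrt{\tfrac{264}{5}\ln\tfrac4\delta\ln 20d},\,\tfrac2n\ln 20d\big)\cdot f(k)$ with probability $9/10$. For the second ingredient I would bound $\norm{\vec z^{\mathrm{hon}}-\vec z^{\mathrm{cor}}}_\infty$ deterministically: fix any coordinate $j\in[d]$; the honest rows cancel in $z^{\mathrm{hon}}_j-z^{\mathrm{cor}}_j$, and each controlled row contributes a term of magnitude $\tfrac{1}{n}\cdot\tfrac{1}{1-2q}|y^{\mathrm{hon}}_{i,j}-y^{\mathrm{cor}}_{i,j}|\le \tfrac{1}{n}\cdot\tfrac{1}{1-2q}$, whence
\[
|z^{\mathrm{hon}}_j - z^{\mathrm{cor}}_j|\;\le\;\frac1n\sum_{\{i:\,u(i,n)\in M\}}\frac{1}{1-2q}\;=\;\frac{m(k+1)}{n}\cdot\frac{1}{1-2q}\;=\;\frac{m}{n}(k+1)f(k),
\]
using that $f(k)=1/(1-2q)$ for the value of $q$ selected in Theorem~\ref{thm:maximum}. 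Taking the maximum over $j$ gives $\norm{\vec z^{\mathrm{hon}}-\vec z^{\mathrm{cor}}}_\infty\le\tfrac{m}{n}(k+1)f(k)$ with probability $1$. Finally, the triangle inequality for $\norm{\cdot}_\infty$ combines the two bounds; since the second holds almost surely, no additional union bound is needed and the overall failure probability stays $1/10$.

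I do not expect a genuine obstacle here — the content is bookkeeping rather than mathematics. The points requiring care are: (i) attributing exactly $m(k+1)$ rows to the coalition, so the ``$k+1$'' factor in the bound is accounted for correctly; (ii) setting up the coupling so that honest rows are identical in the two executions and hence drop out of the difference; and (iii) quoting the scaling factor $f(k)$ induced by the $\beta=1/10d$ / union-bound choice of $q$ in Theorem~\ref{thm:maximum}, rather than the $g(k)$ of the per-bin theorem. This is exactly why the paper states the proof can be omitted: it is the per-bin argument with $|\cdot|$ upgraded to $\norm{\cdot}_\infty$.
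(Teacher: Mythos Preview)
Your proposal is correct and follows exactly the route the paper intends: it generalizes the per-bin robustness argument to the $\ell_\infty$ setting by coupling honest and corrupted executions, invoking Theorem~\ref{thm:maximum} for the honest error, and bounding $\norm{\vec z^{\mathrm{hon}}-\vec z^{\mathrm{cor}}}_\infty$ deterministically by $\tfrac{m}{n}(k+1)f(k)$ via the $m(k+1)$ controlled rows. The paper explicitly omits this proof as a straightforward generalization, and your write-up supplies precisely that.
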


We note that the resilience of the protocol stems from an implicit assumption that every user---both honest and corrupt---sends exactly $k+1$ messages to the shuffler. In principle, a corrupt user could flood the network with misleading messages (for example, a thousand messages that increment each of the analyzer's $d$ counters). But in practice, the analyzer can enforce\footnote{By placing the verification responsibility on the analyzer, we keep the shuffler lightweight. In particular, the shuffler does not have to keep track of the number of messages sent by users, which can be a sensitive attribute.} the communication constraint via a blind signature scheme \cite{Chaum82}: in a setup stage, each user interacts with the analyzer to sign exactly $k+1$ random strings. Each of the signed strings will serve as a tag of a message sent to the shuffler. The analyzer can limit its computation to messages with signed tags. This extra layer of security will only increase the communication cost by a small factor.

\subsubsection{Comparison with $\cP_\had$}
A highlight of \cite{CSU19} is that two locally private protocols for mean estimation can have the same accuracy absent manipulation but one can be more robust to manipulation than the other. In the same spirit, we show that another shuffle protocol for histograms has roughly the same accuracy as $\cP_\flip$ absent manipulation but is less robust to manipulation.

We will consider $\cP_\had=(\cR_\had,\cA_\had)$ from Ghazi et al. \cite{GGKPV19}. We provide formal pseudocode in Appendix \ref{apdx:attack} but sketch the ideas here. Each of the $k+1$ messages sent by a user is either an encoding of some value $j\in[d]$ or a sample from a distribution that serves to hide the encodings of user values. The encodings are based upon a public Hadamard matrix to optimize communication complexity (total number of bits sent by a user). When it encounters an encoding of $j$, the analyzer increments a counter for $j$. The approximate histogram is constructed by applying a linear function to the counters.

Ghazi et al. give the following result concerning accuracy and privacy. When $\eps = \Theta(1)$, the bound on maximum error is asymptotically identical with that of $\cP_\flip$.
\begin{thm}[From \cite{GGKPV19}]
\label{thm:had}
Fix any $\eps ,\delta<1$ and $n\in\N$. There exist parameter choices $k = \Theta(\tfrac{1}{\eps^2} \log \tfrac{1}{\eps\delta})$ and $\tau = \Theta(\log n)$ such that $\cP_\had$ is $(\eps,\delta)$-shuffle private for inputs from $\cX_d$ and reports an approximate histogram with maximum error $O(\tfrac{\log d}{n} + \tfrac{1}{\eps n} \sqrt{\log \tfrac{1}{\eps\delta} \log d})$ with probability $9/10$.
\end{thm}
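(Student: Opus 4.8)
The statement is quoted essentially verbatim from Ghazi, Golowich, Kumar, Pagh, and Velingker \cite{GGKPV19}, so the honest plan is to invoke their theorem directly; the only thing to check on our end is that the parameter window we use ($\eps=\Theta(1)$) is covered by their analysis. For the sake of a self-contained sketch I would reconstruct their argument at the level of detail of the description above: each user emits one \emph{signal} message that is a (subsampled) Hadamard encoding of its value together with a batch of \emph{blanket} messages drawn from a fixed data-independent distribution, and $\cA_\had$ decodes each received message to a row index, increments the corresponding counter, and outputs a debiased linear function of the counter vector. I would split the proof into privacy and accuracy.

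\emph{Privacy.} The object that must be $(\eps,\delta)$-private is the shuffled multiset of all messages, which I would view as a noisy histogram over the message alphabet (a row index together with a sign bit, of size polynomial in $\tau=\Theta(\log n)$). The blanket messages contributed by the $n$ users make the count in each cell a Binomial random variable; with $k=\Theta(\tfrac{1}{\eps^2}\log\tfrac{1}{\eps\delta})$ blanket messages per user, each such Binomial has mean $\Omega(\tfrac{1}{\eps^2}\log\tfrac{1}{\delta})$, which is precisely the threshold at which shifting a Binomial count by one unit is $(\eps,O(\delta))$-indistinguishable outside a $\delta$-probability event. Changing one user's input changes only that user's signal message, which moves one unit of mass between at most two cells of this histogram. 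So I would first condition on the vector of blanket counts landing in a high-probability ``good'' set (a Chernoff bound, costing $\delta/2$), then bound the likelihood ratio cell by cell, and combine the two affected cells by basic composition (Fact~\ref{fact:comp}) to keep the overall ratio within $e^\eps$. Privacy of the decoded counters, and hence of the analyzer's output, then follows by post-processing (Fact~\ref{fact:post}).

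\emph{Accuracy.} After debiasing, the estimate $\hat h_j$ of $\hist_j(\vec{x})$ is an average over the $n$ users of $\pm 1$-weighted indicators plus a zero-mean contribution from the blanket noise. Conditioned on the blanket counts, $\hat h_j-\hist_j(\vec{x})$ is a sum of independent bounded terms of total variance $O(\tfrac{1}{\eps^2 n^2}\log\tfrac{1}{\eps\delta})$, coming from the per-cell blanket mass normalized by $n^2$, together with a lower-order additive fluctuation at scale $O(\tfrac{\log d}{n})$ from the Hadamard-row subsampling. An additive Chernoff bound with deviation parameter chosen to absorb a union bound over the $d$ bins then gives maximum error $O\!\left(\tfrac{\log d}{n}+\tfrac{1}{\eps n}\sqrt{\log\tfrac{1}{\eps\delta}\,\log d}\right)$; folding the $\delta/2$ concentration-failure probability into the failure budget yields this with probability $9/10$.

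\emph{Main obstacle.} The delicate part is the privacy accounting, where three constraints must be balanced simultaneously: (i) the blanket Binomials must concentrate tightly enough that the ``bad'' event has probability at most $\delta/2$; (ii) the per-cell likelihood ratio must stay below $e^{\eps/2}$ so that composing the (at most) two affected cells stays within $e^\eps$; and (iii) $\tau=\Theta(\log n)$ must be large enough that decoding collisions are negligible while the alphabet stays small enough that each cell's Binomial mean remains $\Omega(\tfrac{1}{\eps^2}\log\tfrac{1}{\delta})$. All three are already balanced in \cite{GGKPV19} for the stated parameters, so in the write-up I would simply cite their theorem; the sketch above is included only to explain why the parameters and the error bound take the form claimed.
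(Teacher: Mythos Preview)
The paper does not prove Theorem~\ref{thm:had}; it is stated purely as a citation from \cite{GGKPV19}, and your plan to invoke their theorem directly is exactly what the paper does.

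That said, your accompanying sketch misdescribes $\cP_\had$ in ways that would not survive scrutiny if written up. Messages of $\cR_\had$ (Algorithm~\ref{alg:r-had}) are tuples in $[2d]^\tau$, not ``a row index together with a sign bit,'' and the alphabet has size $(2d)^\tau$---exponential in $\tau$ (hence polynomial in $n$ when $\tau=\Theta(\log n)$), not ``polynomial in $\tau$.'' More importantly, changing one user's input from $j$ to $j'$ does \emph{not} ``move one unit of mass between at most two cells'': the signal message is uniform over the $d^\tau$ tuples whose coordinates all lie in the $+1$ support of Hadamard row $j$, and swapping $j$ for $j'$ redistributes that mass across exponentially many cells, so a cell-by-cell likelihood-ratio argument over the raw message histogram does not go through as you describe. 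Since you correctly intend only to cite the result, none of this affects the paper itself, but you should drop or repair the sketch if you plan to include it.
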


In both $\cP_\flip$ and $\cP_\had$, $m$ corrupt users can only shift counters by an additive factor of $m(k+1)$. But because $k$ is larger in $\cP_\had$ than in $\cP_\flip$, each corrupted user has a greater impact in the former protocol than the latter.
\begin{clm}
\label{clm:had-attack}
Choose $k,\tau$ as in Theorem \ref{thm:had}. If there is a coalition of $m<n$ corrupt users $M\subset [n]$, then for any target value $j\in[d]$ there is an input $\vec{x}$ such that $\cP_\had$ produces an estimate of $\hist_j(\vec{x})$ with bias $\tfrac{m}{n} \cdot (k+1) = \Omega( \tfrac{m}{n}\cdot \tfrac{1}{\eps^2} \log \tfrac{1}{\eps\delta})$.
\end{clm}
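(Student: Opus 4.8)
The plan is to exhibit a single concrete manipulation and show it biases the estimate of $\hist_j$ by the claimed amount. We may assume $d \ge 2$ (for $d=1$ the histogram is trivial). Fix a value $j' \in [d]\setminus\{j\}$, let $\vec{x}\in\cX^n_d$ be the input in which \emph{every} user --- honest and corrupt alike --- holds $e_{j',d}$, so $\hist_j(\vec{x}) = 0$, and let each of the $m$ corrupt users send all $k+1$ of its messages as valid encodings of the target $j$. This is a legal strategy under the convention that every user submits exactly $k+1$ messages, and the encodings are publicly computable because the encoding scheme of $\cP_\had$ is built from a public Hadamard matrix. The argument then compares this corrupted execution against the honest execution on the same $\vec{x}$.

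Recall that the analyzer of $\cP_\had$ maintains counters that are incremented upon receiving messages of the appropriate type and reconstructs each frequency by a fixed linear map; since the resulting estimator is unbiased and each honest user adds $\Theta(1)$ to the counters, the estimate of coordinate $j$ has the form $z_j = \tfrac{1}{n}(c_j - b)$, where $c_j$ counts the received messages that decode to $j$ and $b$ is a de-biasing constant that does not depend on the input. It therefore suffices to track $\ex{}{c_j}$. In the honest execution each user's ``true'' message decodes to $j' \neq j$ (so never touches $c_j$) while each of its $k$ noise messages decodes to a (near-)uniform element of $[d]$ and hence increments $c_j$ with probability at most $1/d \le 1/2$; thus $\ex{}{c^{\mathrm{hon}}_j} \le nk/d$, and unbiasedness together with $\hist_j(\vec{x})=0$ forces $b = \ex{}{c^{\mathrm{hon}}_j}$. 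In the corrupted execution the $m$ corrupt users contribute \emph{exactly} $m(k+1)$ to $c_j$, while the $n-m$ honest users contribute an amount $C$ with $\ex{}{C} \ge \ex{}{c^{\mathrm{hon}}_j} - mk/d$ (deleting $m$ users' noise can remove at most $mk/d$ from the expected count). Combining these,
\[
\ex{}{z^{\mathrm{cor}}_j} - \hist_j(\vec{x}) \;=\; \tfrac{1}{n}\bigl(\ex{}{c^{\mathrm{cor}}_j} - b\bigr) \;\ge\; \tfrac{1}{n}\Bigl(m(k+1) - \tfrac{mk}{d}\Bigr) \;\ge\; \tfrac{m(k+1)}{2n},
\]
so $\cP_\had$'s estimate of $\hist_j(\vec{x})$ has bias at least $\tfrac{m(k+1)}{2n}$, i.e. $\Theta(\tfrac{m}{n}(k+1))$. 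Substituting the value $k = \Theta(\tfrac{1}{\eps^2}\log\tfrac{1}{\eps\delta})$ supplied by Theorem \ref{thm:had} turns this into bias $\Omega\bigl(\tfrac{m}{n}\cdot\tfrac{1}{\eps^2}\log\tfrac{1}{\eps\delta}\bigr)$, which is the claim.

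The two expectation computations are routine; the step I would be most careful about is establishing that a noise message of $\cP_\had$ hits any fixed counter $c_j$ with probability bounded away from $1$, so that the honest noise contributes only a lower-order term that the de-biasing essentially absorbs rather than cancelling the $m(k+1)$ injected by the attack. Confirming this requires reading the noise distribution off the pseudocode in Appendix \ref{apdx:attack}; if the per-message hitting probability is some $p_0<1$ instead of $1/d$, the display above holds with $\tfrac12$ replaced by $1-p_0$, leaving the $\Omega(\cdot)$ unchanged. A minor secondary point is that the analyzer's linear map has slope exactly $1/n$, which follows from its being an unbiased estimator of a frequency whose counter has expectation linear in the number of users.
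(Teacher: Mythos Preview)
Your attack and overall structure match the paper's proof exactly: take $\vec{x}$ with $\hist_j(\vec{x})=0$, have each corrupt user send $k+1$ messages guaranteed to lie in $\{\hat{j}: h_{j,\hat{j}}=1\}$, and compute the resulting expectation of $z_j$.

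Three details need correcting, two of which you flagged yourself. First, the analyzer's linear map has slope $\tfrac{1}{n(1-2^{-\tau})}$, not $\tfrac{1}{n}$. Second, a noise message (each of its $\tau$ coordinates uniform on $[2d]$) lands entirely in $\{\hat{j}:h_{j,\hat{j}}=1\}$ with probability $2^{-\tau}$, not $1/d$, since that set has size $d$. Third---and this is the one place your write-up is actually wrong rather than hedged---an honest user's ``true'' message encoding $j'\neq j$ \emph{does} touch $c_j$, also with probability $2^{-\tau}$: by Hadamard orthogonality, $|\{\hat{j}:h_{j,\hat{j}}=1\}\cap\{\hat{j}:h_{j',\hat{j}}=1\}|=d/2$, so each of the $\tau$ coordinates drawn uniformly from the $j'$-set falls in the $j$-set with probability $1/2$. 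With these three corrections your inequality chain collapses to an equality: the expected honest contribution removed is $m(k+1)2^{-\tau}$, and
\[
\ex{}{z^{\mathrm{cor}}_j}=\frac{1}{n(1-2^{-\tau})}\Bigl(m(k+1)-m(k+1)\,2^{-\tau}\Bigr)=\frac{m}{n}(k+1),
\]
which is the exact value stated in the claim rather than merely an $\Omega(\cdot)$ lower bound.
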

We defer the proof to Appendix \ref{apdx:attack} for space.

\subsection{Approximating Top-$t$ from $\cP_\flip$}
\label{sec:heavy-hitters}

Given an approximate histogram---as guaranteed by $\cP_\flip$---one can easily approximate the top-$t$ items: output the top-$t$ in the approximate histogram. If the maximum error is $\alpha/2$, then the rank of elements with frequency $< \hist_{[t]}(\vec{x})-\alpha$ in the approximate histogram cannot exceed $t$. Thus, the following is immediate from our earlier results.
\begin{coro}
\label{coro:heavy-hitters-1}
For any $\vec{x}\in \cX^n_d$, if we compute $\vec{z} \gets \cP_\flip(\vec{x})$ (using the same parameters $k,q$ as in Theorem \ref{thm:maximum}), then $\mathrm{top}_t(\vec{z})$ $\alpha$-approximates the top-$t$ items in $\vec{x}$ with probability $\geq 9/10$, where $$\alpha = \max\left(\frac{1}{n}\cdot \frac{e^\eps+1}{e^\eps-1} \cdot \sqrt{\frac{1056}{5} \ln \frac{4}{\delta} \ln 20d} ,~ \frac{4}{n}\cdot \ln 20d \right) \cdot f(k).$$
\end{coro}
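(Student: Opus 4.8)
The plan is to combine the maximum-error guarantee of Theorem \ref{thm:maximum} with a short rank-counting argument, essentially as indicated in the paragraph preceding the statement. First I would note that the quantity $\alpha$ in the corollary is exactly twice the maximum-error bound of Theorem \ref{thm:maximum}: the term $\sqrt{\tfrac{1056}{5}\ln\tfrac4\delta\ln 20d}$ equals $2\sqrt{\tfrac{264}{5}\ln\tfrac4\delta\ln 20d}$, the term $\tfrac4n\ln 20d$ equals $2\cdot\tfrac2n\ln 20d$, and the scaling factor $f(k)$ is unchanged. Hence, running $\cP_\flip$ with the parameters $k,q$ from Theorem \ref{thm:maximum} and setting $\vec{z}\gets\cP_\flip(\vec{x})$, we get $\norm{\vec{z}-\hist(\vec{x})}_\infty < \alpha/2$ with probability at least $9/10$; I would condition on this event for the rest of the argument.

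Next I would check the two requirements of Definition \ref{def:heavy-hitter} for the candidate set $C:=\mathrm{top}_t(\vec{z})$. The cardinality condition $|C|=t$ is immediate, since the ranks $\rank_j(\vec{z})$ range over all of $[d]$, so exactly $t$ of them are at most $t$. For the accuracy condition, fix any $j\in C$ and suppose toward a contradiction that $\hist_j(\vec{x})\le\hist_{[t]}(\vec{x})-\alpha$. Let $S:=\mathrm{top}_t(\hist(\vec{x}))$ be the true top-$t$ set, so $\hist_{j'}(\vec{x})\ge\hist_{[t]}(\vec{x})$ for every $j'\in S$. Since $\alpha>0$, the assumption forces $\hist_j(\vec{x})<\hist_{[t]}(\vec{x})$, hence $j\notin S$. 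Applying $\norm{\vec{z}-\hist(\vec{x})}_\infty<\alpha/2$ to both sides, every $j'\in S$ satisfies $z_{j'}>\hist_{[t]}(\vec{x})-\alpha/2$, while $z_j<\hist_j(\vec{x})+\alpha/2\le\hist_{[t]}(\vec{x})-\alpha/2$. Thus the $t$ indices of $S$ are all distinct from $j$ and all have strictly larger $\vec{z}$-value, which forces $\rank_j(\vec{z})\ge t+1$, contradicting $j\in\mathrm{top}_t(\vec{z})$. Therefore $\hist_j(\vec{x})>\hist_{[t]}(\vec{x})-\alpha$ for all $j\in C$, and the corollary follows once we remove the conditioning.

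I do not expect a real obstacle: the argument is just a union of an error bound with a counting step. The only points that need a little care are (i) matching constants, so that ``maximum error $\alpha/2$'' lines up precisely with the Theorem \ref{thm:maximum} bound, and (ii) handling ties in $\rank$ and $\mathrm{top}_t$ consistently --- but the contradiction above uses only that $|S|=t$ and $j\notin S$, so whatever fixed tie-breaking rule underlies $\rank$ is harmless. (A symmetric argument would additionally show the true heavy hitters are recovered, but the stated $\alpha$-approximation guarantee needs only the direction above.)
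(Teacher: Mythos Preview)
Your proposal is correct and follows essentially the same approach as the paper: the paper gives only a one-sentence sketch (``if the maximum error is $\alpha/2$, then the rank of elements with frequency $< \hist_{[t]}(\vec{x})-\alpha$ in the approximate histogram cannot [be at most] $t$'') and declares the corollary immediate, while you spell out the constant-matching and the rank-counting contradiction in detail. Your treatment of ties is more careful than the paper's, but the underlying argument is identical.
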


\begin{coro}
\label{coro:heavy-hitters-2}
For any $\vec{x}\in \cX^n_d$ and parameters $k,q$ such that $\frac{1}{nk+n} \ln 20d \leq q < 1/2$, if we compute $\vec{z} \gets \cP_\flip(\vec{x})$, then $\mathrm{top}_t(\vec{z})$ $\alpha$-approximates the top-$t$ items in $\vec{x}$ with probability $\geq 9/10$, where $$\alpha = 4 \sqrt{\frac{k+1}{n} q(1-q) \ln 20d } \cdot \left(\frac{1}{1-2q} \right)$$
\end{coro}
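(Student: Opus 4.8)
The plan is to reduce to the $\ell_\infty$ accuracy guarantee already established in Corollary~\ref{coro:acc-max} and then make the elementary observation that an $\ell_\infty$ error of at most $\tfrac{\alpha}{2}$ cannot change which coordinates land among the top $t$ by more than the slack $\alpha$.

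First I would note that the hypotheses on $k,q$ here are exactly those of Corollary~\ref{coro:acc-max}. Applying that corollary, with probability at least $9/10$ the output $\vec z \gets \cP_\flip(\vec x)$ satisfies $\norm{\vec z - \hist(\vec x)}_\infty \le \tfrac{\alpha}{2}$, since $\tfrac{\alpha}{2} = 2\sqrt{\tfrac{k+1}{n} q(1-q)\ln 20d}\cdot\tfrac{1}{1-2q}$ is precisely the error bound appearing in Corollary~\ref{coro:acc-max}. I would condition on this event; everything below is then deterministic.

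Next I would set $S := \mathrm{top}_t(\hist(\vec x))$, so $|S| = t$ and $\hist_{j'}(\vec x) \ge \hist_{[t]}(\vec x)$ for every $j' \in S$. For any index $j$ with $\hist_j(\vec x) < \hist_{[t]}(\vec x) - \alpha$ we get $z_j \le \hist_j(\vec x) + \tfrac{\alpha}{2} < \hist_{[t]}(\vec x) - \tfrac{\alpha}{2}$, whereas for every $j' \in S$ we have $z_{j'} \ge \hist_{j'}(\vec x) - \tfrac{\alpha}{2} \ge \hist_{[t]}(\vec x) - \tfrac{\alpha}{2} > z_j$. Hence at least $t$ coordinates of $\vec z$ strictly exceed $z_j$, so $\rank_j(\vec z) > t$ and $j \notin \mathrm{top}_t(\vec z)$. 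Taking the contrapositive, every $j \in \mathrm{top}_t(\vec z)$ has $\hist_j(\vec x) \ge \hist_{[t]}(\vec x) - \alpha$; together with $|\mathrm{top}_t(\vec z)| = t$ this is exactly the condition of Definition~\ref{def:heavy-hitter}, i.e.\ $\mathrm{top}_t(\vec z)$ $\alpha$-approximates the top-$t$ items of $\vec x$.

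I do not expect a real obstacle here: the only thing to get right is the bookkeeping that $\alpha$ is chosen to be exactly twice the max-error bound of Corollary~\ref{coro:acc-max}, so that one copy of $\tfrac{\alpha}{2}$ buffers the true top-$t$ coordinates and the other buffers the rejected ones. The minor cosmetic points are the handling of ties in the definition of $\rank$ (fix any consistent tie-break so that $|\mathrm{top}_t(\cdot)|=t$) and the strict-versus-nonstrict form of the inequality in Definition~\ref{def:heavy-hitter}, which can be absorbed by running the argument with an $\alpha'$ infinitesimally below $\alpha$. Finally, I note that Corollary~\ref{coro:heavy-hitters-1} follows from the identical separation argument, with the maximum-error bound of Theorem~\ref{thm:maximum} (whose $\alpha$ is again twice the per-instance error) in place of Corollary~\ref{coro:acc-max}.
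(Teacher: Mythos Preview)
Your proposal is correct and follows exactly the paper's approach. The paper's own argument is the one-line observation preceding the corollaries: ``If the maximum error is $\alpha/2$, then the rank of elements with frequency $< \hist_{[t]}(\vec{x})-\alpha$ in the approximate histogram cannot exceed $t$''; you have simply written out this separation argument in full, invoking Corollary~\ref{coro:acc-max} for the $\alpha/2$ bound on $\ell_\infty$ error, which is precisely what the paper intends.
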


    \section{Reducing Communication Complexity}
\label{sec:reduce-comm}

Although $\cP_\flip$ has a constant message complexity for a large range of $n$, each message is a binary string of length $d$. The communication complexity therefore grows with the dimension. In this section, we describe how to mitigate the impact of large dimension. 

\subsection{Replacing Binary Strings with Lists}
In this subsection, we use the observation that the messages are binary strings that are likely sparse so that they can be equated with a short list of indices. More precisely, let $\cR_{\flip 2}$ be the local randomizer that, on input $x$, computes messages from $\cR_{\flip }(x)$ but replaces each binary string it creates with a list of the indices that contain bit 1. Let $\cA_{\flip 2}$ be the analyzer that converts each of the messages output by the shuffler back into a binary string and then runs $\cA_{\flip }$.

\begin{thm}
\label{thm:replacement}
If parameters $k,q$ are chosen in the same manner as in Theorem \ref{thm:maximum}, then $\cP_{\flip 3}=(\cR_{\flip 3}, \cA_{\flip 3})$ has the same number of messages and accuracy as $\cP_{\flip }$ but now the expected length of each message is $\leq \log_2 d \cdot (1 + dq) = O( \log d ( 1 + \tfrac{d}{\eps^2 n}  \log \tfrac{1}{\delta} + \tfrac{d}{nk}\log d ) )$ bits.
\end{thm}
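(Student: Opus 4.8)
The plan is to observe that $\cP_{\flip 3}$ is just $\cP_\flip$ with a lossless, sparsity-exploiting re-encoding of each message, and then to bound the size of that encoding. Concretely, $\cR_{\flip 3}$ runs $\cR_\flip$ and replaces each $d$-bit string it would output by the ascending list of the coordinates equal to $1$, while $\cA_{\flip 3}$ expands each received list back into the unique $d$-bit string it encodes (possible since $d$ is public) and then runs $\cA_\flip$. Because $d$ is fixed, this index-list map is a bijection on the message alphabet; applying it message-by-message therefore commutes with the shuffler (applying a fixed per-message function and then permuting is the same as permuting and then applying it), so $\cS\circ\cR_{\flip 3}^n$ is a deterministic, injective post-processing of $\cS\circ\cR_\flip^n$. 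Fact~\ref{fact:post} then gives that $\cP_{\flip 3}$ inherits the $(\eps,\delta)$-shuffle privacy established for $\cP_\flip$ in Theorem~\ref{thm:maximum}, and injectivity means $\cA_{\flip 3}$ reconstructs exactly the multiset of $d$-bit strings that $\cA_\flip$ sees in $\cP_\flip$, so the two protocols' outputs are identically distributed; the accuracy bound of Theorem~\ref{thm:maximum} (equivalently Claim~\ref{clm:acc-per-bin}) carries over verbatim, and the message complexity is still $k+1$ per user.

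Next I would bound the expected bit-length of a single message. A message is a list of indices in $[d]$, each storable in $\lceil\log_2 d\rceil$ bits, so its length is $\lceil\log_2 d\rceil$ times the number of $1$-bits in the corresponding randomized-response string. There are two cases. A fake message is $\cR_{d,q}(0^d)$, whose $d$ bits are i.i.d.\ $\Ber(q)$, giving expected weight $dq$. A user's real message is $\cR_{d,q}(e_{j,d})$: coordinate $j$ is $1$ with probability $1-q$ and each of the other $d-1$ coordinates is $1$ with probability $q$, so the expected weight is $(1-q)+(d-1)q = 1+(d-2)q \le 1+dq$. As $dq\le 1+dq$ too, every message — real or fake — has expected weight at most $1+dq$, hence expected length at most $\lceil\log_2 d\rceil\,(1+dq)$, i.e.\ the bound $\log_2 d\cdot(1+dq)$ of the theorem (up to the harmless ceiling).

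To get the closed form, substitute the $q$ from Theorem~\ref{thm:maximum}: $q=\max(\hat q,\tilde q)$ with $\tilde q = \tfrac{1}{n(k+1)}\ln 20d$ and $\hat q\in(0,\tfrac12)$ solving $q(1-q)=\tfrac{33}{5nk}(\tfrac{e^\eps+1}{e^\eps-1})^2\ln\tfrac4\delta$. Here $\tilde q = O(\tfrac{\log d}{nk})$, while $\hat q<\tfrac12$ forces $1-\hat q>\tfrac12$ and hence $\hat q<2\hat q(1-\hat q)=\tfrac{66}{5nk}(\tfrac{e^\eps+1}{e^\eps-1})^2\ln\tfrac4\delta = O(\tfrac{1}{nk\eps^2}\log\tfrac1\delta)$, using $\tfrac{e^\eps+1}{e^\eps-1}=\Theta(1/\eps)$ for $\eps=O(1)$. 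Thus $dq = O(\tfrac{d}{nk\eps^2}\log\tfrac1\delta+\tfrac{d\log d}{nk})$, and bounding $\tfrac1k\le 1$ on the first term and multiplying $1+dq$ by $\lceil\log_2 d\rceil$ yields $O(\log d\,(1+\tfrac{d}{\eps^2 n}\log\tfrac1\delta+\tfrac{d}{nk}\log d))$ as claimed.

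There is no real obstacle in this argument: the only things to be careful about are (i) verifying that applying a fixed bijection to every message both preserves $(\eps,\delta)$-privacy (post-processing) and leaves the analyzer's view — hence the accuracy — unchanged (injectivity), and (ii) making the expected-weight bound $1+dq$ hold uniformly over the real message and the $k$ fake ones before plugging in the concrete $q$ of Theorem~\ref{thm:maximum}.
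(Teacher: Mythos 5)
Your proposal is correct and follows essentially the same route as the paper's proof: identify each message's Hamming-weight distribution ($\Bin(d,q)$ for fake messages, $\Ber(1-q)+\Bin(d-1,q)$ for the real one), bound its expectation by $1+dq$, multiply by $\log_2 d$ bits per index, and substitute the $q$ from Theorem~\ref{thm:maximum}. You are slightly more explicit than the paper in spelling out that the index-list re-encoding is a per-message bijection (so privacy follows by post-processing and accuracy is preserved verbatim), but the paper treats this as immediate and the substance is identical.
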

\begin{proof}
A message is generated from either $\cR_{d,q}(0^{d})$ or $\cR_{d,q}(x \in \cX_{d})$. By construction, $\cR_{d,q}(0^{d})$ produces a string where each bit is drawn from $\Ber(q)$. This means the number of 1s is drawn from $\Bin(d,q)$. Meanwhile, the number of 1s generated from executing $\cR_{d,q}(x \in \cX_{d})$ is a sample from $\Ber(1-q) + \Bin(d-1,q)$. 

Recall we set $q$ to be $q = O(\tfrac{1}{\eps^2 n} \log \tfrac{1}{\delta} + \tfrac{1}{nk}\log d)$. This means the expected number of 1s in any message is $O(\tfrac{d}{\eps^2 n} \log \tfrac{1}{\delta} + \tfrac{d}{nk}\log d)$. And we need $\log_2 d$ bits to represent each index.
\end{proof}

\subsection{An Adaptation of Count-Min}
The change-of-representation in the preceding section is powerful when $n$ approaches (or exceeds) $d$. This subsection describes a method to reduce the communication complexity when $n$ is not so large, at the price of logarithmic message complexity. The new protocol, which we call $\cP_{\flip 3}$, uses the randomizer and analyzer of $\cP_{\flip 2}$ as black boxes. Based upon the Count-Min technique from the sketching literature, $\cP_{\flip 3}$ is an instance of a general method of transforming any shuffle protocol for histograms.

The pseudocode for the randomizer and analyzer is given in Algorithms \ref{alg:r-flip3} and \ref{alg:a-flip3}, respectively. The heart of the transformation is hashing the universe $[d]$ to $[\hat{d}]$. If an element $j$ experiences no collisions, note that its frequency in the hashed dataset is the same as in the original dataset. Otherwise, the frequency is an overestimate. When there are many hash functions, it is likely that there is some hash function where $j$ experiences no collisions; taking the minimum over the frequencies in the hashed datasets would recover the original frequency. We execute $\cP_{\flip 2}$ once for each hashed datset to obtain estimates of the frequencies.

\begin{algorithm}
\caption{$\cR_{\flip 3}$ a local randomizer for histograms}
\label{alg:r-flip3}

\KwIn{$x\in \cX_d$; parameters $V,\hat{d},k\in\N, q\in (0,1/2)$}
\KwOut{$\vec{y}\in ([V]\times [\hat{d}]^* )^*$}

Obtain hash functions $\{h^{(v)} : \cX_d \to \cX_{\hat{d}}\}$ from public randomness.

Initialize $\vec{y}$ to the empty vector

\For{$v\in[V]$}{

	Compute messages $\vec{y}^{(v)} \gets \cR_{\flip 2} (h^{(v)}(x))$ using dimension $\hat{d}$

	\For{$y \in \vec{y}^{(v)}$} {
		Append labeled message $(v,y)$ to $\vec{y}$	
	}
}

\Return{$\vec{y}$}
\end{algorithm}

\begin{algorithm}
\caption{$\cA_{\flip 3}$ an analyzer for histograms}
\label{alg:a-flip3}

\KwIn{$\vec{y}\in ([V]\times [\hat{d}]^* )^*$; parameters $V,\hat{d},k\in\N, q\in(0,1/2)$}
\KwOut{$\vec{z}\in\R^d$}

Obtain hash functions $\{h^{(v)} : \cX_d \to \cX_{\hat{d}}\}$ from public randomness.

\For{$j\in[d]$}{
	$z_j\gets \infty$
}

\For{$v\in[V]$}{
	Initialize $\vec{y}^{(v)}\gets \emptyset$
	
	\For{$(v',y) \in \vec{y}$} {
		Append message $y$ to $\vec{y}^{(v)}$ if label $v'$ matches $v$
	}
	
	Compute $\hat{z}^{(v)} \gets \cA_{\flip 2}(\vec{y}^{(v)})$ using dimension $\hat{d}$
	
	\For{$j\in[d]$}{
		$\hat{j} \gets h^{(v)}(j)$
	
		$z_j\gets \min(z_j,\hat{z}^{(v)}_{\hat{j}})$
	}
}

\Return{$\vec{z}$}
\end{algorithm}

We first analyze the protocol in terms of the parameter $V$, which determines the number of hash functions and protocol repetitions. We will choose a value for $V$ later in the section.

\begin{clm}
\label{clm:count-min}
Fix any $\eps = O(1)$, $\delta < 1/100$, and number of users $n \in \N$. If $\hat{d}\gets \lceil n\cdot (100d)^{1/V} \rceil$ and $k > \max \left( \tfrac{134}{5n} (\tfrac{e^\eps+1}{e^\eps-1})^2 \ln \tfrac{4}{\delta}, \tfrac{2}{n}\ln20\hat{d}V -1 \right)$, then there is a choice of parameter $q<1/2$ such that $\cP_{\flip2}$ has the following properties:
\begin{enumerate}[a.]
\item
    Each user sends $V\cdot (k + 1 )$ messages, each consisting of $O(\log V + \log \hat{d} ( 1 + \tfrac{\hat{d}}{\eps^2 n}  \log \tfrac{1}{\delta} + \tfrac{\hat{d}}{nk}\log \hat{d} ) )$ bits in expectation.
\item
	$\cP_{\flip2}$ is $\left( \eps (e^{\eps}-1)\cdot V + \eps \cdot \sqrt{2V \log \tfrac{1}{V\delta}}, 2V\delta \right)$-shuffle private for inputs from $\cX_d$
\item
	For any $\vec{x}\in \cX^n_d$, $\cP_{\flip2}(\vec{x})$ reports approximate histogram $\vec{z}$ such that the maximum error is
	\begin{align*}
	\norm{\vec{z} - \hist(\vec{x})}_\infty &= O\left( \frac{1}{\eps n} \sqrt{ \log \frac{1}{\delta} \log \hat{d}V } + \frac{ \log \hat{d}V}{n} \right) 
	\end{align*}
	with probability $\geq 9/10-(1/100)^V$.
\end{enumerate}
\end{clm}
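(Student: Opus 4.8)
The plan is to obtain all three parts from facts already established for $\cP_{\flip 2}$, treating the hashing as a pre-processing step and the coordinate-wise minimum over rounds as post-processing. Throughout, fix $q=\max(\hat q,\tilde q)$ exactly as in Theorem~\ref{thm:maximum}, but with $d$ replaced by $\hat d$ and the confidence budget split over the $\hat dV$ (bin, round) pairs: $\hat q\in(0,1/2)$ solves $q(1-q)=\tfrac{33}{5nk}(\tfrac{e^\eps+1}{e^\eps-1})^2\ln\tfrac4\delta$ and $\tilde q=\tfrac{1}{nk+n}\ln(20\hat dV)$. The two assumed lower bounds on $k$ are precisely what make $\hat q$ exist with $\tfrac{1}{1-2q}=O(1)$ and $\tilde q<1/2$; since $x\mapsto x(1-x)$ is increasing on $(0,1/2)$, $q(1-q)\ge \hat q(1-\hat q)$, so $q$ satisfies the hypothesis of Claim~\ref{clm:privacy}.

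Part~\emph{a} follows from Theorem~\ref{thm:replacement}: $\cR_{\flip 3}$ runs $\cR_{\flip 2}$ on dimension $\hat d$ once for each of the $V$ hash functions, and each such call emits $k+1$ messages, for $V(k+1)$ in total; applying Theorem~\ref{thm:replacement} with dimension $\hat d$, each message is a list of indices of expected length $O\big(\log\hat d\,(1+\tfrac{\hat d}{\eps^2 n}\log\tfrac1\delta+\tfrac{\hat d}{nk}\log\hat d)\big)$, and prepending the $O(\log V)$-bit round label gives the stated bound. For part~\emph{b}, note that for any fixing of the public hashes, $x\mapsto h^{(v)}(x)$ carries neighbors in $\cX_d^n$ to datasets in $\cX_{\hat d}^n$ that are equal or neighboring; since $\cP_{\flip 2}$ on dimension $\hat d$ with parameters $k,q$ is $(\eps,\delta)$-shuffle private (Claim~\ref{clm:privacy}, which is dimension-free, using that the list representation is a reversible re-encoding), each round-$v$ mechanism $\cS\circ(\cR_{\flip 2}\circ h^{(v)})^n$ is $(\eps,\delta)$-differentially private for inputs from $\cX_d$, and remains so after averaging over the public randomness. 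Advanced composition (Fact~\ref{fact:adv_comp}) over the $V$ rounds makes the tuple of these mechanisms $\big(\eps(e^\eps-1)V+\eps\sqrt{2V\log\tfrac1{V\delta}},\,2V\delta\big)$-differentially private, and since each message carries its round label, a uniform shuffle of all $nV(k+1)$ messages is a post-processing of that tuple (re-attach labels, pool, re-shuffle); Fact~\ref{fact:post} transfers the guarantee to $\cS\circ\cR_{\flip 3}^{\,n}$.

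Part~\emph{c} is the Count--Min argument. Conditioning on the public hashes, apply Claim~\ref{clm:acc-per-bin} with $\beta=\tfrac{1}{10\hat dV}$ to every (bin, round) pair and union-bound: with probability $\ge 9/10$, every round-$v$ estimate satisfies $|\hat z^{(v)}_{\hat j}-\hist_{\hat j}(h^{(v)}(\vec x))|\le\gamma$ for all $\hat j$, where $\gamma:=2\sqrt{\tfrac{k+1}{n}q(1-q)\ln(20\hat dV)}\cdot\tfrac1{1-2q}=O\big(\tfrac1{\eps n}\sqrt{\log\tfrac1\delta\log\hat dV}+\tfrac{\log\hat dV}{n}\big)$. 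Fix $j\in[d]$. A user's one-hot value lands in bucket $h^{(v)}(j)$ only if it equals $j$ or collides with $j$ under $h^{(v)}$, so $\hist_{h^{(v)}(j)}(h^{(v)}(\vec x))=\hist_j(\vec x)+C_v(j)$ for some $C_v(j)\ge0$; hence on the good event $z_j=\min_v\hat z^{(v)}_{h^{(v)}(j)}$ lies in $[\hist_j(\vec x)-\gamma,\ \hist_j(\vec x)+\gamma+\min_v C_v(j)]$. It remains to force $\min_v C_v(j)=0$: since at most $n$ distinct values occur in $\vec x$, a fixed round is collision-free for $j$ with probability $\ge 1-n/\hat d\ge 1-(100d)^{-1/V}$; by independence of the $V$ hashes, and a union bound over the $d$ choices of $j$, the choice $\hat d=\lceil n(100d)^{1/V}\rceil$ bounds the probability that some $j$ has no collision-free round by $(1/100)^V$. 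On the complement, $\min_vC_v(j)=0$ for every $j$, so $\norm{\vec z-\hist(\vec x)}_\infty\le\gamma$; a final union bound over the two failure events, followed by integrating over the public randomness, yields the claimed error bound with probability $\ge 9/10-(1/100)^V$.

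I expect the delicate points to all sit in part~\emph{c}: aligning the per-round confidence budget with the hypothesis on $k$ (which is why $\ln(20\hat dV)$ appears), using that on a collision-free round the hashed frequency equals the true frequency \emph{exactly} rather than only up to $\gamma$, and carrying out the collision union bound carefully against the prescribed $\hat d$ to reach the target failure probability $(1/100)^V$. Parts~\emph{a} and~\emph{b} amount to bookkeeping on top of Theorems~\ref{thm:replacement} and~\ref{thm:maximum}, advanced composition, and post-processing.
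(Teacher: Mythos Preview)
Your proposal is correct and follows essentially the same route as the paper: choose $q$ as in Theorem~\ref{thm:maximum} with $\ln 20d$ replaced by $\ln 20\hat dV$, get part~\emph{a} from Theorem~\ref{thm:replacement}, part~\emph{b} from advanced composition (Fact~\ref{fact:adv_comp}), and part~\emph{c} by combining the Count--Min collision bound $(n/\hat d)^V$ with Claim~\ref{clm:acc-per-bin} at $\beta=1/(10\hat dV)$. Your write-up is, if anything, a bit more careful than the paper's---you spell out the post-processing step that carries the composed guarantee to the single shuffle, and the explicit $C_v(j)$ bookkeeping makes the min-over-rounds argument cleaner than the paper's informal ``the privatized count at $v^*$ is the minimum.''
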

\begin{proof}
We will choose $q$ in much the same way as Theorems \ref{thm:maximum} and \ref{thm:replacement}. The sole modification is that we change the $\ln 20d$ term to $\ln 20 \hat{d} V$.

The protocol executes $\cP_{\flip 2}$ exactly $V$ times using the hashed dimension $\hat{d}$. So the number of messages is simply $V \cdot (k+1)$. Each message is generated via $\cR_{\flip 2}$ and then labeled by the execution number $v$, so Part \textit{a} is immediate from Theorem \ref{thm:replacement}. Meanwhile, Part \textit{b} follows directly from advanced composition (Fact \ref{fact:adv_comp}).

To prove Part \textit{c}, we first analyze the randomness from hashing and consider privacy noise later. For any $j\in[d]$, let $E_j$ denote the event that there is at least one hash function where $j$ experiences no collisions with a user value $j'\neq j$. Formally, $\exists v^* ~ \forall j'\in \vec{x},j'\neq j~ h^{(v^*)}(j)\neq h^{(v^*)}(j')$. We will now bound the probability that $E_j$ does not occur.
\begin{align*}
\pr{\vec{h}}{\neg E_j} &= \pr{\vec{h}}{\forall v ~ \exists j' \in \vec{x}~ h^{(v)}(j)= h^{(v)}(j')} \\
	&= \pr{\vec{h}}{\exists j' \in \vec{x}~ h^{(v)}(j)= h^{(t)}(j')}^V \\
	&\leq \left( n\cdot \pr{\vec{h}}{h^{(v)}(j)= h^{(v)}(j')} \right)^V \\
	&= (n/\hat{d})^V = (1/100)^V \cdot \frac{1}{d}
\end{align*}
By a union bound, the probability that there is some $j$ where $E_j$ does not occur is at most $(1/100)^V$

The remainder of the proof conditions on $E_j$ occurring for all $j$. In this event, each $j$ can be paired with some $v^*$ where the count of $h^{(v^*)}(j)$ in the hashed dataset is exactly the count of $j$ in the original dataset. For any $v\neq v^*$, observe that the count of $h^{(v)}(j)$ in the hashed dataset is must be either (1) an overestimate due to collision or (2) also equal. Thus, the minimum over the counts yields the correct value.

Now we incorporate the fact that $\cA_{\flip 3}$ only has private estimates of the counts. When we set $\beta=1/10\hat{d}V$, Claim \ref{clm:acc-per-bin} and a union bound imply each protocol execution has $\ell_\infty$ error $$2 \sqrt{\frac{k+1}{n} q(1-q) \ln 20\hat{d}V } \cdot \left(\frac{1}{1-2q}\right)$$ except with probability $\leq 1/10V$ A second union bound over the $V$ executions ensures that the privatized count of $h^{(v^*)}(j)$ in the hashed dataset is the minimum of all privatized counts. Substitution of $q$ completes the proof.
\end{proof}

We now show that there is a choice of $V$ where the expected communication complexity has only a polylogarithmic dependence on $d$ and $n$.
\begin{thm}
\label{thm:count-min}
Fix any $\eps = O(1)$. If $n = \Omega( \tfrac{\log d}{{\eps}^2} \log \tfrac{1}{{\delta}} \log \tfrac{\log d}{{\delta}})$ and $\delta = O(1/n)$, there are choices of parameters $V,k \in \N$ and $q<1/2$ such that $\cP_{\flip 3}$ has the following properties:
\begin{enumerate}[a.]
\item
    Each user sends $2\log_2 d$ messages, each consisting of $O\left( \tfrac{1}{\eps^2 }  \log d \log^3\tfrac{\log d}{\delta} \right)$ bits in expectation.
\item
	$\cP_{\flip 3}$ is $\left({\eps}, {\delta} \right)$-shuffle private for inputs from $\cX_d$
\item
	For any $\vec{x}\in \cX^n_d$, $\cP_{\flip 3}(\vec{x})$ reports approximate histogram $\vec{z}$ such that the maximum error is
	\begin{align*}
	\norm{\vec{z} - \hist(\vec{x})}_\infty 
	&= O\left(\frac{1}{\eps n} \sqrt{\log d } \log^{3/2} \left( \frac{\log d}{\delta} \right) \right)
	\end{align*}
	with probability $\geq 9/10-(1/100)^{\log_2 d}$.
\end{enumerate}
\end{thm}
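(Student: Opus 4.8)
The plan is to obtain Theorem \ref{thm:count-min} as a direct instantiation of Claim \ref{clm:count-min}. I would fix the number of hash repetitions to $V = \lceil \log_2 d \rceil$ and the fake-user parameter to $k = 1$, so that the message complexity $V\cdot(k+1)$ from Claim \ref{clm:count-min}(a) is exactly $2\lceil\log_2 d\rceil$. The privacy parameters $(\eps_0,\delta_0)$ that get fed into each of the $V$ internal runs of $\cP_{\flip 2}$ must then be chosen so that advanced composition (Fact \ref{fact:adv_comp}), as already packaged in Claim \ref{clm:count-min}(b), composes down to the target budget $(\eps,\delta)$. I would set $\delta_0 = \delta/(2V)$, making the $2V\delta_0$ term equal $\delta$, and take $\eps_0$ to be the largest value with $\eps_0(e^{\eps_0}-1)V + \eps_0\sqrt{2V\log\tfrac{1}{V\delta_0}}\le \eps$; since $\eps_0\to 0$ as $d$ grows, the first summand is only $O(\eps_0^2 V)=o(\eps)$, so the square-root term dominates and $\eps_0 = \Theta\!\bigl(\eps/\sqrt{\log d\cdot \log(\log d/\delta)}\bigr)$. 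This settles Part \textit{b}.

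The first thing to verify is that $k=1$ is an admissible choice, i.e. that it exceeds $\max\!\bigl(\tfrac{134}{5n}(\tfrac{e^{\eps_0}+1}{e^{\eps_0}-1})^2\ln\tfrac{4}{\delta_0},\ \tfrac{2}{n}\ln 20\hat d V - 1\bigr)$. For the first term, $(\tfrac{e^{\eps_0}+1}{e^{\eps_0}-1})^2 = \Theta(1/\eps_0^2) = \Theta\!\bigl(\tfrac{\log d\,\log(\log d/\delta)}{\eps^2}\bigr)$ and $\ln(4/\delta_0) = \Theta(\log(\log d/\delta))$, so $k=1$ suffices once $n = \Omega\!\bigl(\tfrac{\log d}{\eps^2}\log^2\tfrac{\log d}{\delta}\bigr)$; here I would use that $\delta = O(1/n)$ forces $\log(1/\delta) = \Omega(\log n) = \Omega(\log\log d)$, so $\log^2\tfrac{\log d}{\delta} = O\!\bigl(\log\tfrac1\delta\log\tfrac{\log d}{\delta}\bigr)$, and the hypothesis $n = \Omega\!\bigl(\tfrac{\log d}{\eps^2}\log\tfrac1\delta\log\tfrac{\log d}{\delta}\bigr)$ is exactly what is needed. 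For the second term, the prescription $\hat d = \lceil n(100d)^{1/V}\rceil$ with $V=\lceil\log_2 d\rceil$ gives $(100d)^{1/V} = O(1)$ (since $d^{1/\log_2 d}=2$), hence $\hat d = O(n)$ and $\ln 20\hat d V = O(\log n + \log\log d) = O(n)$.

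With admissibility established, Parts \textit{a} and \textit{c} follow by substituting $V=\lceil\log_2 d\rceil$, $k=1$, $\hat d = O(n)$, $\eps_0$, and $\delta_0$ into the bounds of Claim \ref{clm:count-min} and simplifying, again leaning on $\delta = O(1/n)$ to fold $\log n$ and $\log\hat d$ factors into powers of $\log(\log d/\delta)$. For the bit length: $\tfrac{\hat d}{nk}\log\hat d = O(\log n)$ is lower order, while $\tfrac{\hat d}{\eps_0^2 n}\log\tfrac1{\delta_0} = O\!\bigl(\tfrac1{\eps_0^2}\log\tfrac1{\delta_0}\bigr) = O\!\bigl(\tfrac{\log d}{\eps^2}\log^2\tfrac{\log d}{\delta}\bigr)$, so the $\log\hat d = O(\log(\log d/\delta))$ prefactor brings this to $O\!\bigl(\tfrac1{\eps^2}\log d\,\log^3\tfrac{\log d}{\delta}\bigr)$ expected bits per message. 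For the error: $\tfrac1{\eps_0} = O\!\bigl(\tfrac1\eps\sqrt{\log d\,\log\tfrac{\log d}{\delta}}\bigr)$, $\log\tfrac1{\delta_0} = O(\log\tfrac{\log d}{\delta})$, and $\log\hat d V = O(\log\tfrac{\log d}{\delta})$, so $\tfrac1{\eps_0 n}\sqrt{\log\tfrac1{\delta_0}\log\hat d V} = O\!\bigl(\tfrac1{\eps n}\sqrt{\log d}\,\log^{3/2}\tfrac{\log d}{\delta}\bigr)$, with the additive $\tfrac{\log\hat d V}{n}$ term dominated; the success probability $9/10 - (1/100)^V = 9/10 - (1/100)^{\log_2 d}$ transfers verbatim.

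The step I expect to require the most care is the arithmetic bookkeeping that ties the two hypotheses ($n$ large and $\delta = O(1/n)$) to precisely the inequalities needed — in particular the repeated use of the fact that $\delta = O(1/n)$ upgrades $\log(1/\delta)$ to $\Omega(\log n)$, which is what makes every stray $\log n$ and $\log\hat d$ term collapse into a power of $\log(\log d/\delta)$ and is also what closes the small gap between the $\log^2\tfrac{\log d}{\delta}$ that $k=1$ demands and the $\log\tfrac1\delta\log\tfrac{\log d}{\delta}$ assumed. Everything else is a mechanical substitution into Claim \ref{clm:count-min}.
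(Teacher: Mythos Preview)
Your proposal is correct and follows essentially the same route as the paper: set $V=\log_2 d$, $k=1$, pick per-execution parameters $\eps_0=\Theta\bigl(\eps/\sqrt{V\log(1/\delta)}\bigr)$ and $\delta_0=\delta/(2V)$ so that advanced composition (packaged in Claim~\ref{clm:count-min}(b)) yields the target $(\eps,\delta)$, check that $k=1$ clears the threshold in Claim~\ref{clm:count-min} using the hypothesis on $n$, and then substitute into the communication and error bounds. Your explicit use of $\delta=O(1/n)$ to force $\log(1/\delta)=\Omega(\log n)$ and thereby absorb stray $\log n$ and $\log\hat d$ factors into powers of $\log(\log d/\delta)$ is exactly the bookkeeping the paper performs (somewhat more tersely) in its final two display chains.
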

\begin{proof}
We will set $V \gets \log_2 d$. Let $\overline{\eps} = {\eps}/ 4\sqrt{V \log (1/{\delta})}$ and $\overline{\delta} = \delta / 2V$. Because $n$ is sufficiently large, it is possible to set $k=1$ and set $q \in (0,1/2)$ such that each execution of $\cP_{\flip 2}$ satisfies $(\overline{\eps}, \overline{\delta} )$-shuffle privacy. By substitution into Claim \ref{clm:count-min}, the expected length of a message is 
\begin{align*}
&O\left( \log V + \log \hat{d} \left( 1 + \frac{\hat{d}}{\overline{\eps}^2 n}  \log \frac{1}{\overline{\delta}} + \frac{\hat{d}}{nk}\log \hat{d} \right) \right) \\
={}& O\left( \log V + \log n \left( \frac{1}{\overline{\eps}^2 }  \log \frac{1}{\overline{\delta}} + \log n \right) \right) \tag{Choice of $\hat{d},k$} \\
={}& O\left( \log V + \log n \left( \frac{1}{\eps^2 }  V\log \frac{1}{\delta} \log \frac{V}{\delta} + \log n \right) \right) \tag{Choice of $\overline{\eps},\overline{\delta}$}\\
={}& O\left( \log \log d + \log n \left( \frac{1}{\eps^2 }  \log d \log \frac{1}{\delta} \log \frac{\log d}{\delta} + \log n \right) \right) \tag{Choice of $V$} \\
={}& O\left( \frac{1}{\eps^2 }  \log d \log^3\frac{\log d}{\delta} \right) \tag{$\delta = O(1/n)$}
\end{align*}

Meanwhile, the maximum error is
\begin{align*}
&O\left( \frac{1}{\overline{\eps} n} \sqrt{ \log \frac{1}{\overline{\delta}} \log \hat{d}V } + \frac{ \log \hat{d}V}{n} \right)\\
={}& O\left( \frac{1}{\overline{\eps} n} \sqrt{ \log \frac{1}{\overline{\delta}} (\log n + \log \log d) } + \frac{ \log n + \log \log d}{n} \right) \tag{Choice of $\hat{d},V$}\\
={}& O\left( \frac{1}{\eps n} \sqrt{\log d \log \frac{1}{\delta} \log \frac{\log d}{\delta} (\log n + \log \log d) } + \frac{ \log n + \log \log d}{n} \right) \tag{Choice of $\overline{\eps},\overline{\delta}$}\\
={}& O\left( \frac{1}{\eps n} \sqrt{\log d } \log^{3/2} \left( \frac{\log d}{\delta} \right) \right) \tag{$\delta = O(1/n)$}
\end{align*}
This completes the proof.
\end{proof}

Finally, we study the impact that corrupt users can have on the estimates generated by $\cP_{\flip 3}$.
\begin{clm}
Fix any $\eps=O(1)$, $\delta<1/100$, and $n,q,V$ as in Theorem \ref{thm:count-min}. For any input $\vec{x}\in\cX^n_d$, any coalition of $m$ corrupt users can introduce $ O( \tfrac{m}{n} \cdot \log d )$ error to $\cP_{\flip 3}$.
\end{clm}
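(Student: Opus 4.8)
The plan is to reduce to the per-execution robustness already implicit in Section \ref{sec:robustness} and then combine the $V$ hashed instances through the minimum operation. Recall that $\cP_{\flip 3}$ runs $V$ independent copies of $\cP_{\flip 2}$ on hashed data and that $\cA_{\flip 3}$ outputs $z_j = \min_{v\in[V]} \hat z^{(v)}_{h^{(v)}(j)}$, where $\hat z^{(v)}$ is the $\cA_{\flip 2}$-estimate built from the messages carrying label $v$. Fix a target coordinate $j$ and couple an honest execution with a corrupted one so that the $n-m$ truly honest users produce identical messages in both; write $z_j^{\mathrm{hon}},\hat z^{(v),\mathrm{hon}}$ and $z_j^{\mathrm{cor}},\hat z^{(v),\mathrm{cor}}$ for the resulting estimates. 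Since Theorem \ref{thm:count-min} already controls $|z_j^{\mathrm{hon}} - \hist_j(\vec x)|$, by the triangle inequality it suffices to bound $|z_j^{\mathrm{hon}} - z_j^{\mathrm{cor}}|$.

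First I would bound the shift in a single execution. A coalition of $m$ corrupt users is entitled (e.g.\ via the blind-signature mechanism discussed in Section \ref{sec:robustness}) to at most $V(k+1)$ messages each, hence at most $mV(k+1)$ messages total; unlike in $\cP_\flip$, they may place all of these under a single label. Say $c_v$ adversarial messages carry label $v$, so $\sum_v c_v \le mV(k+1)$ and in particular $c_v \le mV(k+1)$. Because $\cA_{\flip 2}$ (through $\cA_\flip$) merely de-biases and sums the relevant column, normalizing by $1/n$, the value $\hat z^{(v)}_{h^{(v)}(j)}$ differs between the two couplings only by deleting the $m(k+1)$ honest messages of the would-be-corrupt users and inserting the $c_v$ adversarial ones; each deleted or inserted bit lies in $\{0,1\}$ and the de-biasing offset $q$ contributes a factor $(1+q) < 3/2$. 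This yields $|\hat z^{(v),\mathrm{hon}}_{h^{(v)}(j)} - \hat z^{(v),\mathrm{cor}}_{h^{(v)}(j)}| \le \tfrac{3}{2n}\cdot\tfrac{1}{1-2q}\cdot\max(c_v,m(k+1)) \le \tfrac{3}{2}\cdot\tfrac{m}{n}\cdot V(k+1)\cdot f(k)$, with $f(k)=1/(1-2q)$ as in Theorem \ref{thm:maximum}.

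Next I would pass from the executions to their minimum using the elementary inequality $|\min_v a_v - \min_v b_v| \le \max_v |a_v - b_v|$, which gives $|z_j^{\mathrm{hon}} - z_j^{\mathrm{cor}}| \le \tfrac{3}{2}\cdot\tfrac{m}{n}\cdot V(k+1)\cdot f(k)$, and this holds simultaneously for every $j$, so the same bound controls the $\ell_\infty$ distance between the honest and corrupted outputs. Finally, substituting the parameters of Theorem \ref{thm:count-min} --- $V = \log_2 d$, $k = 1$, and $q$ a small constant (hence $f(k)=O(1)$) because $n = \Omega(\tfrac{\log d}{\eps^2}\log\tfrac1\delta\log\tfrac{\log d}{\delta})$ forces $q(1-q)$ below a small constant --- collapses this to $O(\tfrac{m}{n}\log d)$, as claimed. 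Combining with Theorem \ref{thm:count-min} via the triangle inequality shows the overall error in the presence of $m$ corrupt users is the honest error plus an additive $O(\tfrac{m}{n}\log d)$.

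The main obstacle is the middle step. In $\cP_\flip$ one can couple messages one-for-one because every user sends exactly $k+1$ of them, but here a corrupt user can reallocate its entire $V(k+1)$-message budget onto whichever single hashed instance it wishes to poison, so the budget constraint must be applied per label rather than globally; this reallocation is exactly what produces the extra factor of $V=\log d$ over the $O(m/n)$ bound of the base protocol. One should also confirm that driving one $\hat z^{(v)}$ downward is the worst case for $\min_v \hat z^{(v)}_{h^{(v)}(j)}$ (pushing a single coordinate, or all of them, upward is no better for the adversary), which again follows from the $|\min_v a_v - \min_v b_v| \le \max_v|a_v - b_v|$ bound.
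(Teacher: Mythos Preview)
Your proposal is correct and follows essentially the same approach as the paper's proof sketch: bound the corrupt users' total message budget by $mV(k+1)$, observe they may concentrate it on a single label $v$, and adapt the $\cP_\flip$ robustness analysis to get an $O(\tfrac{m}{n}\cdot V)$ shift per execution. You flesh out two details the paper leaves implicit---the coupling and the $|\min_v a_v - \min_v b_v| \le \max_v |a_v - b_v|$ step---and there are minor arithmetic slips (the per-message magnitude is $1-q$ rather than $1+q$, and the delete/insert contributions add rather than max), but these do not affect the final $O(\tfrac{m}{n}\log d)$ bound.
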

\begin{proof}[Proof Sketch]
Each honest user transmits $O(V)$ messages, where each message is an output of $\cR_\flip \circ h^{(v)}$ labeled by $v$. A corrupt user is therefore limited a ``budget'' of $O(V)$ messages each with the same structure. But all of these messages could share the same label $v$. In this case, we can adapt the analysis of $\cP_\flip$ to show that $m$ corrupt users will add $O(\tfrac{m}{n}\cdot V)$ bias to the $v$-th execution of $\cP_{\flip}$. 
\end{proof}



    
    \section{Experiments}
\label{sec:experiments}

In this section, we evaluate the accuracy of our protocol on natural language data. To give context for these results, we repeat the experiment on the histogram protocol by Balcer \& Cheu \cite{BC20}. It has essentially the same communication complexity as $\cP_\flip$, but its message complexity is $O(d)$.

\medskip

We acquire a list of $d\approx 4.7\cdot 10^5$ English words from a publicly accessible repository and $\approx 3.7\cdot 10^6$ tweets on Twitter in the United States previously used in work by Cheng, Caverlee, and Lee \cite{CCL10}.\footnote{The word list was downloaded from https://github.com/dwyl/english-words while the tweets were downloaded from https://archive.org/details/twitter\_cikm\_2010} We sampled one recognized word from each tweet, so that $n\approx 3.7\cdot 10^6$. Fixing privacy parameters $\eps=1$ and $\delta=10^{-7}$, we simulated our protocol on the dataset a hundred times for four choices of $k$.

As an aside, our method of sampling data ensures tweet-level privacy rather than user-level privacy. That is, we could have sampled one word per user instead of one word per tweet. But our goal was to evaluate the protocol when applied to large-scale data analysis and our dataset consists of only $\approx 10^4$ users. 

\subsection{Evaluation of Maximum Error}

In Figure \ref{fig:max_experiments}, we visualize the error introduced by $\cP_\flip$ for varying choices of $k$. We also plot corresponding confidence bounds derived from Corollary \ref{coro:acc-max}. As predicted, the error decreases with larger $k$. And, at least for this particular dataset, our bounds are loose by only a small multiplicative factor.

\begin{figure}[h]
    \centering
    \includegraphics[width=0.75\textwidth]{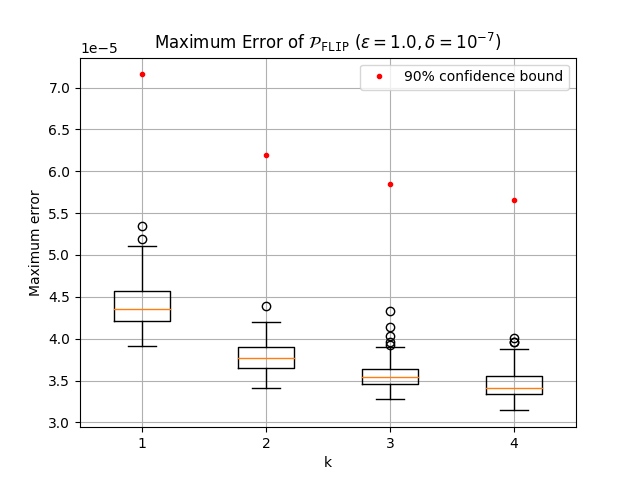}
    \caption{Maximum error of frequency estimates in experiments, as a (decreasing) function of $k$. Confidence bounds (red filled circles) are derived from Corollary \ref{coro:acc-max}.}
    \label{fig:max_experiments}
\end{figure}

In Figure \ref{fig:bc_max_experiments}, we compare the max error of $\cP_\flip$ with the histogram protocol by Balcer \& Cheu \cite{BC20}. The primary advantage of \cite{BC20} over $\cP_\flip$ is that the error introduced to any bin does not scale with $d$. Specifically, the maximum error is $O(\tfrac{1}{\eps^2 n}\log \tfrac{1}{\delta})$. In contrast $\cP_\flip$ only ensures $O(\tfrac{\log d}{n} + \tfrac{1}{\eps n} \sqrt{\log d \log \tfrac{1}{\delta}})$ error. But in our application, $d$ is actually orders of magnitude smaller than $1/\delta$ so $\cP_\flip$ is in fact more accurate. 

\begin{figure}[h]
    \centering
    \includegraphics[width=0.75\textwidth]{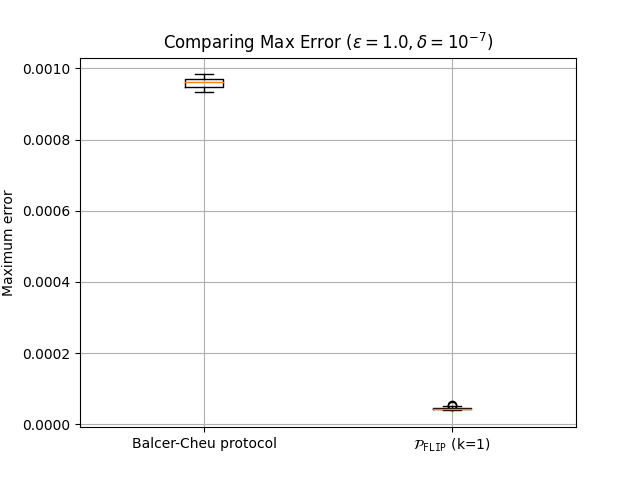}
    \caption{Comparison between the maximum error of $\cP_\flip$ and that of the protocol in \cite{BC20}. The latter protocol would only have better error than the former if the dimension of the data were much larger.}
    \label{fig:bc_max_experiments}
\end{figure}


\subsection{Evaluation of Top-$t$ selection}
Recall the simple top-$t$ selection strategy from Section \ref{sec:heavy-hitters}: report the top-$t$ items of a private version of the histogram. In Table \ref{tab:gaps}, we fix $t=6000$ and present our bound from Corollary \ref{coro:heavy-hitters-2} alongside the maximum observed value in our experiments. We remark that the frequency of the rank-$t$ word is $1.33\cdot10^{-5}$. This is an upper bound on $\alpha$ that holds with probability 1, since the worst that can happen is that a word with frequency 0 displaces the $t$-th most common word.

\begin{table}[h]
    \centering
    \begin{tabular}{c|c|c|}
        & \multicolumn{2}{c|}{$\cP_\flip$'s $\alpha$-approximation of top-6000} \\
        $k$ & Bound from Corollary \ref{coro:heavy-hitters-2} & Maximum observed \\ \hline
        1 & $1.43\cdot10^{-4}$ & \\
        2 & $1.24\cdot10^{-4}$ & \multirow{2}{*}{$1.33\cdot10^{-5}$}\\
        3 & $1.17\cdot10^{-4}$ & \\
        4 & $1.13\cdot10^{-4}$ & 
    \end{tabular}
    \caption{Comparing the bound on the error of top-$t$ selection with experimental results.}
    \label{tab:gaps}
\end{table}

In Figure \ref{fig:f1_experiments}, we plot the F1 score of the report-top-$t$ strategy for both $\cP_\flip$ and the Balcer-Cheu protocol. $\cP_\flip$ preserves $\approx 95\%$ of the top-2000 words in the dataset and is consistently more accurate than the alternative protocol. Increasing $t$ decreases the F1 score for both protocols because infrequent words are easily evicted from the top-$t$ and natural language heavily favors a small set of words.

\begin{figure}[h]
    \centering
    \includegraphics[width=0.75\textwidth]{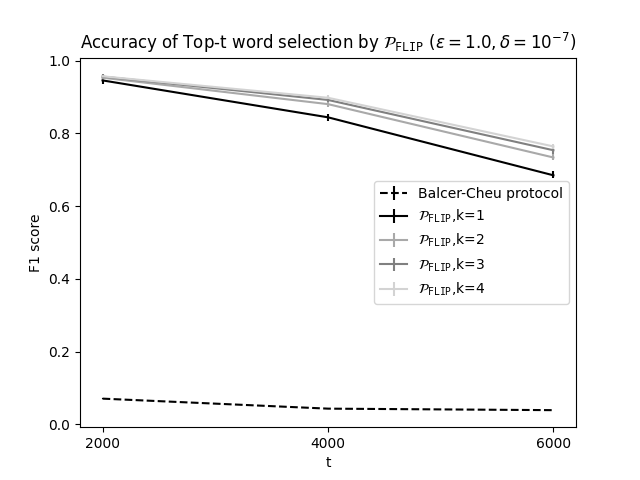}
    \caption{F1 scores for top-$t$ word selection, for both tested protocols. The lines connect medians while the error bars represent the complete range of observed values.}
    \label{fig:f1_experiments}
\end{figure}

\else
    \section{Our Techniques}
Each user $i$ in our main protocol first encodes their data $x_i\in[d]$ as a binary string with a single 1 bit at index $x_i$ (a ``one-hot'' string). The user then flips each bit independently with some fixed probability $q$. Next, they create $k$ other strings that are all 0 and then repeat this bit flipping, which amounts to injecting $k$ fake users to the protocol.

We choose $q$ so that the messages from the $nk$ fake users provide differential privacy for the real users. The first step in the analysis is to observe that for any $x\neq x' \in [d]$, the corresponding one-hots only differ on two indices (all other positions are guaranteed to be 0). Because differential privacy is closed under post-processing, it therefore suffices to consider the case where $d=2$. Next, we observe the probability of generating the string 00 (resp. 11) when bit flipping the encoding of $x \in [2]$ is the same as the probability of generating the string 00 (resp. 11) when bit flipping the encoding of $x' \in [2]$. Again using closure under post-processing, it suffices to study the privacy guarantee of the algorithm $\cB$ that takes one message $\in \{01,10\}$ and shuffles it with messages generated by $nk$ fake users.

The probability that a fake user outputs a message $\in \{01,10\}$ is exactly $2q(1-q)$. By concentration of the binomial distribution, the number of such messages is $\Theta(nkq(1-q))$ with $1-\delta$ probability. By symmetry, whether such a message is 01 or 10 is a $\Ber(1/2)$ event. So the count of $01$ is drawn from $\Bin(\Theta(nkq(1-q)),1/2)$ with $1-\delta$ probability. And the literature has established that the binomial mechanism is $(\eps,\delta)$-differentially private so long as the variance is $\Omega(\tfrac{1}{\eps^2}\log \tfrac{1}{\delta})$.\footnote{For recent analysis of binomial noise in the shuffle model, see Ghazi et al. \cite{GGKPV19} and Cheu et al. \cite{CSUZZ19}.} This allows us to derive bounds on $k$ and $q$. We note that our full work contains an alternative analysis of $\cB$ which that yields improved constants.

The analyzer obtains an estimate of bin $j$ by de-biasing and re-scaling the sum of the $j$-th bits in the message strings. The bias is due to the noise introduced by fake users. Re-scaling is necessary because the one-hot encodings are randomly flipped, dampening the signal. Because $q$ can be viewed as a term $q_k$ that depends on $k$, the scaling factor is $f(k) = 1/(1-2q_k)$. Since each user is allotted $k+1$ messages and each contributes $f(k)$ to the sum computed by the analyzer, $m$ corrupt users cannot skew an estimate beyond $m\cdot (k+1)\cdot f(k)$.

\paragraph{Reducing Communication Complexity}
    Our technique to reduce communication complexity proceeds in two stages. We first make the simple observation that a binary string with known length is equivalent to a list of the indices where the string has value 1. By construction, a message generated by our local randomizer is a binary string where the number of such indices has expectation $O(d q)$. Our choice of $q$ is proportional to $1/n$, so this alternative representation is very effective when $n$ approaches or exceeds $d$.
    
    The small $n$ regime motivates a second round of compression. We describe an adaptation of the count-min sketching technique. Given a uniformly random hash function, we can reduce the size of the domain $d$ to some $\hat{d}$ at the cost of some collisions. We repeatedly hash in order to reduce the likelihood of error due to collisions and run our histogram protocol on the hashed data. We remark that Ghazi et al. \cite{GGKPV19} build a specific histogram protocol out of count-min, while we use it as a tool that can improve the communication complexity of \emph{arbitrary} histogram protocols.
\fi

\section*{Acknowledgements}
We would like to thank Kobbi Nissim, Rasmus Pagh, and Jonathan Ullman for discussion and insight for the count-min analysis.

\bibliographystyle{plain}
\bibliography{refs}

\ifnum\tpdp=0
    \appendix
    \section{Technical Claims for $\cP_\flip$}
\label{apdx:technical}

\begin{clm*}[Restatement of Claim \ref{clm:noise-concentration}]
Fix $m\in \N$ and $q, \delta \in (0,1)$. Define
\begin{align*}
\Delta &:= \sqrt{3mq(1-q)\ln \frac{4}{\delta}} \cdot \frac{q(1-q)}{1-q(1-q)} \\
U &:= mq(1-q) + \Delta + \sqrt{3(mq(1-q) +\Delta) \ln\frac{4}{\delta}} \\
L &:= mq(1-q) - \Delta - \sqrt{3(mq(1-q) +\Delta) \ln\frac{4}{\delta}}
\end{align*}
Let $F\subset \Z^4$ denote the set of vectors where $\vec{t} \in F$ if and only if $t_2,t_3 \in [L,U]$. If $mq(1-q) > \frac{9}{2}\ln (4/\delta)$, then
    \begin{equation*}
    \pr{\vec{f}\sim \cM(m,q)}{ \vec{f} \notin F} \leq \delta
    \end{equation*}
\end{clm*}

\begin{proof}
We will use a Chernoff bound to argue that the marginal distribution of $f_3$ is likely to be in some interval $[L',U']$. Then we will use a Chernoff bound to argue that the distribution of $f_2$ conditioned on $f_3\in [L',U']$ is likely to be in $[L,U]$. The claim follows from the fact that $[L',U'] \subset [L,U]$.

By construction, $f_3$ is the random variable that counts the number of times the message 10 (2 in binary) is produced by $m$ executions of $\cR_{2,q}(00)$. Referring to Table \ref{tab:pmf}, this means $f_3$ is distributed as $\Bin(m, q(1-q))$. Using $\mu_3$ as shorthand for the mean $mq(1-q)$, multiplicative Chernoff bounds imply the following for all $z \in (0,1)$: $$\pr{}{|f_3 - \mu_3| > z\mu_3} \leq 2\exp(- z^2\mu_3/3).$$ Because $\mu_3 \ge 3 \ln\frac{4}{\delta}$, we can assign $z \gets \sqrt{\frac{3}{\mu_3} ln\frac{4}{\delta}}$ so that $$\pr{}{|f_3 - \mu_3| > \sqrt{3 \mu_3 \cdot \ln\frac{4}{\delta}} } \leq \delta/2$$
So if we define $L' \gets \mu_3 - \sqrt{3 \mu_3 \cdot \ln\frac{4}{\delta}}$ and $U' \gets \mu_3 + \sqrt{3 \mu_3 \cdot \ln\frac{4}{\delta}}$, $f_3 \in [L',U']$ except with probability $\delta/2$; the remainder of the proof conditions on this event.

Specifically, we assume that the random variable $f_3$ takes on some value $r \in [L,U]$. This means that $f_2$ is the random variable that counts the number of times the message 01 (1 in binary) is produced by $m-r$ executions of $\cR_{2,q}(00)$ \emph{conditioned on the output not being 10 (2 in binary)}. Referring to Table \ref{tab:pmf}, this means $f_2$ is distributed as $\Bin\left( m-r, \frac{q(1-q)}{1-q(1-q)} \right)$. The mean of this distribution is $\mu_2 \gets (m-r) \cdot \frac{q(1-q)}{1-q(1-q)}$. If we could show $\mu_2 \geq 3\ln \frac{4}{\delta}$, we could again invoke multiplicative Chernoff bounds to argue $$\pr{}{|f_2 - \mu_2| > \sqrt{3 \mu_2 \cdot \ln\frac{4}{\delta}} } \leq \delta/2.$$

Notice that $r \in [L', U']$ implies
\begin{align*}
\mu_2 =&{} (m-r) \cdot \frac{q(1-q)}{1-q(1-q)} \\
    \geq{}& \left( m-\mu_3 - \sqrt{3\mu_3 \cdot \ln \frac{4}{\delta}} \right) \cdot \frac{q(1-q)}{1-q(1-q)} \\
    ={}& \left( m-mq(1-q)-\sqrt{3mq(1-q)\ln \frac{4}{\delta}} \right) \cdot \frac{q(1-q)}{1-q(1-q)} \\
    ={}& mq(1-q) - \sqrt{3mq(1-q)\ln \frac{4}{\delta}} \cdot \frac{q(1-q)}{1-q(1-q)} \\
    ={}& mq(1-q) - \Delta
\end{align*}
By symmetric arguments, $$ \mu_2 \leq mq(1-q) +\Delta.$$ The claim follows by substitution.

We now argue that $\mu_2 \geq 3\ln \frac{4}{\delta}$.
\begin{align*}
\mu_2 \geq{}& mq(1-q) - \sqrt{\frac{1}{3}\cdot mq(1-q)\ln \frac{4}{\delta}} \tag{$q(1-q)<1/4$}\\
    \geq{}& \frac{2}{3} mq(1-q) \tag{$mq(1-q) > 3\ln(4/\delta)$} \\
    \geq{}& 3\ln \frac{4}{\delta}
\end{align*}

\end{proof}

\begin{clm*}[Restatement of Claim \ref{clm:good-noise}]
Fix any $\eps>0$ and $\delta < 1/100$. Define $F$ as in Claim \ref{clm:noise-concentration}. If $q< 1/2$ and $mq(1-q) \geq \tfrac{33}{5} \left(\tfrac{e^\eps+1}{e^\eps-1}\right)^2 \ln(4/\delta)$, then for any $\vec{y} = (y_1,\dots,y_4)$,
\begin{align}
    \pr{\vec{f} \gets \cM(m,q) }{\vec{f} = (y_1,y_2-1,y_3,y_4) ,\vec{f} \in F } &\leq e^\eps \cdot \pr{\vec{f} \gets \cM(m,q) }{\vec{f} = (y_1,y_2,y_3 - 1,y_4) } \label{eq:good-noise}\\
    \pr{\vec{f} \gets \cM(m,q) }{\vec{f} = (y_1,y_2,y_3 - 1, y_4) ,\vec{f} \in F } &\leq e^\eps \cdot \pr{\vec{f} \gets \cM(m,q) }{\vec{f} = (y_1,y_2 - 1,y_3,y_4) } \label{eq:good-noise-sym}
\end{align}
\end{clm*}

\begin{proof}
It remains to prove \eqref{eq:good-noise}; the proof of \eqref{eq:good-noise-sym} will be completely symmetric. Let $F' \subset \Z^4 $ denote the set of vectors where $\vec{f}\in F'$ if and only if $f_2,f_3 \in [L-1,U+1]$.
\begin{align*}
    &\pr{\vec{f} \gets \cM(m,q) }{\vec{f} = (y_1,y_2-1,y_3,y_4) ,\vec{f} \in F } \\
    ={}& \pr{\vec{f} \gets \cM(m,q) }{\vec{f} = (y_1,y_2-1,y_3,y_4)} \cdot \indic{(y_1,y_2-1,y_3,y_4) \in F} \\
    ={}& \frac{m!}{y_1!(y_2-1)!y_3!y_4!} \cdot (1-q)^{2y_1}(q(1-q))^{y_2-1}(q(1-q))^{y_3}q^{2y_4} \\
        & \cdot \indic{(y_1,y_2-1,y_3,y_4) \in F} \tag{Defn. of $\cM$} \\
    \leq{}& \frac{m!}{y_1!(y_2-1)!y_3!y_4!} \cdot (1-q)^{2y_1}(q(1-q))^{y_2-1}(q(1-q))^{y_3}q^{2y_4} \\
        & \cdot \indic{(y_1,y_2,y_3-1,y_4) \in F'} \stepcounter{equation} \tag{\theequation} \label{eq:good-noise-1}
\end{align*}
\eqref{eq:good-noise-1} comes from the fact that when $y_2-1 \in [L,U]$ and $y_3 \in [L,U]$, it must be the case that $y_2 \in [L-1,U+1]$ and $y_3-1 \in [L-1,U+1]$. We can also derive
\begin{align*}
    &\pr{\vec{f} \gets \cM(m,q) }{\vec{f} = (y_1,y_2,y_3 - 1,y_4), \vec{f} \in F' }\\
    ={}& \frac{m!}{y_1!y_2!(y_3-1)!y_4!} \cdot (1-q)^{2y_1}(q(1-q))^{y_2}(q(1-q))^{y_3-1}q^{2y_4} \\
    &\cdot \indic{(y_1,y_2,y_3-1,y_4) \in F'} \stepcounter{equation} \tag{\theequation} \label{eq:good-noise-2}
\end{align*}

By combining \eqref{eq:good-noise-1} and \eqref{eq:good-noise-2},
\begin{align*}
&\pr{\vec{f} \gets \cM(m,q) }{\vec{f} = (y_1,y_2-1,y_3,y_4) ,\vec{f} \in F } \\
\leq{}& \frac{y_2}{y_3} \cdot \pr{\vec{f} \gets \cM(m,q) }{\vec{f} = (y_1,y_2,y_3 - 1,y_4) } \cdot  \indic{(y_1,y_2,y_3-1,y_4) \in F'}.
\end{align*}

In the case where $(y_1,y_2,y_3-1,y_4) \notin F'$, the right hand side is zero so that \eqref{eq:good-noise} trivially holds. Otherwise, $y_2 / y_3 \leq (U+1) / (L-1)$ by definition of $F'$. This means $$\pr{\vec{f} \gets \cM(m,q) }{\vec{f} = (y_1,y_2-1,y_3,y_4) ,\vec{f} \in F } \leq \frac{U+1}{L-1} \cdot \pr{\vec{f} \gets \cM(m,q) }{\vec{f} = (y_1,y_2,y_3 - 1,y_4) }$$ so it simply remains to show $(U+1)/(L-1) \leq e^\eps$. We rewrite this target inequality as
\begin{align*}
\frac{e^\eps - 1}{e^\eps + 1} &\geq \frac{\Delta + \sqrt{3(mq(1-q)+\Delta) \ln (4/\delta)} + 1}{mq(1-q)} \\
    &= \underbrace{\sqrt{\frac{3\ln(4/\delta)}{mq(1-q)}} \cdot \frac{q(1-q)}{1-q(1-q)}}_{A} + \underbrace{\frac{\sqrt{3(mq(1-q)+\Delta) \ln (4/\delta)}}{mq(1-q)}}_{B} + \underbrace{\frac{1}{mq(1-q)}}_{C} \stepcounter{equation} \tag{\theequation} \label{eq:good-noise-3}
\end{align*}
We will upper bound each term, beginning with $A$:
\begin{align*}
A &= \sqrt{\frac{3\ln(4/\delta)}{mq(1-q)}} \cdot \frac{q(1-q)}{1-q(1-q)} \\
    &< \sqrt{\frac{3\ln(4/\delta)}{mq(1-q)}} \cdot \frac{1}{3} \tag{$q < \half $} \\
    &< \frac{1}{3} \cdot\sqrt{\frac{5}{11}}\cdot \frac{e^\eps - 1}{e^\eps + 1}
\end{align*}

Now we bound $B$:
\begin{align*}
B &= \frac{\sqrt{3(mq(1-q)+\Delta) \ln (4/\delta)}}{mq(1-q)} \\
    &= \sqrt{\frac{3\ln(4/\delta)}{mq(1-q)} + \frac{3\Delta\ln(4/\delta)}{(mq(1-q))^2}} \\
    &= \sqrt{\frac{3\ln(4/\delta)}{mq(1-q)} + \frac{3\ln(4/\delta)}{(mq(1-q))^2} \cdot \left(\sqrt{3mq(1-q)\ln \frac{4}{\delta}} \cdot \frac{q(1-q)}{1-q(1-q)} \right)} \tag{Value of $\Delta$}\\
    &= \sqrt{\frac{3\ln(4/\delta)}{mq(1-q)} + \left(\frac{3\ln(4/\delta)}{mq(1-q)}\right) ^{3/2} \cdot \frac{q(1-q)}{1-q(1-q)} } \\
    &< \sqrt{\frac{5}{11} + \left(\frac{5}{11}\right)^{3/2} \cdot \frac{1}{3} } \cdot \frac{e^\eps - 1}{e^\eps + 1}
\end{align*}

Finally we bound $C$:
\begin{align*}
C = \frac{1}{mq(1-q)} &\leq \frac{5}{33\ln(4/\delta)} \cdot \frac{e^\eps - 1}{e^\eps + 1} \\
    &\leq \frac{5}{33\ln(400)} \cdot \frac{e^\eps - 1}{e^\eps + 1} \tag{$\delta \leq 1/100$}
\end{align*}
\eqref{eq:good-noise-3} follows by substitution.
\end{proof}
    \section{Manipulation Attack Against $\cP_\had$}
\label{apdx:attack}
In this section, we describe the Hadamard response protocol by Ghazi et al. \cite{GGKPV19} and a manipulation attack against it. For a wide range of $n$, the protocol's estimates are less robust (at least in the worst case) than $\cP_\flip$.

We present pseudocode for the randomizer and analyzer in Algorithms \ref{alg:r-had} and \ref{alg:a-had}, which use parameters $k,\tau\in\N$. We remark that we have adjusted the algorithm and notation to be more consistent with our protocol and the problem it solves. Specifically, parameter $\rho$ is renamed $k$ to match  $\cP_\flip$ and we limit user data to $\cX_d$.\footnote{As originally written, $\cP_\had$ solved the more general problem where users can have more than one item $\in[d]$. In principle, we could augment $\cP_\flip$ to solve the same generalization, but we focus on the simplest case for clarity.}

\begin{algorithm}
\caption{$\cR_\had$, local randomizer for histograms}
\label{alg:r-had}

\KwIn{$x \in \cX_d$}
\KwOut{$\vec{y} \in ([2d]^\tau)^{k+1}$}

Initialize $\vec{y}$ to the empty vector.

Let $j(x)$ be the integer $j$ such that  $e_{j,d}=x$

Let $h_{j(x)}$ be the $j(x)+1$-th row of the $2d\times 2d$ Hadamard matrix

Sample $a_1,\dots, a_\tau$ uniformly and independently from $\{\hat{j} ~|~ h_{j(x),\hat{j}} = 1\}$

Append the tuple $(a_1,\dots, a_\tau)$ to $\vec{y}$

\For{$i\in[k]$}{
    Sample $a_1,\dots, a_\tau$ uniformly and independently from $[2d]$
    
    Append the tuple $(a_1,\dots, a_\tau)$ to $\vec{y}$
}

\Return{$\vec{y}$}
\end{algorithm}

\begin{algorithm}
\caption{$\cA_\had$, analyzer for histograms}
\label{alg:a-had}

\KwIn{$\vec{y} \in ([2d]^\tau)^{n(k+1)}$}
\KwOut{$\vec{z}\in\R^d$}

\For{$j\in[d]$}{
    $c_j\gets 0$
    
    \For{$(a_1,\dots,a_\tau)\in\vec{y}$}{
        \If{every $a_1,\dots,a_\tau \in \{ \hat{j} ~|~ h_{j,\hat{j}} = 1 \}$}{
            $c_j \gets c_j+1$
        }
    }
    
    $z_j\gets \tfrac{1}{n}\cdot \tfrac{1}{1-2^{-\tau}} \cdot (c_j - n(k+1)\cdot 2^{-\tau})$
}

\Return{$\vec{z}$}
\end{algorithm}

\begin{clm*}[Restatement of \ref{clm:had-attack}]
Choose $k,\tau$ as in Theorem \ref{thm:had}. If there is a coalition of $m<n$ corrupt users $M\subset [n]$, then for any target value $j\in[d]$ there is an input $\vec{x}$ such that $\cP_\flip$ produces an estimate of $\hist_j(\vec{x})$ with bias $\tfrac{m}{n} \cdot (k+1) = \Omega( \tfrac{m}{n}\cdot \tfrac{1}{\eps^2} \log \tfrac{1}{\eps\delta})$.
\end{clm*}
\begin{proof}
The attack is simple: given the target $j$, each corrupt user samples $k+1$ values i.i.d. from $\{\hat{j} ~|~ h_{j,\hat{j}} =1 \}$ in lieu of running $\cR_\had$. Now consider an input $\vec{x}$ such that $\hist_j(\vec{x})=0$. For $z^{\mathrm{cor}}_j$ computed by $\cP_\had$ under attack by $m$ corrupt users, we will argue that $\ex{}{z^{\mathrm{cor}}_j} = \tfrac{m}{n} \cdot (k+1)$.

We require some notation. Let $b_{i,r}$ be the bit that indicates if the $r$-th message produced by user $i$ will have the property that each element belongs to $\{ \hat{j} ~|~ h_{j,\hat{j}} = 1 \}$. Note that $c_j = \sum_{i\in[n]} \sum_{r\in[k+1]} b_{i,r}$. We will use the superscripts ``${\mathrm{hon}}$'' and ``${\mathrm{cor}}$'' to denote random variables from honest and corrupted executions, respectively.

\begin{align*}
\ex{}{z^{\mathrm{cor}}_j} &= \frac{1}{n}\cdot \frac{1}{1-2^{-\tau}} \cdot (\ex{}{c^{\mathrm{cor}}_j} - n(k+1)\cdot 2^{-\tau}) \\
    &= \frac{1}{n}\cdot \frac{1}{1-2^{-\tau}} \cdot \left(\sum_{i\in[n]} \sum_{r\in[k+1]} \ex{}{b^{\mathrm{cor}}_{i,r}} - n(k+1)\cdot 2^{-\tau} \right) \\
    &= \frac{1}{n}\cdot \frac{1}{1-2^{-\tau}} \cdot \left(\sum_{i\notin M} \sum_{r\in[k+1]} \ex{}{b^{\mathrm{cor}}_{i,r}}  + \sum_{i\in M} \sum_{r\in[k+1]} \ex{}{b^{\mathrm{cor}}_{i,r}} - n(k+1)\cdot 2^{-\tau} \right) \\
    &= \frac{1}{n}\cdot \frac{1}{1-2^{-\tau}} \cdot \left(\sum_{i\notin M} \sum_{r\in[k+1]} \ex{}{b^{\mathrm{hon}}_{i,r}}  + \sum_{i\in M} \sum_{r\in[k+1]} \ex{}{b^{\mathrm{cor}}_{i,r}} - n(k+1)\cdot 2^{-\tau} \right) \\
    &= \frac{1}{n}\cdot \frac{1}{1-2^{-\tau}} \cdot \left((n-m)(k+1)\cdot 2^{-\tau} + \sum_{i\in M} \sum_{r\in[k+1]} \ex{}{b^{\mathrm{cor}}_{i,r}} - n(k+1)\cdot 2^{-\tau} \right) \stepcounter{equation} \tag{\theequation} \label{eq:prior-work}\\
    &= \frac{1}{n}\cdot \frac{1}{1-2^{-\tau}} \cdot \left((n-m)(k+1)\cdot 2^{-\tau} + m(k+1) - n(k+1)\cdot 2^{-\tau} \right) \stepcounter{equation} \tag{\theequation} \label{eq:attack-def}\\
    &=\frac{m}{n}\cdot (k+1)
\end{align*}
\eqref{eq:prior-work} comes from analysis done in \cite{GGKPV19}. \eqref{eq:attack-def} is immediate from the definition of the attack.
\end{proof}
    \section{Histogram Protocol via Privacy Amplification}
\label{apdx:amplification}
In this appendix, we will consider the variant of $\cR_\flip$ where there are no messages from fabricated users. The privacy analysis is performed using the amplification-by-shuffling result by Feldman et al. \cite{FMT20}.

\begin{thm}
\label{thm:amplification-variant}
Fix any $\eps \leq 4$, $\delta < 1$, and $k=0$. For any $n > \max \left( \tfrac{1024}{\eps^2}\ln \frac{4}{\delta}, 6\ln20d \right)$, there is a choice of parameter $q < 1/3$ such that the protocol $\cP_\flip=(\cR_\flip,\cA_\flip)$ has the following properties
\begin{enumerate}[a.]
    \item $\cP_\flip$ is $(\eps,\delta)$-shuffle private
    \item For any $\vec{x}\in \cX^n_d$, $\cP_\flip(\vec{x})$ reports a vector $\vec{z}$ such that the maximum error with respect to $\hist(\vec{x})$ is $$\norm{\vec{z} - \hist(\vec{x})}_\infty < \max\left( \frac{24}{n^{3/4}\sqrt{\eps}} \left(\ln\frac{4}{\delta}\right)^{1/4} \sqrt{ \ln 20d} ,~ \frac{6}{n} \ln 20d \right)$$ with 90\% probability.
\end{enumerate}
\end{thm}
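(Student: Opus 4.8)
Since $k=0$, the randomizer $\cR_\flip$ emits the single message $\cR_{d,q}(x)$, so the object we must make private is exactly $\cS\circ\cR_{d,q}^n$ and the plan is to obtain Part~(a) via amplification-by-shuffling rather than via the ``fake users'' argument (Claim~\ref{clm:privacy}), which is vacuous at $k=0$. The first step is to pin down the local privacy level of $\cR_{d,q}$ on inputs from $\cX_d$. Reusing the reduction from Section~\ref{sec:privacy} (changing $e_{j,d}$ to $e_{j',d}$ touches only coordinates $j,j'$, so it suffices to analyze $\cR_{2,q}$ on $\{01,10\}$), the worst-case likelihood ratio is read off from Table~\ref{tab:pmf}: it is attained at outcome $10$ versus $01$ and equals $(1-q)^2/q^2$. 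Hence $\cR_{d,q}$ is $\eps_0$-locally differentially private for inputs from $\cX_d$ with $\eps_0 = 2\ln\frac{1-q}{q}$, i.e.\ $e^{\eps_0}=((1-q)/q)^2$, which is large precisely when $q$ is small.

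For Part~(a), I would feed $\eps_0$ into the amplification lemma of Feldman, McMillan, and Talwar~\cite{FMT20}: when $\eps_0$ is not too large relative to $n$ and $\delta$ (roughly $e^{\eps_0}\lesssim n/\log(1/\delta)$), the shuffled protocol is $(\eps,\delta)$-differentially private with $\eps$ of order $\sqrt{e^{\eps_0}\log(1/\delta)/n}=\Theta(\tfrac1q\sqrt{\log(1/\delta)/n})$. Inverting, the plan is to take $q$ to be the smallest value for which the \emph{explicit} FMT bound is at most $\eps$; this gives $q=\Theta\!\big(\tfrac1\eps\sqrt{\ln(4/\delta)/n}\big)$. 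Two numeric checks then remain: (i) this $q$ meets the FMT precondition, which reduces to $e^{\eps_0}=\Theta(\eps^2 n/\ln(4/\delta))\lesssim n/\ln(2/\delta)$, i.e.\ $\eps=O(1)$ --- this is exactly where the hypothesis $\eps\le 4$ is used; and (ii) $q<1/3$, which reduces to $n$ exceeding a constant multiple of $\tfrac1{\eps^2}\ln\tfrac{4}{\delta}$ --- this is where $n>\tfrac{1024}{\eps^2}\ln\tfrac{4}{\delta}$ is used. Since enlarging $q$ only shrinks $\eps_0$ (strengthening privacy and preserving the precondition), we may freely replace $q$ by $\max\!\big(q,\ \tfrac1n\ln 20d\big)$ without losing Part~(a); this floor is needed for the accuracy argument, and $n>6\ln 20d$ keeps it below $1/3$.

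For Part~(b), with $k=0$ the per-coordinate error is governed by Claim~\ref{clm:acc-per-bin}: setting $\beta=1/(10d)$ (admissible because we arranged $q\ge\tfrac1n\ln 20d=\tfrac1n\ln\tfrac{2}{\beta}$) and union-bounding over the $d$ coordinates gives $\norm{\vec{z}-\hist(\vec{x})}_\infty \le 2\sqrt{\tfrac1n q(1-q)\ln 20d}\cdot\tfrac{1}{1-2q}$ with probability $\ge 9/10$. Using $q(1-q)<q$ and $\tfrac{1}{1-2q}<3$ (valid since $q<1/3$), this is at most $6\sqrt{q\ln 20d/n}=\max\!\big(6\sqrt{q_{\mathrm{priv}}\ln 20d/n},\ \tfrac6n\ln 20d\big)$, where $q_{\mathrm{priv}}=\Theta(\tfrac1\eps\sqrt{\ln(4/\delta)/n})$ is the privacy-driven value and $\tfrac6n\ln 20d$ is the contribution of the floor $\tfrac1n\ln 20d$. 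Substituting $q_{\mathrm{priv}}$ into the first term yields the stated $O\!\big(\tfrac{1}{\sqrt{\eps}\,n^{3/4}}(\ln\tfrac{4}{\delta})^{1/4}\sqrt{\ln 20d}\big)$ bound, the second term is exactly $\tfrac6n\ln 20d$, and taking the maximum recovers Part~(b).

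The main obstacle is the bookkeeping around the FMT bound: one must use its explicit (not asymptotic) form, verify that its precondition survives this particular choice of $q$ --- this is the one genuinely delicate inequality, and it is where $\eps\le 4$ is essential --- and track constants precisely enough to land on $q<1/3$ under $n>\tfrac{1024}{\eps^2}\ln\tfrac{4}{\delta}$ and on the constant $24$ in the error term. Everything downstream (the accuracy union bound, the crude bounds on $q(1-q)$ and $1/(1-2q)$, and the split into the two regimes via the floor $\tfrac1n\ln 20d$) is routine given the machinery already developed for $\cP_\flip$.
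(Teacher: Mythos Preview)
Your proposal is correct and follows essentially the same route as the paper: compute the local privacy parameter of $\cR_{d,q}$ (the paper packages this as Claim~\ref{clm:local-privacy}, giving $q=1/(e^{\eps_L/2}+1)$, which is exactly your $\eps_0=2\ln\tfrac{1-q}{q}$), apply the explicit FMT bound (the paper isolates the inversion as Corollary~\ref{coro:amplification}, setting $\eps_L=\ln(\eps^2 n/256\ln(4/\delta))$), take $q=\max(q_{\mathrm{priv}},\tfrac1n\ln 20d)$, and then plug into Corollary~\ref{coro:acc-max} with the crude bounds $q(1-q)<q$ and $1/(1-2q)<3$. The constants you flag as bookkeeping line up: $n>\tfrac{1024}{\eps^2}\ln\tfrac4\delta$ gives $e^{\eps_L}>4$ hence $q_{\mathrm{priv}}<1/3$, $n>6\ln 20d$ gives the floor $<1/6$, and $2\cdot\sqrt{16}\cdot 3=24$.
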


We first restate the amplification lemma from \cite{FMT20} using our notation and variant of the model.
\begin{lem}
\label{lem:amplification}
Fix any $\delta \in (0,1)$, $n\in \N$, and $\eps_L \leq \ln ( n / 16 \ln (2/\delta))$. If $\cR:\cX\to\cY$ is $\eps_L$-differentially private then $(\cS\circ\cR^n)$ is $(\eps_S,\delta)$-differentially private, where $$\eps_S = 8 \cdot \frac{e^{\eps_L} -1}{e^{\eps_L} +1} \cdot \left( \sqrt{\frac{e^{\eps_L} \ln (4/\delta) }{n}} + \frac{e^{\eps_L}}{n} \right).$$
\end{lem}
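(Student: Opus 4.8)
Lemma~\ref{lem:amplification} is a restatement of the amplification-by-shuffling theorem of Feldman, McMillan, and Talwar~\cite{FMT20}, specialized to a single-message local randomizer and with one elementary inequality applied at the end to reach the displayed form of $\eps_S$. The plan is: (i) instantiate their theorem with the randomizer $\cR:\cX\to\cY$ and its privacy parameter $\eps_L$ playing the role of their $\eps_0$; (ii) check that the hypothesis is verbatim theirs --- pure $\eps_L$-differential privacy of the per-user map, the precondition $\eps_L \le \ln\!\big(n/16\ln(2/\delta)\big)$, and the fact that their shuffle of $n$ one-message outputs under the ``differ in one user's row'' neighboring relation is exactly our $\cS\circ\cR^n$; (iii) read off their conclusion, which bounds the privacy loss of $\cS\circ\cR^n$ by
$$
\ln\!\left(1 + \frac{e^{\eps_L}-1}{e^{\eps_L}+1}\left(\frac{8\sqrt{e^{\eps_L}\ln(4/\delta)}}{\sqrt{n}} + \frac{8e^{\eps_L}}{n}\right)\right);
$$
and (iv) set $E := \frac{e^{\eps_L}-1}{e^{\eps_L}+1}\big(\tfrac{8\sqrt{e^{\eps_L}\ln(4/\delta)}}{\sqrt n} + \tfrac{8e^{\eps_L}}{n}\big)$, use $\ln(1+E)\le E$ to deduce $(E,\delta)$-differential privacy, and observe that factoring out the common $8$ gives $E = 8\cdot\frac{e^{\eps_L}-1}{e^{\eps_L}+1}\big(\sqrt{e^{\eps_L}\ln(4/\delta)/n} + e^{\eps_L}/n\big) = \eps_S$, so in fact equality holds and the lemma follows.

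For completeness I would also recall the high-level structure of the underlying argument in~\cite{FMT20}, in case a self-contained proof is wanted. The first step reduces to the worst-case $\eps_L$-locally-private randomizer on a single bit: on input $b$ it independently outputs a uniformly random ``clone'' bit (ignoring $b$) with probability $\tfrac{2}{e^{\eps_L}+1}$ and outputs $b$ otherwise; a general $\eps_L$-locally-private $\cR$ can be coupled to this one so that the shuffled view is only a post-processing. One then conditions on the number $C$ of users who emit a clone, where $C\sim\Bin\!\big(n,\tfrac{2}{e^{\eps_L}+1}\big)$. This is exactly where the precondition enters: $\eps_L \le \ln(n/16\ln(2/\delta))$ forces $\ex{}{C}\ge n/e^{\eps_L}\ge 16\ln(2/\delta)$, so a Chernoff bound puts $C$ above half its mean except with probability $\le\delta/2$; and on that event the shuffled multiset is a post-processing of a sensitivity-one count perturbed by $\Bin(C,1/2)$ noise, which the standard binomial-mechanism analysis shows is $(\eps',\delta/2)$-differentially private with $\eps'$ matching the claimed bound. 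A union bound over the two failure modes yields $(\eps_S,\delta)$-differential privacy.

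The main obstacle here is not mathematical --- there is no genuinely new argument --- but one of faithful transcription: one must confirm that the constants ($8$ in the numerators, $16$ in the precondition, and the $4/\delta$ inside the logarithm) and the form of the precondition on $\eps_L$ agree with the version of~\cite{FMT20} being cited, and that the relaxation $\ln(1+E)\le E$ is applied in the weakening direction so that the stated guarantee remains valid. Once those checks are in place, no further work is required, and Theorem~\ref{thm:amplification-variant} can then be derived from this lemma by choosing $q$ (hence the local privacy level $\eps_L$ of $\cR_\flip$ with $k=0$) so that $\eps_S \le \eps$.
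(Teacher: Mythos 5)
Your proposal is correct and takes the same route as the paper: Lemma~\ref{lem:amplification} is not proved in the paper but is presented as a restatement of the amplification theorem of Feldman, McMillan, and Talwar~\cite{FMT20}, and your explicit step of applying $\ln(1+z)\le z$ to pass from their $\eps \le \ln(1+E)$ form to $\eps_S = E$ is precisely the (unstated) bridge, used in the valid weakening direction. One small imprecision in your for-completeness sketch, which does not affect the validity of your main argument: in~\cite{FMT20} the reduction is not to one-bit randomized response played by all users but to a direct ``hiding'' decomposition of the arbitrary $\eps_L$-DP randomizer, where each non-differing user emits a clone of the distinguished user's output with probability $e^{-\eps_L}$ (not $\tfrac{2}{e^{\eps_L}+1}$) via the three-way split $\tfrac{1}{2}e^{-\eps_L}\cR(x^0)+\tfrac{1}{2}e^{-\eps_L}\cR(x^1)+(1-e^{-\eps_L})\cL_i$, and the binomial clone count is then what drives the hockey-stick-divergence bound.
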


A corollary of this lemma is that when the target privacy parameter $\eps_S$ is sufficiently small, there is always some choice of privacy parameter $\eps_L$ for $\cR$ and some threshold for $n$ above which the shuffle protocol is $(\eps_S,\delta)$-shuffle private. More precisely,
\begin{coro}
\label{coro:amplification}
Fix any $\eps_S \leq 4$ and $\delta \in (0,1)$. If $n > \tfrac{256}{\eps^2_S} \ln(4/\delta)$ and $\cR:\cX\to\cY$ is $\eps_L$-differentially private for $\eps_L \leq \ln(\eps^2_S n / 256\ln(4/\delta))$, then $(\cS\circ\cR^n)$ is $(\eps_S,\delta)$-differentially private.
\end{coro}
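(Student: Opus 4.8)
The plan is to apply Lemma \ref{lem:amplification} essentially as a black box and then check that the privacy parameter it outputs is no larger than the target $\eps_S$. So the whole proof reduces to (i) verifying the hypothesis of the lemma holds, and (ii) a short chain of inequalities comparing the lemma's output to $\eps_S$.

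First I would verify the hypothesis $\eps_L \leq \ln(n / 16\ln(2/\delta))$. This follows from the assumed bound $\eps_L \leq \ln(\eps_S^2 n / 256\ln(4/\delta))$: since $\eps_S \leq 4$ we have $\eps_S^2/256 \leq 1/16$, and since $\delta < 1$ we have $\ln(4/\delta) \geq \ln(2/\delta) > 0$, so $\eps_S^2 n / (256\ln(4/\delta)) \leq n/(16\ln(2/\delta))$, and monotonicity of $\ln$ gives the claim. (The assumption $n > 256\ln(4/\delta)/\eps_S^2$ ensures the upper bound on $\eps_L$ is a positive number, so the statement is not vacuous.)

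Next I would invoke Lemma \ref{lem:amplification} to conclude that $(\cS\circ\cR^n)$ is $(\eps_S',\delta)$-differentially private with
$$
\eps_S' = 8\cdot\frac{e^{\eps_L}-1}{e^{\eps_L}+1}\left(\sqrt{\frac{e^{\eps_L}\ln(4/\delta)}{n}} + \frac{e^{\eps_L}}{n}\right),
$$
and then show $\eps_S' \leq \eps_S$. I would bound the first factor by $\frac{e^{\eps_L}-1}{e^{\eps_L}+1} < 1$ and substitute the worst case $e^{\eps_L} \leq \eps_S^2 n / (256\ln(4/\delta))$. For the square-root term this gives $\sqrt{e^{\eps_L}\ln(4/\delta)/n} \leq \sqrt{\eps_S^2/256} = \eps_S/16$. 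For the linear term, $e^{\eps_L}/n \leq \eps_S^2/(256\ln(4/\delta)) < \eps_S^2/256 \leq \eps_S/64$, where I used $\ln(4/\delta) > 1$ (from $\delta < 1$) and $\eps_S^2 \leq 4\eps_S$ (from $\eps_S \leq 4$). Combining, $\eps_S' < 8(\eps_S/16 + \eps_S/64) = \tfrac{5}{8}\eps_S \leq \eps_S$. Finally, since differential privacy only weakens as $\eps$ increases, $(\eps_S',\delta)$-differential privacy implies $(\eps_S,\delta)$-differential privacy, which completes the proof.

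There is no real obstacle here; the argument is a routine substitution plus monotonicity. The only thing requiring care is keeping the $\ln(4/\delta)$ versus $\ln(2/\delta)$ factors straight and using the $\eps_S \leq 4$ slack consistently, both when checking the hypothesis of Lemma \ref{lem:amplification} and when collapsing $\eps_S^2$ to $4\eps_S$ in the final estimate.
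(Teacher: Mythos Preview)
Your proposal is correct and follows essentially the same approach as the paper: apply Lemma~\ref{lem:amplification}, bound the factor $(e^{\eps_L}-1)/(e^{\eps_L}+1)$ by $1$, substitute $e^{\eps_L}\le \eps_S^2 n/(256\ln(4/\delta))$, and conclude $\eps_S'\le \eps_S$. Your version is in fact a bit more explicit than the paper's in verifying the hypothesis $\eps_L \le \ln(n/16\ln(2/\delta))$, and your final arithmetic yields the slightly sharper $\eps_S' < \tfrac{5}{8}\eps_S$, but otherwise the arguments are the same.
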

\begin{proof}
Because $\eps_S$ is sufficiently small, $\eps_L$ satisfies the condition under which Lemma \ref{lem:amplification} holds: $(\cS\circ\cR^n)$ is $(\eps,\delta)$-differentially private, where
\begin{align*}
\eps &= 8 \cdot \frac{e^{\eps_L} -1}{e^{\eps_L} +1} \cdot \left( \sqrt{\frac{e^{\eps_L} \ln (4/\delta) }{n}} + \frac{e^{\eps_L}}{n} \right) \\
    &\leq 8 \cdot \left( \sqrt{\frac{e^{\eps_L} \ln (4/\delta) }{n}} + \frac{e^{\eps_L}}{n} \right) \\
    &\leq 8 \cdot \left( \frac{\eps_S}{16} + \frac{\eps^2_S}{256\ln(4/\delta)} \right) \tag{Bound on $\eps_L$} \\
    &= \frac{\eps_S}{2} + \frac{\eps_S}{2} \cdot \frac{\eps_S}{16\ln(4/\delta)} \\
    &\leq \eps_S
\end{align*}
The final inequality follows from our bound on $\eps_S$.
\end{proof}

Now we find a value of $q$ to ensure $\cP_\flip$ satisfies $\eps_L$-local privacy:
\begin{clm}
\label{clm:local-privacy}
For any $\eps_L > 0$, if $k\gets 0$ and $q\gets 1/(e^{\eps_L/2}+1)$ then the randomizer $\cR_\flip$ is $\eps_L$-differentially private.
\end{clm}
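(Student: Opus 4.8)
The plan is to reduce to the two-bin case and then read the privacy loss directly off Table~\ref{tab:pmf}. When $k=0$, the randomizer $\cR_\flip$ on input $x\in\cX_d$ simply outputs a single message distributed as $\cR_{d,q}(x)$, so it suffices to show that $\cR_{d,q}$ is $\eps_L$-differentially private for inputs from $\cX_d$.

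First I would invoke the same decomposition used in Step Two of Section~\ref{sec:privacy}: for any $j<j'$, changing the input from $e_{j,d}$ to $e_{j',d}$ only alters coordinates $j$ and $j'$ of the one-hot string, and $\cR_{d,q}$ flips every coordinate independently. Hence its output can be produced by (i) flipping coordinates $j$ and $j'$ --- a step that depends on the data --- and (ii) flipping the remaining $d-2$ coordinates of $0^{d-2}$ --- a data-independent post-processing. By closure under post-processing (Fact~\ref{fact:post}), it is then enough to bound the privacy loss of the two-coordinate subroutine, i.e.\ to show $\cR_{2,q}$ is $\eps_L$-differentially private for inputs from $\cX_2=\{01,10\}$.

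For the two-bin case, Table~\ref{tab:pmf} lists the full PMFs of $\cR_{2,q}(01)$ and $\cR_{2,q}(10)$. The two distributions agree at the outcomes $00$ and $11$ (each receives mass $q(1-q)$), while at $01$ and $10$ the ratio of the two probabilities is $(1-q)^2/q^2$ or its reciprocal. Since $q=1/(e^{\eps_L/2}+1)<1/2$, the worst-case ratio over all four outcomes is $\bigl(\tfrac{1-q}{q}\bigr)^{2}$. Our choice of $q$ makes $\tfrac{1-q}{q}=e^{\eps_L/2}$ exactly, so this ratio equals $e^{\eps_L}$; because all four outcomes have strictly positive probability under both inputs, no additive $\delta$ term is needed, giving pure $\eps_L$-differential privacy.

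There is no substantive obstacle here: once the problem is reduced to $d=2$ the argument is a one-line ratio computation, and the constant in the definition of $q$ is chosen precisely so that the per-outcome bound is tight. The only care required is bookkeeping in the reduction, which is already carried out in Section~\ref{sec:privacy} and can be reused essentially verbatim.
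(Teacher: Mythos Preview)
Your proposal is correct. The paper itself does not supply a proof of Claim~\ref{clm:local-privacy}; it simply states the claim and immediately uses it, treating the choice $q=1/(e^{\eps_L/2}+1)$ as the standard parameterization for bit-wise randomized response. Your argument---reduce to $d=2$ via the coordinate-independence observation of Step Two and then read the worst-case likelihood ratio $((1-q)/q)^2=e^{\eps_L}$ off Table~\ref{tab:pmf}---is precisely the natural way to fill in this omitted verification, and all steps are sound.
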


Theorem \ref{thm:amplification-variant} follows from Corollary \ref{coro:amplification}, Corollary \ref{coro:acc-max}, and Claim \ref{clm:local-privacy}.
\begin{proof}[Proof of Theorem \ref{thm:amplification-variant}]
We choose $\eps_L \gets \ln(\eps^2 n / 256\ln(4/\delta))$ and $q\gets \max( 1/(e^{\eps_L/2}+1), \tfrac{1}{n} \ln 20d)$. By substitution, we have that the following holds with probability $9/10$:
\begin{align*}
\norm{\cP_\flip(\vec{x}) - \hist(\vec{x})}_\infty &< 2 \sqrt{\frac{1}{n} \cdot q \ln 20d } \cdot \left(\frac{1}{1-2q}\right) \tag{$k=0,q>0$}\\
    &= 2 \sqrt{\frac{1}{n}\cdot \max\left( \frac{1}{e^{\eps_L/2}+1} ,~ \frac{1}{n} \ln 20d \right) \ln 20d } \cdot \left(\frac{1}{1-2q}\right) \\
    &< 2 \sqrt{\frac{1}{n}\cdot \max\left( \frac{16\sqrt{\ln(4/\delta)}}{\eps \sqrt{n}} ,~ \frac{1}{n} \ln 20d \right) \ln 20d } \cdot \left(\frac{1}{1-2q}\right) \\
    &= \max\left( \frac{8}{n^{3/4}\sqrt{\eps}} \left(\ln\frac{4}{\delta}\right)^{1/4} \sqrt{ \ln 20d} ,~ \frac{2}{n} \ln 20d \right)\cdot \left(\frac{1}{1-2q}\right)
\end{align*}
Given that $n$ is sufficiently large, we conclude $q < 1/3$. The Theorem follows by substitution.
\end{proof}
\fi

\end{document}